\def\url@leostyle{%
  \@ifundefined{selectfont}{\def\UrlFont{\sf}}{\def\UrlFont{\small\ttfamily}}}
\newtheorem{lemma}{\sc Lemma}[section]
\newtheorem{theorem}[lemma]{\sc Theorem}
\newtheorem{definition}[lemma]{\sc Definition}
\newtheorem{proposition}[lemma]{\sc Proposition}
\newtheorem{corollary}[lemma]{\sc Corollary}
\newtheorem{remark}{\sc Remark}[section]
\newtheorem{assumption}{\sc Assumption}[section]
\renewcommand{\matrix}[2]{\left[\begin{array}{#1} #2 \end{array}\right] }
\DeclareMathOperator*{\argmin}{arg\,min}
\newcommand{\G}{\mathcal{G}}
\newcommand{\Y}{\mathcal{Y}}
\newcommand{\X}{\mathcal{X}}
\newcommand{\Z}{\mathcal{Z}}
\newcommand{\V}{\mathcal{V}}
\newcommand{\E}{\mathcal{E}}
\renewcommand{\d}{\mathrm{d}}
\renewcommand{\P}{\mathcal{P}}
\newcommand{\N}{\llbracket N\rrbracket}
\newcommand{\K}{\llbracket K\rrbracket}
\newenvironment{proof}{{\noindent \bf Proof:\ }}{ \hfill $\square$}
\begin{document}

\title{A Necessary and Sufficient Condition for the Existence of Potential Functions for Heterogeneous Routing Games\footnote{An early version of this article was presented at the Annual Allerton Conference on Communication, Control, and Computing~\cite{FarokhiAllerton13}.}}

\makeatletter
\let\@fnsymbol\@arabic
\makeatother

\author{Farhad~Farokhi\thanks{F.~Farokhi and K.~H.~Johansson are with ACCESS Linnaeus Center, School of Electrical Engineering, KTH Royal Institute of Technology, SE-100 44 Stockholm, Sweden. The work of F.~Farokhi and K.~H.~Johansson was supported by grants from the Swedish Research Council, the Knut and Alice Wallenberg Foundation, and the iQFleet project. E-mails: \{farokhi,kallej\}@ee.kth.se }\,, Walid~Krichene\thanks{W.~Krichene is with the department of Electrical Engineering and Computer Sciences, University of California at Berkeley, CA, USA. E-mail: walid@eecs.berkeley.edu }$^{\hspace{.4em},}$\thanks{The work of W.~Krichene and A.~M.~Bayen was supported by grants from the California Department of Transportation, Google, and Nokia. }\,\,, Alexandre~M.~Bayen$^{4,}$\thanks{A.~M.~Bayen is with the department of Electrical Engineering and Computer Sciences, and the department of Civil and Environmental Engineering, University of California at Berkeley, CA, USA. E-mail: bayen@berkeley.edu}\,\,, and~Karl~H.~Johansson$^2$
}

\date{}

\maketitle

\begin{abstract} We study a heterogeneous routing game in which vehicles might belong to more than one type. The type determines the cost of traveling along an edge as a function of the flow of various types of vehicles over that edge. We relax the assumptions needed for the existence of a Nash equilibrium in this heterogeneous routing game. We extend the available results to present necessary and sufficient conditions for the existence of a potential function. We characterize a set of tolls that guarantee the existence of a potential function when only two types of users are participating in the game. We present an upper bound for the price of anarchy (i.e., the worst-case ratio of the social cost calculated for a Nash equilibrium over the social cost for a socially optimal flow) for the case in which only two types of players are participating in a  game with affine edge cost functions. A heterogeneous routing game with vehicle platooning incentives is used as an example throughout the article to clarify the concepts and to validate the results.
\end{abstract}

\section{Introduction} 
\subsection{Motivation}
Routing games are of special interest in transportation networks~\cite{Fisk1984301,wardrop1952road,Smith1979295} and communication networks~\cite{4278417,Altman2006286,Czumaj04selfishrouting} because they allow us to study cases in which a desirable global behavior (e.g., achieving a socially optimal solution) can emerge from simple local strategies (e.g., imposing tolls on each road based on only local information). For the purpose of this article, routing games can be decomposed into two categories. In the first category, namely, homogeneous routing games, drivers or vehicles are of the same type and, therefore, experience the same cost when using an edge in the network. Such an assumption is primarily motivated by transportation networks for which the drivers only worry about the travel time (and indeed under the assumption that all the drivers are equally sensitive to latency through considering their average behavior) or packet routing in communication networks where all the packets that are using a particular link experience the same delay or reliability. In the second category, namely, heterogeneous routing games (a.k.a., multi-class routing games~\cite{netter1971equilibrium,dafermos1972traffic}), drivers or vehicles belong to more than one type due to the following reasons:
\begin{itemize}
\item[-] \textit{Fuel Consumption}: In a transportation network, if we include the fuel consumption of the vehicles in the cost functions, two vehicles (of different types) may experience different costs for using a road even if their travel times are equal. For instance, \cite{farokhiITSC2013} studied this phenomenon in atomic congestion games in which heavy-duty vehicles experience an increased efficiency when a higher number of heavy-duty vehicles are present on the same road, because of a higher possibility of platooning and, therefore, a higher fuel efficiency, while such an increased efficiency may not be true for cars. For an experimental study of improvements in the fuel efficiency caused by platooning in heavy-duty vehicles, see~\cite{alam2010experimental}.
\item[-] \textit{Sensitivity to Latency}: It is known that drivers on a road have different sensitivities to the latency under different circumstances as well as depending on their personality and background~\cite{Stern2005,Stern199975}. In addition, due to economic advantages, heavy-duty vehicles might be more sensitive to latency in comparison to cars (because they need to deliver their goods at specific times). 
\item[-] \textit{Sensitivity to Tolls}: Drivers generally react differently to road tolls, e.g., based on the reason of the trip or their socioeconomic background. For instance, in 2001, by the request of the Swedish Institute for Transport and Communications Analysis, the consulting firm Inregia compiled a survey to estimate the value of time for the drivers in Stockholm~\cite{Inregiareport}. According to that study, drivers valued their time as 0.98, 3.30, and 0.19 SEK/min for work and school commuting trips, business trips, and other trips, respectively.
\end{itemize}
These examples motivate our interest for studying heterogeneous routing games in which the drivers or the vehicles might belong to more than one type and their type determines the cost of traveling along an edge as a function of the flow of all types of vehicles over that edge.

\subsection{Related Work}  \vspace*{-.1in}
In the context of transportation networks, routing games were originally studied in~\cite{wardrop1952road}. This study also formulated the definition of Nash equilibrium in routing games\footnote{ Throughout this article, following the convention of~\cite{roughgarden2002bad,6426543}, we use the term Nash equilibrium to refer to this equilibrium. See Remark~\ref{remark:NashWardop} for more information regarding this matter. }. Later, \cite{beckmann1956studies} showed that under some mild conditions, the routing game admits a potential function and the minimizers of this potential function are Nash equilibria of the routing game. This result guarantees the existence of a Nash equilibrium for the routing game. The problem of bounding the inefficiency of Nash equilibria has been extensively studied; see~\cite{roughgarden2007routing,roughgarden2003price,roughgarden2002bad,
braess2005paradox,awerbuch2005price,6426526} for a survey of these results.

Heterogeneous routing games have been studied extensively over the past starting with pioneering works in~\cite{netter1971equilibrium,dafermos1972traffic}. In these studies, a routing game with multi-class users were introduced and the definition of equilibrium was given. Furthermore, in~\cite{dafermos1972traffic}, the author introduced a \textit{sufficient} condition for transforming the problem of finding an equilibrium to that of an optimization (i.e., equivalent to the existence of a potential function~\cite{monderer1996potential,Rosenthal1973}). The sufficient condition is that over each edge, the increased cost of a user of the first type due to addition of one more user of the second type is equal to the increased cost of a user of the second type due to addition of one more user of the first type, i.e., the users of the first and the second type influence each other equally~\cite{dafermos1972traffic}. This condition was considered later in~\cite{EngelsonLindbergAppliedOptimization2012} in which it was also noted that satisfaction of this symmetry condition may depend on the units (e.g., time or money) adopted for representing the cost functions for the case in which the users' types are determined by their value of time (i.e., a scalar factor that balances the relationship between the latency and the imposed tolls). This results is of special interest since the equilibrium does not change by using different units for the cost functions (if the latency only depends on the sum of the flows of various types over the edge, not the individual flows, and the value of time appears linearly in the cost functions)~\cite{yang2004multi}. Necessary and sufficient conditions for the existence of potential functions in games with finite number of players was recently investigated in~\cite{deb2009testable}; however, these results were not generalized to games with a continuum of players as in heterogeneous routing games. The authors in~\cite{braess1979existence} studied the existence of an equilibrium in heterogeneous routing games even if such a symmetry condition does not hold. In contrast to these articles that assumed a finite set of types to which the users may belong, a wealth of studies also considered the case in which the users may belong to a continuum of types~\cite{leurent1993cost,Marcotte97equilibriawith}. The problem of finding tolls for general heterogeneous routing games as well as the case in which types of the users is determined by their value of time have been considered extensively~\cite{dafermos1973toll, cole2003pricing,karakostas2004edge,fleischer2004tolls,Kontogiannis2010,Marcotte2009211}. For instance, in~\cite{dafermos1973toll}, the problem of determining tolls on each edge or path for heterogeneous routing games was studied. Guarantees were provided for that the socially optimal solution (also referred to as system-optimizing flow~\cite{dafermos1972traffic}) is indeed an equilibrium of the game. However, in that article, the users were assumed to be equally sensitive to the imposed tolls. The problem of finding optimal tolls for routing game in which the users' value of time belongs to a continuum was studied in~\cite{cole2003pricing}.

\subsection{Contributions of the Article}
In this article, we formulate a general heterogeneous routing game in which the vehicles\footnote{We use the terms players, drivers, users, and vehicles interchangeably to denote an infinitesimal part of the flow that strategically tries to minimize its own cost for using the road.} might belong to more than one type. The vehicle's type determines the mapping that specifies the cost for using an edge based on the flow of all types of vehicles over that edge. 

We prove that the problem of characterizing the set of Nash equilibria for a heterogeneous routing game is equivalent to the problem of determining the set of pure strategy Nash equilibria in a game with finitely many players (in which each player represents one of the types in the original heterogeneous routing game). Doing so, we can employ classic results in game theory and economics literature, specially regarding the existence of an equilibrium in games and abstract economies~\cite{arrow1954existence,Debreu1952} (which is an extension of games to a situation where the actions of other players can modify the set of feasible actions for a player), to show that under mild conditions, a heterogeneous routing game admits at least one Nash equilibrium. 

Then, we present a necessary condition for the existence of a potential function for the heterogeneous routing game. We show that this condition is also sufficient for the case in which only two types of players are participating in the routing game. In this case, we show, following the potential game literature~\cite{monderer1996potential}, that the problem of finding a Nash equilibrium in the heterogeneous routing game can be posed as an optimization problem (which is numerically tractable if the potential function is convex). Motivated by the sufficient condition, in the rest of this article, we focus on heterogeneous routing games in which only two types of users are participating. Note that in contrast to the results of~\cite{dafermos1972traffic,EngelsonLindbergAppliedOptimization2012}, here, we present a \textit{necessary and sufficient} condition for the existence of potential function through which minimization we can recover an equilibrium. However, the price of providing this tighter condition is that we can only treat routing games with two distinct types contrary to the sufficient condition in~\cite{dafermos1972traffic,EngelsonLindbergAppliedOptimization2012}. 

If the problem of finding a Nash equilibrium in the heterogeneous routing game is numerically intractable\footnote{In general, the problem of finding a pure strategy Nash equilibrium is not numerically tractable; e.g.,~\cite{Fabrikant:2004:CPN:1007352.1007445,Papadimitriou2007agt,roughgarden2010computing}.}, it might be unlikely for the drivers to figure out a Nash equilibrium in finite time (let alone an efficient one) and utilize it. This might result in inefficient utilization of the transportation network resources. Therefore, we  present a set of tolling policies for distinguishable types (i.e., a routing game in which we may impose different tolls for different user types) and indistinguishable types (i.e., when we cannot impose type-dependent tolls) to guarantee the existence of a potential function for heterogeneous routing games. The idea of proposing tolls for indistinguishable types has been previously studied in~\cite{guo2009user}. However, in that study, the tolls were introduced to minimize the total travel time and the total travel cost (as a bi-objective optimization problem). In addition, in~\cite{guo2009user}, the users' types corresponded to socio-economic characteristics and, therefore, the cost functions of various types of users were the latency (which the function of the total flow and not individual flows of each type) plus the  tolls multiplied by the value of time.

Finally, because a Nash equilibrium is typically inefficient (i.e., it does not minimize the social cost function\footnote{We use a utilitarian social cost function (i.e., summation of the individual cost functions of all the players) as opposed to a Rawlsian social cost function (i.e., the worst-case cost function of the players); see~\cite[p.\,413]{2006general} for more information regarding the difference between these two categories of social cost functions. We present the definition of the social cost function in Section~\ref{sec:inefficiency}.}), we study the price of anarchy\footnote{The notion of price of anarchy was first introduced in~\cite{koutsoupias1999worst,papadimitriou2001algorithms}. Later, it was utilized in various games including routing games~\cite{Basar2011,johari2004efficiency,andelman2009strong,roughgarden2007routing,roughgarden2002bad}.} (a measure of the inefficiency of a Nash equilibrium which can be defined as the worst-case ratio of the social cost at a Nash equilibrium over the social cost at a socially optimal flow). We prove that for the case in which a convex potential function exists, the price of anarchy is bounded from above by two for affine edge cost functions, that is, the social cost of a Nash equilibrium can be at most twice as much as the cost of a socially optimal solution.

\subsection{Article Organization}
The rest of the article is organized as follows. We formulate the heterogeneous routing game in Section~\ref{sec:problemformulation}. In Section~\ref{sec:existence}, we prove that a Nash equilibrium may indeed exist in this routing game. We present a set of necessary and sufficient conditions to guarantee the existence of a potential function in Section~\ref{sec:findingaNashEq}. In Section~\ref{sec:tolls}, a set of tolling policies is presented to satisfy the aforementioned conditions. We bound the price of anarchy for affine cost functions in Section~\ref{sec:inefficiency}. A numerical example motivated by a heterogeneous routing game with platooning incentives is studied in Section~\ref{sec:NumericalExample}. Finally, we conclude the article and present directions for future research in Section~\ref{sec:conclusions}.

\section{A Heterogeneous Routing Game} \label{sec:problemformulation}
\subsection{Notation}
Let $\mathbb{R}$ and $\mathbb{Z}$ denote the sets of real and integer numbers, respectively. Furthermore, define $\mathbb{Z}_{\geq(\leq) a}=\{n\in\mathbb{Z}\,|\,n\geq(\leq) a\}$ and $\mathbb{R}_{\geq(\leq) a}=\{x\in\mathbb{R}\,|\,x\geq(\leq) a\}$. For simplicity of presentation, let $\mathbb{N}=\mathbb{Z}_{\geq 1}$. We use the notation $\N$ to denote $\{1,\dots,N\}$. All other sets are denoted by calligraphic letters. Specifically, $\mathcal{C}^k$ consists of all $k$-times continuously differentiable functions. 

Let $\mathcal{X}\subseteq\mathbb{R}^n$ be a set such that $0\in\mathcal{X}$. A mapping $f:\mathcal{X}\rightarrow \mathbb{R}$ is called positive definite if $f(x)\geq 0$ for all $x\in\mathcal{X}$. 

A set-valued mapping $f:\X\rightrightarrows\Y$ is said to be continuous at $x^0\in\X$ if for every  $y^0\in f(x^0)$ and every sequence $\{x^k\}_{k\in\mathbb{N}}$ such that $\lim_{k\rightarrow\infty} x^k=x^0$, there exists a sequence $\{y^k\}_{k\in\mathbb{N}}$ such that $y^k\in f(x^k)$ for all $k\in\mathbb{N}$ and $\lim_{k\rightarrow\infty} y^k=y^0$. 

We use the notation $\G=(\V,\E)$ to denote a directed graph with vertex set $\V$ and edge set $\E\subseteq\V\times\V$. Each entry $(i,j)\in\E$ denotes an edge from vertex $i\in\V$ to vertex~$j\in\V$. A directed path of length $z$ from vertex $i$ to vertex $j$ is a set of edges $\{(i_0,i_1),(i_1,i_2),\dots,(i_{z-1},i_z)\}\subseteq\E$ such that $i_0=i$ and $i_z=j$.

\subsection{Problem Formulation}
We propose an extension of the routing game introduced in~\cite{wardrop1952road} to admit more than one type of players. To be specific, we assume that the type of a player $\theta$ belongs to a finite set~$\Theta$.

Let us assume that a directed graph $\G=(\V,\E)$ models the transportation network and that a set of source--destination pairs $\{(s_k,t_k)\}_{k=0}^K$ for some constant $K\in\mathbb{N}$ are given. Each pair $(s_k,t_k)$ is called a commodity. We use the notation $\P_k$ to denote the set of all admissible paths over the graph~$\G$ that connect vertex $s_k\in\V$ (i.e., the source of this commodity) to vertex $t_k\in\V$ (i.e., the destination of this commodity). Let $\P=\cup_{k=1}^K \P_k$. We assume that each commodity $k\in\K$ needs to transfer a flow equal to $(F_k^\theta)_{\theta\in\Theta}\in\mathbb{R}_{\geq 0}^{|\Theta|}$.

We use the notation $f_p^\theta\in\mathbb{R}_{\geq 0}$ to denote the flow of players of type $\theta\in\Theta$ that use a given path $p\in\P$. We use the notation $f=(f_p^\theta)_{p\in\P,\theta\in\Theta}\in\mathbb{R}^{|\P|\cdot|\Theta|}$ to denote the aggregate vector of flows\footnote{Note that there is a one-to-one correspondence between the elements of $\P\times \Theta$ and the set of integers $\{1,\dots,|\P|\cdot|\Theta|\}$.}. A flow vector $f\in\mathbb{R}^{|\P|\cdot|\Theta|}$ is feasible if $\sum_{p\in\P_k}f_p^\theta=F_k^\theta$ for all $k\in\K$ and $\theta\in\Theta$. We use the notation $\mathcal{F}$ to denote the set of all feasible flows. To ensure that the set of feasible flows is not an empty set, we assume that $\P_k\neq \emptyset$ if $F_k^\theta\neq 0$ for any $\theta\in\Theta$. Notice that the constraints associated with each type are independent of the rest. Therefore, the flows of a specific type can be changed without breaking the feasibility of the flows associated with the rest of the types.

A vehicle of type $\theta\in\Theta$ that travels along an edge $e\in\E$ experiences a cost equal to $\tilde{\ell}_e^\theta((\phi_e^{\theta'} )_{\theta'\in\Theta})$, where for any $\theta\in\Theta$, $\phi_e^\theta$ denotes the total flow of drivers of type $\theta$ that are using this specific edge, i.e., $\phi_e^\theta=\sum_{p\in\P: e\in p}f_p^\theta$. This cost can encompass aggregates of the latency, fuel consumption, etc. For notational convenience, we assume that we can change the order with which the edge flows $\phi_e^{\theta'}$ appear as arguments of the cost function $\tilde{\ell}_e^\theta((\phi_e^{\theta'} )_{\theta'\in\Theta})$. A driver of type $\theta\in\Theta$ from commodity $k\in\K$ that uses path $p\in\P_k$ (connecting $s_k$ to $t_k$) experiences a total cost of $\ell_p^\theta(f)=\sum_{e\in p} \tilde{\ell}_e^\theta((\phi_e^{\theta'})_{\theta'\in\Theta})$. 

Each player is an infinitesimal part of the flow that tries to minimize its own cost (i.e., each player is inclined to choose the path that has the least cost). Now, based on this model, we can define the Nash equilibrium.

\begin{definition}\textsc{(Nash Equilibrium in Heterogeneous Routing Games)} \label{def:Nashrouting} A flow vector $f=(f_{p'}^{\theta'})_{p'\in\P,\theta'\in\Theta}$ is a Nash equilibrium if for all $k\in\K$ and $\theta\in\Theta$, $f_p^\theta>0$ for a path $p\in\P_k$ implies that $\ell_p^\theta(f)\leq \ell_{p'}^\theta(f)$ for all $p'\in\P_k$.
\end{definition}

This definition implies that for a commodity $k\in\K$ and type $\theta\in\Theta$, all paths with a nonzero flow for vehicles of type~$\theta$ have equal costs and the rest (i.e., paths with a zero flow for vehicles of type $\theta$) have larger or equal costs.

\begin{remark} \label{remark:NashWardop} Note that various articles use different names for the equilibrium such as, user-optim-izing flow~\cite{dafermos1972traffic,braess1979existence}, Wardrop equilibrium~\cite{Smith1979295,haurie1985relationship,braess1979existence}, Wardrop first principle~\cite{Smith1979295}, and Nash equilibrium~\cite{roughgarden2002bad,6426543}. The term Wardrop equilibrium is common, specially in transportation literature, due to the pioneering work of~\cite{wardrop1952road} as well as the fact that the term pure strategy Nash equilibrium is primarily utilized in the context of games with finitely many players~\cite{haurie1985relationship}. It is vital to note that the definition of Nash equilibrium in this paper is indeed different from that of~\cite{haurie1985relationship}, which shows that by increasing the number of users (in a game with finitely many players), the Nash equilibrium converges to the Wardrop equilibrium under appropriate assumptions. Throughout this article, following the convention of~\cite{roughgarden2002bad,6426543}, we use the term Nash equilibrium to refer to this equilibrium.  
\end{remark}

We make the following standing assumption regarding the edge latency functions for all the types.

\begin{assumption} \label{assum:1} For all $\theta\in\Theta$ and $e\in\E$, the edge cost function $\tilde{\ell}_e^\theta$ satisfies the following properties:
\begin{itemize}
\item[(\textit{i})\;\;] $\tilde{\ell}_e^{\theta}\in\mathcal{C}^1$;
\item[(\textit{ii})\;] $\tilde{\ell}_e^\theta$ is positive definite;
\item[(\textit{iii})] $\int_{0}^{\phi_e^{\theta}} \tilde{\ell}_e^{\theta}(u,(\phi_e^{\theta'})_{\theta'\in\Theta\setminus\{\theta\}}) \d u$ is a convex function in $\phi_e^{\theta}$ for any given $(\phi_e^{\theta'})_{\theta'\in\Theta\setminus\{\theta\}}$.
\end{itemize}
\end{assumption}

Assumption~\ref{assum:1}~(\textit{iii}) is equivalent to\footnote{Consult~\cite{roughgarden2007routing} for the proof of the equivalence when $|\Theta|=1$. The proof in the heterogeneous case follows the same line of reasoning.}:
\begin{itemize}
\it 
\item[(\textit{iii})'] $\tilde{\ell}_e^{\theta}(\phi_e^{\theta},(\phi_e^{\theta'})_{\theta'\in\Theta\setminus\{\theta\}}) $ is an increasing function of $\phi_e^{\theta}$ for any given $(\phi_e^{\theta'})_{\theta'\in\Theta\setminus\{\theta\}}$.
\end{itemize}
We start by proving the existence of a Nash equilibrium and, then, study the computational complexity of finding such an equilibrium. However, before that, we present an example of a heterogeneous routing game in the next subsection.

\subsection{Example: Routing Game with Platooning Incentives}  \label{subsec:Platoon}
Let $\Theta=\{\mathrm{c},\mathrm{t}\}$, where $\mathrm{t}$ denotes trucks (or, equivalently, heavy-duty vehicles) and $\mathrm{c}$ denotes cars (or, equivalently, light vehicles). Let the edge cost functions be characterized as
\begin{equation*}
\begin{split}
\tilde{\ell}_e^{\mathrm{c}}(\phi_e^\mathrm{c},\phi_e^\mathrm{t})&=\xi_e(\phi_e^\mathrm{c}+\phi_e^\mathrm{t}),\\ 
\tilde{\ell}_e^{\mathrm{t}}(\phi_e^\mathrm{c},\phi_e^\mathrm{t})&=\xi_e(\phi_e^\mathrm{c}+\phi_e^\mathrm{t})+ \zeta_e( \phi_e^\mathrm{c},\phi_e^\mathrm{t}),
\end{split}
\end{equation*}
where mappings $\xi_e:\mathbb{R}_{\geq 0}\rightarrow \mathbb{R}_{\geq 0 }$ and $\zeta_e:\mathbb{R}_{\geq 0}\times \mathbb{R}_{\geq 0}\rightarrow \mathbb{R}_{\geq 0 }$ denote respectively the latency for using edge $e$ as a function of the total flow of vehicles over that edge and the fuel consumption of trucks as a function of the flow of each type. These costs imply that cars only observe the latency $\xi_e(\phi_e^\mathrm{c}+\phi_e^\mathrm{t})$ when using the roads (which is only a function of the total flow over that edge and not the individual flows of each type). However, the cost associated with trucks encompasses an additional term which models their fuel consumption. Following this interpretation, $\zeta_e( \phi_e^\mathrm{c},\phi_e^\mathrm{t})$ is a decreasing function in $\phi_e^\mathrm{t}$ since by having a higher flow of trucks over a given road (i.e., larger $\phi_e^\mathrm{t}$) each truck gets a higher probability for collaboration such as platooning (and as a result, a higher chance of decreasing its fuel consumption). 

Let us give examples of these functions. Based on the traffic data measurements available from~\cite[p.\,366]{wardrop1952road} (see~\cite{farokhiITSC2013} for a case study on the relationship between the average velocity and the number of the vehicles on the road in Stockholm), we know that whenever the traffic on a road is in free-flow mode, we can model the average velocity of traveling along that road as an affine function of the flow of vehicles over that edge according to
$$
\bar{v}_e(\phi_e^\mathrm{c},\phi_e^\mathrm{t})=a_e(\phi_e^\mathrm{c}+\phi_e^\mathrm{t})+b_e.
$$
In this model, $b_e\in\mathbb{R}_{\geq 0}$ and $a_e\in\mathbb{R}_{\leq 0}$ for $e\in\E$. Therefore, if the length of edge $e\in\E$ is equal to $L_e\in\mathbb{R}_{\geq 0}$, we can calculate the latency of using that edge as
\begin{align*}
\xi_e(\phi_e^\mathrm{c}+\phi_e^\mathrm{t})&=\frac{L_e}{\bar{v}_e(\phi_e^\mathrm{c},\phi_e^\mathrm{t})}=\frac{L_e}{a_e(\phi_e^\mathrm{c}+\phi_e^\mathrm{t})+b_e}.
\end{align*}
Now, in cases where $a_e(\phi_e^\mathrm{c}+\phi_e^\mathrm{t})\ll b_e$, we can use a linearized\footnote{Notice that such a linearization is certainly not valid for a wide range of traffic flows, however, it models the latency functions well-enough for small flows. The authors in \cite{irwin1961capacity,irwin1962capacity} proposed a piecewise linear mapping (based on numerical data from the Toronto metropolitan area) for modeling the latency as a function of the flow of vehicles. This model justifies using a linear model for small flows (i.e., at the beginning what they call the feasible region), however, it also points out that a linear approximation is not valid for large flows. For a comprehensive comparison of different latency mappings (linear as well as nonlinear ones), see~\cite{Branston1976223}.  } model for the latency
\begin{align*}
\xi_e(\phi_e^\mathrm{c}+\phi_e^\mathrm{t})&=\frac{L_e}{b_e}-\frac{L_e a_e}{b_e^2}(\phi_e^\mathrm{c}+\phi_e^\mathrm{t}).
\end{align*}
In addition, using~\cite{Kuo-Yun2013}, we know that the total fuel consumption of a truck which is traveling with  velocity $\bar{v}_e$ for distance $L_e$ over a flat road can be modeled by
\begin{align} \label{eqn:def:zeta:nonlinear}
\zeta_e( \phi_e^\mathrm{c},\phi_e^\mathrm{t})=\frac{c_0L_e}{\bar{\eta}_{\mathrm{eng}}\rho_d} \left(\frac{1}{2}\rho_aA_ac_D\bar{v}_e^2(\phi_e^\mathrm{c},\phi_e^\mathrm{t})+mgc_r\right),
\end{align}
where $\bar{\eta}_{\mathrm{eng}}$ is the engine efficiency, $\rho_d$ is the energy density of diesel fuel, $c_D$ is the air drag coefficient, $A_a$ is the frontal area of the truck, $\rho_a$ is the air density, $m$ is the mass of the truck, $g$ is the gravitational acceleration, and $c_r$ is the the roll resistance coefficient. In addition, we have multiplied the fuel consumption by $c_0$ to balance the trade-off between the latency and fuel consumption in the aggregate cost function of the trucks. Following\cite{alam2010experimental}, we know that the air drag coefficient $c_D$ decreases if the trucks are platooning (e.g., two identical trucks can achieve 4.7\%--7.7\% reduction in the fuel consumption caused by the air drag reduction when platooning at 70\,km/h depending on the distance between them). Let us model these changes as $c_D=c_D'\gamma(\phi_e^\mathrm{t})$ where $\gamma:\mathbb{R}_{\geq 0}\rightarrow [0,1]$ is the probability of forming platoons (which is a function of the flow of trucks $\phi_e^\mathrm{t}$) multiplied by the improvements in the air drag coefficient upon platooning. Let us define parameters
$$
\alpha=\frac{L_e\rho_aA_ac'_D}{2\bar{\eta}_{\mathrm{eng}}\rho_d},\;\;\;\; \beta=\frac{L_e mgc_r}{\bar{\eta}_{\mathrm{eng}}\rho_d}.
$$
Now, again if we linearize~\eqref{eqn:def:zeta:nonlinear} around $\phi_e^{\mathrm{t}}=0$, we get
\begin{align*}
\zeta_e( \phi_e^\mathrm{c},\phi_e^\mathrm{t})=&\left( c_0\alpha\frac{\d}{\d u}\gamma(u)|_{u=0}b_e^2+2c_0\alpha\gamma(0)b_ea_e\right)\phi_e^\mathrm{t}+\left(2c_0\alpha\gamma(0)b_ea_e\right)\phi_e^\mathrm{c}+\left(c_0\beta+c_0\alpha\gamma(0)b_e^2\right).
\end{align*}
Combing all these terms results in
\begin{equation*}
\begin{split}
\tilde{\ell}_e^{\mathrm{c}}(\phi_e^\mathrm{c},\phi_e^\mathrm{t})&=\frac{L_e}{b_e}+\left(-\frac{L_e a_e}{b_e^2}\right)\phi_e^\mathrm{c}+\left(-\frac{L_e a_e}{b_e^2}\right)\phi_e^\mathrm{t},\\
\tilde{\ell}_e^{\mathrm{t}}(\phi_e^\mathrm{c},\phi_e^\mathrm{t})&=\frac{L_e}{b_e}+c_0\beta+c_0\alpha\gamma(0)b_e^2+\hspace{-.05in}\left(-\frac{L_e a_e}{b_e^2}+2c_0\alpha\gamma(0)b_ea_e\right)\phi_e^\mathrm{c}\\[-0.5em]&\;\;\;\;+\hspace{-.05in}\left(-\frac{L_e a_e}{b_e^2}+c_0\alpha\frac{\d \gamma(0)}{\d u}b_e^2\hspace{-.02in}+\hspace{-.02in}2c_0\alpha\gamma(0)b_ea_e\hspace{-.05in}\right)\hspace{-.05in}\phi_e^\mathrm{t}.
\end{split}
\end{equation*}
Notice that Assumption~\ref{assum:1}~(\textit{i}) and~(\textit{ii}) are easily satisfied. However, Assumption~\ref{assum:1}~(\textit{iii}) is only satisfied if 
$$
-\frac{L_e a_e}{b_e^2}+c_0\alpha\frac{\d \gamma(0)}{\d u}b_e^2+2c_0\alpha\gamma(0)b_ea_e\geq 0.
$$
This is indeed true because of the observation that Assumption~\ref{assum:1}~(\textit{iii}) and~(\textit{iii})' are equivalent. 

\section{Existence of Nash Equilibrium} \label{sec:existence} 
In this section, we show that the heterogeneous routing game admits a Nash equilibrium. Before stating the result, we need to introduce some concepts from~\cite{arrow1954existence} which uses results of~\cite{Debreu1952} to prove that an abstract economy (an extension of a game) admits an equilibrium under appropriate conditions\footnote{Note that we could alternatively follow the definition and results of~\cite{Debreu1952} in a direct manner, however, in that case, we need more background material presented which might be distracting to the audience.}.

\subsection{Existence of Nash Equilibrium in Games}
Let us define an abstract economy\footnote{An abstract economy was originally defined in~\cite{arrow1954existence}. It is an extension of a game. } as follows. Let $\X_i\subseteq\mathbb{R}^n$ (for some $n\in\mathbb{N}$) denote the action set of player $i\in\N$ in an abstract economy with $N$ players. We use the notation $x_i\in\X_i$ to denote the action of player~$i$. In contrast to a game, the feasible set of actions that player~$i$ can choose from is a function of actions of other players $x_{-i}=(x_j)_{j\neq i}$. Let $\Z_i:\times_{j\neq i}\X_j\rightrightarrows \X_i$ be a set-valued mapping that determines the set of feasible actions for player $i$. The utility of player $i$ is governed by a real-valued function $U_i:\times_{j=1}^N \X_j\rightarrow \mathbb{R}$. In this setup (opposed to the one presented in~\cite{arrow1954existence}), we assume the players are seeking to minimize their utility. 

\begin{definition}\textsc{(Equilibrium of an Abstract Economy~\cite{arrow1954existence})} $x^*$ is an equilibrium point of an abstract economy if, for all $i\in\N$, $x^*_i\in\argmin_{x_i\in\Z_i(x_{-i}^*)} U_i(x_i,x_{-i}^*)$. 
\end{definition}

For any $i\in\N$, we say that $\Z_i$ has a closed graph at $x_{-i}\in\times_{j\neq i}\X_j$ if the set $\{(x_j)_{j\in\N}|x_i\in\Z_i(x_{-i})\}$ is a closed set. Now, we can state the result of~\cite{arrow1954existence} regarding the existence of such an equilibrium.

\begin{theorem}[\hspace{-.1mm}\cite{arrow1954existence}] \label{tho:abstraceconomy} If, for each $i\in\N$, $\X_i$ is a compact convex set, $U_i(x_i,x_{-i})$ is continuous on $\times_{j=1}^N\X_j$ and quasi-convex in $x_i$ for each $x_{-i}\in\times_{j\neq i}\X_j$, $\Z_i$ is a continuous set-valued mapping that has a closed graph, and $\Z_i(x_{-i})$ is a nonempty convex set for each $x_{-i}\in\times_{j\neq i}\X_j$, then the abstract economy admits an equilibrium.
\end{theorem}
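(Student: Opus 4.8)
The plan is to recast an equilibrium of the abstract economy as a fixed point of the joint best-response correspondence and then invoke Kakutani's fixed-point theorem. For each $i\in\N$, introduce on $\Omega:=\times_{j=1}^N\X_j$ the correspondence
$\mu_i(x)=\argmin_{y_i\in\Z_i(x_{-i})}U_i(y_i,x_{-i})$,
and set $\mu(x)=\times_{i=1}^N\mu_i(x)$, which maps $\Omega$ into itself since $\mu_i(x)\subseteq\Z_i(x_{-i})\subseteq\X_i$. By the definition of an equilibrium point, $x^*$ is an equilibrium if and only if $x^*\in\mu(x^*)$, so it suffices to produce a fixed point of $\mu$. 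One should record at the outset the implicit assumption that each $\X_i$ is nonempty, without which the statement is vacuous.

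Next I would verify the hypotheses of Kakutani's theorem one by one. The domain $\Omega$ is a nonempty compact convex subset of a Euclidean space, being a finite product of the nonempty compact convex sets $\X_i$. For nonemptiness of the values: fixing $x$, the set $\Z_i(x_{-i})$ is nonempty by hypothesis, and it is compact because it is a section of the closed graph of $\Z_i$ contained in the compact set $\X_i$; since $U_i(\cdot,x_{-i})$ is continuous, it attains its minimum over $\Z_i(x_{-i})$, so $\mu_i(x)\neq\emptyset$ and hence $\mu(x)\neq\emptyset$. For convexity of the values: writing $m_i(x)$ for the minimum value, $\mu_i(x)=\{y_i\in\Z_i(x_{-i}):U_i(y_i,x_{-i})\le m_i(x)\}$ is the intersection of the convex set $\Z_i(x_{-i})$ with a sublevel set of the quasi-convex function $U_i(\cdot,x_{-i})$, hence convex, and a product of convex sets is convex.

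The remaining and central requirement is that $\mu$ has a closed graph, and I would obtain this from Berge's Maximum Theorem applied at each index $i$. Its hypotheses are that the objective $(y_i,x_{-i})\mapsto U_i(y_i,x_{-i})$ is continuous and that the constraint correspondence $x_{-i}\mapsto\Z_i(x_{-i})$ is continuous with nonempty compact values. Continuity of $U_i$ is assumed; the values of $\Z_i$ are nonempty and, as noted, compact; upper hemicontinuity of $\Z_i$ follows from its closed graph together with the compactness of $\X_i$; and lower hemicontinuity of $\Z_i$ is exactly the set-valued ``continuity'' hypothesis of the theorem, in the sequential form recorded in the Notation subsection. Berge's theorem then yields that each $\mu_i$ is upper hemicontinuous with nonempty compact values; since its range $\X_i$ is compact this is equivalent to $\mu_i$ having a closed graph, and a finite product of closed-graph correspondences again has a closed graph (the graph of $\mu$ is a finite intersection of preimages of the graphs of the $\mu_i$ under continuous projections). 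Kakutani's fixed-point theorem now applies to $\mu:\Omega\rightrightarrows\Omega$ and produces $x^*\in\mu(x^*)$, i.e., an equilibrium point.

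The main obstacle is precisely the closed-graph property of the best-response map $\mu$, and it is there that both directions of set-valued continuity of $\Z_i$ are used: the closed graph of $\Z_i$ (with the compactness of $\X_i$) delivers upper hemicontinuity, while the lower hemicontinuity is what prevents the feasible set, and hence the argmin, from collapsing along a converging sequence of profiles --- this is the content of the Maximum Theorem, and it is the step most sensitive to the exact continuity notion adopted. A secondary point to be careful about is that quasi-convexity of $U_i$ only guarantees that $\mu_i$ is convex-valued rather than single-valued, which is exactly why Kakutani, rather than Brouwer, is the right fixed-point theorem here.
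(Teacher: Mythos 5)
Your proof is correct, and it follows essentially the classical argument behind this result: the paper itself offers no proof (the theorem is imported verbatim from the cited reference of Arrow and Debreu), and the route you take --- recasting equilibria as fixed points of the joint best-response correspondence, using quasi-convexity for convex-valuedness, Berge's maximum theorem (with lower hemicontinuity supplied by the paper's sequential definition of set-valued continuity and upper hemicontinuity supplied by the closed graph plus compactness of $\X_i$) for the closed-graph property, and Kakutani's fixed-point theorem --- is exactly the standard proof of that cited result. Your side remarks on where each hypothesis is used, and on why Kakutani rather than Brouwer is needed, are accurate.
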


Note that when $\Z_i(x_{-i})=\X_i$ for all $x_{-i}\in\times_{j\neq i}\X_j$, and all $i$, we have a game with finitely many players. Therefore, an abstract economy can be considered as a generalization of a game. 

\begin{definition}\textsc{(Pure Strategy Nash Equilibrium in Games with Finitely Many Players~\cite{gibbons1992game})} \label{def:PureNash} $x^*$ is a pure strategy Nash equilibrium if, for all $i\in\N$, $x^*_i\in\argmin_{x_i\in\X_i} U_i(x_i,x_{-i}^*)$. 
\end{definition}

Theorem~\ref{tho:abstraceconomy} results now in the following useful corollary.

\begin{corollary} \label{cor:abstraceconomy} If, for each $i\in\N$, $\X_i$ is a compact convex set and $U_i(x_i,x_{-i})$ is continuous on $\times_{j=1}^N\X_j$ and quasi-convex in $x_i$ for each $x_{-i}\in\times_{j\neq i}\X_j$, then the game admits a pure strategy Nash equilibrium.
\end{corollary}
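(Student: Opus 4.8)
The plan is to obtain the corollary directly from Theorem~\ref{tho:abstraceconomy} by regarding the game as a degenerate abstract economy in which each player's feasible-action correspondence is the constant map $\Z_i(x_{-i})=\X_i$ for all $x_{-i}\in\times_{j\neq i}\X_j$. Under this identification, the equilibrium condition of an abstract economy, $x_i^*\in\argmin_{x_i\in\Z_i(x_{-i}^*)}U_i(x_i,x_{-i}^*)$, becomes $x_i^*\in\argmin_{x_i\in\X_i}U_i(x_i,x_{-i}^*)$, which is exactly the pure strategy Nash equilibrium condition of Definition~\ref{def:PureNash}. So it suffices to check that the hypotheses of Theorem~\ref{tho:abstraceconomy} are met when only the (weaker) hypotheses of the corollary are assumed.

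First I would observe that the assumptions on $\X_i$ and $U_i$ transfer verbatim: for each $i\in\N$, $\X_i$ is compact and convex, and $U_i$ is continuous on $\times_{j=1}^N\X_j$ and quasi-convex in $x_i$ for each fixed $x_{-i}$. The only mild addition is to note that $\X_i$ is nonempty (compactness together with the game being well posed), so that $\Z_i(x_{-i})=\X_i$ is nonempty and convex for every $x_{-i}$.

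Next I would verify the two structural conditions that Theorem~\ref{tho:abstraceconomy} places on $\Z_i$. For the closed-graph condition, the graph of $\Z_i$ is $\{(x_j)_{j\in\N}\mid x_i\in\Z_i(x_{-i})\}=\times_{j=1}^N\X_j$, a finite product of compact (hence closed) subsets of Euclidean space, which is closed. For continuity of $\Z_i$ in the sequential sense used in the excerpt: given $x_{-i}^0$, any $y^0\in\Z_i(x_{-i}^0)=\X_i$, and any sequence $x_{-i}^k\to x_{-i}^0$, the constant sequence $y^k=y^0$ satisfies $y^k\in\X_i=\Z_i(x_{-i}^k)$ for all $k$ and $y^k\to y^0$; thus $\Z_i$ is continuous at every point.

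With all hypotheses of Theorem~\ref{tho:abstraceconomy} in force, the abstract economy admits an equilibrium $x^*$, i.e. $x_i^*\in\argmin_{x_i\in\X_i}U_i(x_i,x_{-i}^*)$ for all $i\in\N$, which by Definition~\ref{def:PureNash} is a pure strategy Nash equilibrium. I do not expect a genuine obstacle in this argument; the only point deserving a moment's care is confirming that the constant correspondence $\Z_i\equiv\X_i$ satisfies the continuity and closed-graph requirements exactly as stated, and, as shown above, each reduces to a one-line verification.
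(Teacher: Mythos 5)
Your proposal is correct and follows exactly the paper's own route: the paper proves the corollary in one line by taking $\Z_i(x_{-i})=\X_i$ for all $x_{-i}$ and invoking Theorem~\ref{tho:abstraceconomy}. Your additional verification that the constant correspondence is continuous, has a closed graph, and takes nonempty convex values is a more careful spelling-out of the same argument, not a different one.
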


\begin{proof} The proof follows from utilizing Theorem~\ref{tho:abstraceconomy} when, for all $i\in\N$, $\Z_i(x_{-i})=\X_i$ for all $x_{-i}\in\times_{j\neq i}\X_j$. \par
\end{proof}

With this result in hand, we can go ahead and prove the existence of a Nash equilibrium in the heterogeneous routing game. In the next subsection, we first prove that the problem of finding a Nash equilibrium for the heterogeneous routing game is equivalent to the problem of finding a pure strategy Nash equilibrium in an abstract game\footnote{ We use the term ``abstract'' to emphasize the fact that the introduced game does not have any physical intuition and it is simply a mathematical concept defined for proving the results of this paper. This expression should not be confused with that of an ``abstract economy''. } with finitely many players. Then, we use Corollary~\ref{cor:abstraceconomy} to show that this game admits a Nash equilibrium under Assumption~\ref{assum:1}.

\subsection{Existence of Nash Equilibrium in Heterogeneous Routing Games}
For the sake of simplicity of presentation and without loss of generality (since $\Theta$ is finite), we can assume that $\Theta=\{\theta_1,\dots,\theta_N\}$ where $N=|\Theta|$. Now, let us define the abstract game.

\begin{definition} \label{def:highlevelgame}
An abstract game is a game with $N$ players in which player~$i\in \N$ corresponds to type $\theta_i\in\Theta$ in the heterogeneous routing game. The action of player $i$ is $a_i=(f_{p'}^{\theta_i})_{p'\in\P}$ which belongs to the action set 
$$
\mathcal{A}_i=\bigg\{(f_{p'}^{\theta_i})_{p'\in\P}\in\mathbb{R}^{|\P|}\,\Big|\,\sum_{p'\in\P_k}f_{p'}^{\theta_i}=F_k^{\theta_i}\bigg\}.
$$
Additionally, the utility of player $i$ is defined as
\begin{equation} \label{eqn:definition:U}
\begin{split}
U_i(a_i,a_{-i})&=\sum_{e\in\E}\int_0^{\phi_e^{\theta_i}} \tilde{\ell}_e^{\theta_i}(u,(\phi_e^{\theta_j})_{\theta_j\in\Theta\setminus\{\theta_i\}}) \d u,
\end{split}
\end{equation}
where $a_{-i}$ represents the actions of the rest of the players $(a_j)_{j\in\N\setminus\{i\}}$ and $\phi_e^{\theta_i}= \sum_{p\in\P:e\in p} f_p^{\theta_i}$ is the edge flow of type $\theta_i$ for each $i\in \N$. 
\end{definition}

The following result establishes an interesting relationship between the introduced abstract game and the underlying heterogeneous routing game.

\begin{lemma} \label{lemma:equivalence} A flow vector $(f_{p'}^{\theta'})_{p'\in\P,\theta'\in\Theta}$ is a Nash equilibrium of the heterogeneous routing game if and only if $((f_{p'}^{\theta_1})_{p'\in\P},\dots,(f_{p'}^{\theta_N})_{p'\in\P})$ is a pure strategy Nash equilibrium of the abstract game.
\end{lemma}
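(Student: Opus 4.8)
The plan is to show the equivalence by unfolding both equilibrium definitions and checking that they impose exactly the same conditions on the flow vector. The key observation is that the feasibility constraints decouple across types: the action set $\mathcal{A}_i$ of player $i$ in the abstract game is precisely the projection of the heterogeneous routing game's feasible set $\mathcal{F}$ onto the coordinates indexed by type $\theta_i$, and because the constraints $\sum_{p'\in\P_k}f_{p'}^{\theta}=F_k^{\theta}$ involve only one type at a time, we have $\mathcal{F}=\times_{i\in\N}\mathcal{A}_i$ (modulo reindexing). Hence a flow $f$ is feasible in the routing game iff each $a_i=(f_{p'}^{\theta_i})_{p'\in\P}$ lies in $\mathcal{A}_i$.

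Next I would establish the single-type variational characterization. Fix a feasible $f$ and a type $\theta_i$. I claim that the routing-game Nash condition for type $\theta_i$ (all used paths of each commodity $k$ have minimal cost $\ell_p^{\theta_i}(f)$) is equivalent to $a_i\in\argmin_{a_i'\in\mathcal{A}_i}U_i(a_i',a_{-i})$. The forward direction ($\Leftarrow$) is the standard convex-optimization argument: by Assumption~\ref{assum:1}~(\textit{iii}), $U_i$ is convex in $a_i$ (it is a sum over edges of the convex functions $\int_0^{\phi_e^{\theta_i}}\tilde\ell_e^{\theta_i}(u,\cdot)\,\d u$ composed with the linear map $a_i\mapsto(\phi_e^{\theta_i})_{e\in\E}$), so a feasible point minimizes $U_i$ over the polytope $\mathcal{A}_i$ iff the first-order optimality (KKT) conditions hold; computing $\partial U_i/\partial f_p^{\theta_i}=\sum_{e\in p}\tilde\ell_e^{\theta_i}((\phi_e^{\theta_j})_{j})=\ell_p^{\theta_i}(f)$ and writing out the KKT conditions for the commodity constraints gives exactly: there exist multipliers $\lambda_k^{\theta_i}$ with $\ell_p^{\theta_i}(f)\ge\lambda_k^{\theta_i}$ for all $p\in\P_k$ and equality whenever $f_p^{\theta_i}>0$ — which is Definition~\ref{def:Nashrouting} restricted to type $\theta_i$. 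The reverse direction ($\Rightarrow$) runs the same KKT equivalence the other way, using convexity to upgrade the first-order conditions to global optimality.

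Finally I would assemble the two pieces: $f$ is a routing-game Nash equilibrium iff it is feasible and the Nash condition holds for every type $\theta_i$, iff $a_i\in\mathcal{A}_i$ and $a_i\in\argmin_{a_i'\in\mathcal{A}_i}U_i(a_i',a_{-i})$ for every $i$, which is precisely the statement that $(a_1,\dots,a_N)$ is a pure strategy Nash equilibrium of the abstract game (Definition~\ref{def:PureNash}). The main obstacle — and the only step requiring real care — is the per-type variational characterization: one must verify that the KKT conditions for $\min_{a_i\in\mathcal{A}_i}U_i$ are both necessary and sufficient. Necessity needs a constraint qualification, which holds trivially since $\mathcal{A}_i$ is defined by affine equalities (plus the implicit nonnegativity $f_p^{\theta_i}\ge0$); sufficiency is exactly where convexity from Assumption~\ref{assum:1}~(\textit{iii}) is used, and it is worth noting explicitly that without that assumption the equivalence could fail. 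Everything else is bookkeeping with indices and the decoupling of the constraints.
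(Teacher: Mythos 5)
Your proposal is correct and follows essentially the same route as the paper: both reduce the claim to the per-type best-response problem $\min_{a_i\in\mathcal{A}_i}U_i(a_i,a_{-i})$, invoke convexity from Assumption~\ref{assum:1}~(\textit{iii}), and show via the KKT conditions (with complementary slackness) that the first-order optimality conditions coincide exactly with the equal-cost-on-used-paths condition of Definition~\ref{def:Nashrouting}. The only differences are cosmetic --- the paper introduces the edge flows $\phi_e^{\theta_i}$ as auxiliary variables with their own multipliers $v_e^i$ rather than differentiating $U_i$ directly with respect to the path flows, and you are somewhat more explicit about the constraint qualification and the two directions of the KKT equivalence.
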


\begin{proof} Notice that $((f_{p'}^{\theta_1})_{p'\in\P},\dots,(f_{p'}^{\theta_N})_{p'\in\P})$ being a pure strategy Nash equilibrium (see Definition~\ref{def:PureNash}) of the abstract game is equivalent to that for all $i\in\N$, $a_i=(f_{p'}^{\theta_i})_{p'\in\P}$ is the best response of player $i$ to the tuple of actions $a_{-i}=((f_{p'}^{\theta_j})_{p'\in\P})_{\theta_j\in\Theta\setminus\{\theta_i\}}$ or, equivalently,
\begin{equation*}
\begin{split}
a_i\in\argmin_{(f_{p'}^{\theta_i})_{p'\in\P}} & \;\sum_{e\in\E}\int_0^{\phi_e^{\theta_i}} \tilde{\ell}_e^{\theta_i}(u,(\phi_e^{\theta_j})_{\theta_j\in\Theta\setminus\{\theta_i\}}) \d u,\\
 \mathrm{s.t.} \hspace{.1in} & \hspace{-.1in}\,\sum_{p\in\P:e\in p}f_p^{\theta_i}=\phi_e^{\theta_i}, \;\forall e\in\E,
 \\ & \sum_{p\in\P_k} f_p^{\theta_i}=F_k^{\theta_i}, \;\forall k\in\K,
% \\ & \phi_e^{\theta_i}\in\mathbb{R}_{\geq 0}, \;\forall e\in\E,
 \\ & f_p^{\theta_i}\geq 0, \;\forall p\in\P.
\end{split}
\end{equation*}
where $\phi_e^{\theta_j}=\sum_{p\in\P:e\in p}f_p^{\theta_j}$ for all $j\in\N\setminus\{i\}$. Notice that due to Assumption~\ref{assum:1}~(\textit{iii}), this problem is indeed a convex optimization problem. Let us define the Lagrangian as 
\begin{equation*}
\begin{split}
L_i((\phi_{e'}^{\theta_i})_{e'\in\E},(f_{p'}^{\theta_i})_{p'\in\P})\hspace{-.03in}=\hspace{-.03in}& \sum_{e\in\E}\int_0^{\phi_e^{\theta_i}} \hspace{-.1in} \tilde{\ell}_e^{\theta_i}(u,(\phi_e^{\theta_j})_{\theta_j\in\Theta\setminus\{\theta_i\}}\hspace{-.02in}) \d u+\sum_{e\in E} v_e^i \left(\sum_{p\in\P:e\in p}f_p^{\theta_i}-\phi_e^{\theta_i}\right)\\&-\sum_{k=1}^K w_k^i \left(\sum_{p\in\P_k} f_p^{\theta_i}-F_k^{\theta_i} \right)-\sum_{p\in\P} \lambda_p^i f_p^{\theta_i},
\end{split}
\end{equation*}
where $(v_e^i)_{e\in\E}\in\mathbb{R}^{|\mathcal{E}|}$, $(w_k^i)_{k\in\K}\in\mathbb{R}^{K}$, and $(\lambda_p^i)_{p\in\P}\in\mathbb{R}_{\geq 0}^{|\P|}$ are Lagrange multipliers. Now, using Karush--Kuhn--Tucker theorem~\cite[p.\,244]{boyd2004convex}, optimality conditions are
\begin{equation} \label{eqn:KKT:1}
\begin{split}
\frac{\partial }{\partial \phi_e^{\theta_i}} &L_i((\phi_{e'}^{\theta_i})_{e'\in\E},(f_{p'}^{\theta_i})_{p'\in\P}) =\tilde{\ell}^{\theta_i}_e(\phi_e^{\theta_i},(\phi_e^{\theta_j})_{\theta_j\in\Theta\setminus\{\theta_i\}})-v_e^i=0,\;\;\forall e\in\E,
\end{split}
\end{equation}
and
\begin{equation} \label{eqn:KKT:2}
\begin{split}
\frac{\partial }{\partial f_p^{\theta_i}}& L_i((\phi_{e'}^{\theta_i})_{e'\in\E},(f_{p'}^{\theta_i})_{p'\in\P})=\left(\sum_{e\in p} v_e^i\right)-w_k^i-\lambda_p^i=0, \;\; \forall p\in\P_k, \forall k\in\K.
\end{split}
\end{equation}
Additionally, the complimentary slackness conditions (for inequality constraints) result in $\lambda_p^if_p^i=0$ for all $p\in\P$. Hence, for all $k$ and $p\in\P_k$, we have
\begin{align}
\ell_p^{\theta_i}((f_{p'}^{\theta'})_{p'\in\P,\theta'\in\Theta})&=\sum_{e\in p} \tilde{\ell}^{\theta_i}_e(\phi_e^{\theta_i},(\phi_e^{\theta_j})_{\theta_j\in\Theta\setminus\{\theta_i\}}) & \nonumber\\&=\sum_{e\in p} v_e^i & \mbox{by~\eqref{eqn:KKT:1}} \nonumber \\&= w_k^i+\lambda_p^i. & \mbox{by~\eqref{eqn:KKT:2}} \nonumber
\end{align}
Therefore, for any $p_1,p_2\in\P_k$, if $f_{p_1}^{\theta_i},f_{p_2}^{\theta_i}>0$, we have $\lambda_{p_1}^{\theta_i}=\lambda_{p_2}^{\theta_i}=0$ (because of the complimentary slackness conditions), which results in
\begin{equation*}
\begin{split}
\ell_{p_1}^{\theta_i}((f_{p'}^{\theta'})_{p'\in\P,\theta'\in\Theta})&=w_k^i\\&=\ell_{p_2}^{\theta_i}((f_{p'}^{\theta'})_{p'\in\P,\theta'\in\Theta}).
\end{split}
\end{equation*}
Furthermore, for any $p_3\in\P_k$ such that $f_{p_3}^{\theta_i}=0$, we get $\lambda_{p_3}^{\theta_i}\geq 0$ (because of dual feasibility, i.e., the Lagrange multipliers associated with inequality constraints must be non-negative), which results in
\begin{equation*}
\begin{split}
\ell_{p_3}^{\theta_i}((f_{p'}^{\theta'})_{p'\in\P,\theta'\in\Theta})&=w_k^i+\lambda_{p_3}^{\theta_i}\\&\geq w_k^i\\&=\ell_{p_1}^{\theta_i}((f_{p'}^{\theta'})_{p'\in\P,\theta'\in\Theta}).
\end{split}
\end{equation*}
This completes the proof.
\end{proof}

\begin{theorem} \label{tho:existence} Under Assumption~\ref{assum:1}, the heterogeneous routing game admits at least one Nash equilibrium. 
\end{theorem}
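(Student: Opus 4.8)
The plan is to reduce the statement to Corollary~\ref{cor:abstraceconomy} via Lemma~\ref{lemma:equivalence}. By that lemma, the heterogeneous routing game has a Nash equilibrium if and only if the abstract game of Definition~\ref{def:highlevelgame} has a pure strategy Nash equilibrium, so it suffices to check that the abstract game satisfies the hypotheses of Corollary~\ref{cor:abstraceconomy}: for each $i\in\N$, (a) the action set $\mathcal{A}_i$ is a nonempty compact convex subset of a Euclidean space, (b) $U_i$ is continuous on $\times_{j=1}^N\mathcal{A}_j$, and (c) $U_i(a_i,a_{-i})$ is quasi-convex in $a_i$ for every fixed $a_{-i}$.

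For (a), recall that flows are nonnegative by the standing setup, so $\mathcal{A}_i$ is the intersection of the affine subspace $\{(f_{p'}^{\theta_i})_{p'\in\P}:\sum_{p'\in\P_k}f_{p'}^{\theta_i}=F_k^{\theta_i},\,\forall k\in\K\}$ with the nonnegative orthant $\mathbb{R}_{\geq 0}^{|\P|}$. This set is convex (intersection of convex sets), closed, and bounded, since nonnegativity together with $\sum_{p'\in\P_k}f_{p'}^{\theta_i}=F_k^{\theta_i}$ forces $0\le f_{p'}^{\theta_i}\le F_k^{\theta_i}$ for each $p'\in\P_k$; hence it is compact. Nonemptiness follows from the assumption that $\P_k\neq\emptyset$ whenever $F_k^{\theta_i}\neq 0$: assign the required flow to any path in $\P_k$ when $F_k^{\theta_i}>0$, and the zero flow otherwise.

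For (b) and (c), the key observation is that each edge flow $\phi_e^{\theta_j}=\sum_{p\in\P:\,e\in p}f_p^{\theta_j}$ is a \emph{linear} function of the path-flow vectors. For continuity, note that under Assumption~\ref{assum:1}~(\textit{i}) the map $\tilde{\ell}_e^{\theta_i}$ is $\mathcal{C}^1$, hence continuous, so the parametric integral $(t,z)\mapsto\int_0^t \tilde{\ell}_e^{\theta_i}(u,z)\,\d u$ is jointly continuous in its upper limit $t$ and the remaining arguments $z$ (a standard fact for integrals of continuous integrands over compact intervals); composing with the continuous linear maps $(a_i,a_{-i})\mapsto(\phi_e^{\theta_i},(\phi_e^{\theta_j})_{j\neq i})$ and summing over the finite edge set $\E$ shows $U_i$ is continuous on $\times_{j=1}^N\mathcal{A}_j$. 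For quasi-convexity, fix $a_{-i}$; by Assumption~\ref{assum:1}~(\textit{iii}) each term $\phi_e^{\theta_i}\mapsto\int_0^{\phi_e^{\theta_i}}\tilde{\ell}_e^{\theta_i}(u,(\phi_e^{\theta_j})_{j\neq i})\,\d u$ is convex, and precomposition with the linear map $a_i\mapsto\phi_e^{\theta_i}$ preserves convexity, so $U_i(\cdot,a_{-i})$ is a finite sum of convex functions, hence convex, hence quasi-convex.

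Having verified (a)--(c), Corollary~\ref{cor:abstraceconomy} yields a pure strategy Nash equilibrium of the abstract game, and Lemma~\ref{lemma:equivalence} converts it into a Nash equilibrium of the heterogeneous routing game, completing the proof. I expect no serious obstacle here: the argument is essentially a verification of hypotheses. The only mildly delicate points are the joint continuity of the parametric integral (routine, but worth stating) and noticing that compactness of $\mathcal{A}_i$ relies on the nonnegativity of flows together with the commodity-wise conservation constraints; once Assumption~\ref{assum:1}~(\textit{iii}) is invoked for convexity, everything else is immediate.
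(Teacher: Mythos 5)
Your proposal is correct and follows exactly the paper's own route: reduce to the abstract game via Lemma~\ref{lemma:equivalence} and verify the hypotheses of Corollary~\ref{cor:abstraceconomy} (compactness, convexity and nonemptiness of $\mathcal{A}_i$, continuity of $U_i$, and convexity of $U_i$ in $a_i$ from Assumption~\ref{assum:1}~(\textit{iii})). You simply supply more detail than the paper does, in particular the observation that compactness of $\mathcal{A}_i$ rests on nonnegativity of the flows together with the conservation constraints.
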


\begin{proof} Following the result of Lemma~\ref{lemma:equivalence}, proving the statement of this theorem is equivalent to showing the fact that the abstract game introduced in Definition~\ref{def:highlevelgame} admits at least one pure strategy Nash equilibrium. First, notice that for all $i\in\N$, $\mathcal{A}_i$ is a non-empty, convex, and compact subset of the Euclidean space $\mathbb{R}^{|\P|}$. Second, $U_i(a_i,a_{-i})$ is continuous in all its arguments (because it is defined as an integral of a real-valued measurable function). Finally, because of Assumption~\ref{assum:1}~(\textit{iii}), $U_i(a_i,a_{-i})$ is a convex function in $a_i$. Now, it follows from Corollary~\ref{cor:abstraceconomy} that the abstract game admits at least one pure strategy Nash equilibrium.
\end{proof}

\begin{remark} \label{remark:extension} Theorem~\ref{tho:existence} can be seen as an extension of~\cite{braess1979existence}. In that study, the authors assume that the cost functions are monotone, that is, $\tilde{\ell}_e^\theta((\phi_e^{\theta'})_{\theta'\in\Theta})\leq \tilde{\ell}_e^\theta((\bar{\phi}_e^{\theta'})_{\theta'\in\Theta})$ for all $\theta\in\Theta$ if $\phi_e^{\theta'}\leq \bar{\phi}_e^{\theta'}$ if $\theta'\in\Theta$; see~\cite[p.\,58]{braess1979existence}. This condition, in turn, implies that $\tilde{\ell}_e^\theta((\phi_e^{\theta'})_{\theta'\in\Theta})$ is a non-decreasing function of \textit{all} its arguments which is stronger than Assumption~\ref{assum:1}~(\textit{iii})'.
\end{remark}

\section{Finding a Nash Equilibrium} \label{sec:findingaNashEq}
A family of games that are relatively easy to analyze are potential games. In this section, we give conditions for when the introduced abstract game is a potential game.

\begin{definition}\textsc{(Potential Game~\cite{monderer1996potential})} The abstract game is a potential game with potential function $V:\times_{i=1}^N\mathcal{A}_i\rightarrow \mathbb{R}$ if for all $i\in\N$,
\begin{equation*}
\begin{split}
V(a_i,a_{-i})-V(\bar{a}_i,a_{-i})=U_i(a_i,a_{-i})&-U_i(\bar{a}_i,a_{-i}), \\& \forall a_i,\bar{a}_i\in\mathcal{A}_i \mbox{ and } a_{-i}\in\times_{j\in\N\setminus\{i\}} \mathcal{A}_j.
\end{split}
\end{equation*}
\end{definition}

The next lemma provides a necessary condition for the existence of a potential function in $\mathcal{C}^2$.

\begin{lemma} \label{lemma:necessity} If the abstract game admits a potential function $V\in\mathcal{C}^2$, then
$$
\sum_{e\in p_1\cap p_2} \left[ \frac{\partial}{\partial \phi_e^{\theta_i}}\tilde{\ell}_e^{\theta_j}( (\phi_e^{\theta'})_{\theta'\in\Theta} ) - \frac{\partial}{\partial \phi_e^{\theta_j}}\tilde{\ell}_e^{\theta_i}( (\phi_e^{\theta'})_{\theta'\in\Theta} ) \right]=0, %\;\; \forall i,j\in\N \mbox{ and } \forall p_1,p_2\in\P.
$$
for all $i,j\in\N$ and $p_1,p_2\in\P$.
\end{lemma}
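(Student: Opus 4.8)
The plan is to exploit the standard characterization of exact potential games via symmetry of mixed partial derivatives of the utilities. Since $V\in\mathcal{C}^2$, the defining identity of a potential game forces, for every pair of players $i,j\in\N$,
\[
\frac{\partial^2 U_i}{\partial a_j \,\partial a_i} \;=\; \frac{\partial^2 V}{\partial a_j\,\partial a_i} \;=\; \frac{\partial^2 V}{\partial a_i\,\partial a_j} \;=\; \frac{\partial^2 U_j}{\partial a_i\,\partial a_j},
\]
where the outer equalities come from differentiating the potential-game identity (the left-hand side differs from $V(a_i,a_{-i})$ by a function independent of $a_i$, so its $a_i$-derivative equals $\partial_{a_i} V$; likewise for $j$) and the middle equality is Clairaut/Schwarz applied to $V\in\mathcal{C}^2$. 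Concretely, I would differentiate $V(a_i,a_{-i})-V(\bar a_i,a_{-i})=U_i(a_i,a_{-i})-U_i(\bar a_i,a_{-i})$ first with respect to a path-flow coordinate $f_{p_1}^{\theta_i}$ of player $i$ and then with respect to a path-flow coordinate $f_{p_2}^{\theta_j}$ of player $j$ (with $j\ne i$), and do the symmetric computation with the roles of $i$ and $j$ swapped; equality of the mixed second partials of $V$ then yields
\[
\frac{\partial^2 U_i}{\partial f_{p_2}^{\theta_j}\,\partial f_{p_1}^{\theta_i}} \;=\; \frac{\partial^2 U_j}{\partial f_{p_1}^{\theta_i}\,\partial f_{p_2}^{\theta_j}}.
\]

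The second step is to compute these mixed partials explicitly from the integral form of the utilities in~\eqref{eqn:definition:U}. Differentiating $U_i=\sum_{e\in\E}\int_0^{\phi_e^{\theta_i}}\tilde\ell_e^{\theta_i}(u,(\phi_e^{\theta_{j'}})_{\theta_{j'}\ne\theta_i})\,\d u$ with respect to $f_{p_1}^{\theta_i}$ gives, by the fundamental theorem of calculus and the chain rule (using $\phi_e^{\theta_i}=\sum_{p\ni e}f_p^{\theta_i}$, so $\partial\phi_e^{\theta_i}/\partial f_{p_1}^{\theta_i}=1$ iff $e\in p_1$),
\[
\frac{\partial U_i}{\partial f_{p_1}^{\theta_i}}=\sum_{e\in p_1}\tilde\ell_e^{\theta_i}((\phi_e^{\theta'})_{\theta'\in\Theta}).
\]
Differentiating once more with respect to $f_{p_2}^{\theta_j}$ (noting $\partial\phi_e^{\theta_j}/\partial f_{p_2}^{\theta_j}=1$ iff $e\in p_2$) yields
\[
\frac{\partial^2 U_i}{\partial f_{p_2}^{\theta_j}\,\partial f_{p_1}^{\theta_i}}=\sum_{e\in p_1\cap p_2}\frac{\partial}{\partial\phi_e^{\theta_j}}\tilde\ell_e^{\theta_i}((\phi_e^{\theta'})_{\theta'\in\Theta}).
\]
The symmetric computation gives $\partial^2 U_j/\partial f_{p_1}^{\theta_i}\,\partial f_{p_2}^{\theta_j}=\sum_{e\in p_1\cap p_2}\tfrac{\partial}{\partial\phi_e^{\theta_i}}\tilde\ell_e^{\theta_j}$. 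Setting the two equal and moving everything to one side produces exactly the claimed identity; the case $i=j$ is trivial since both terms in the bracket coincide, and the restriction $p_1,p_2\in\P$ is unrestricted because these are the only path variables entering $U_i,U_j$.

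A few technical points I would be careful about. First, the utilities are defined on the affine feasible sets $\mathcal{A}_i$, not on an open set, so the "partial derivatives" must be interpreted as directional derivatives along feasible directions; the cleanest fix is to observe that the potential-game identity, being an equality of functions on $\mathcal{A}_i\times(\times_{j\ne i}\mathcal{A}_j)$, can be differentiated along any admissible curve, and the path-flow coordinates $f_p^{\theta}$ serve as valid (over-parametrized) coordinates for which the chain-rule computation above is literally valid — alternatively one restricts to a difference of two path flows within the same commodity, which stays feasible, and the $p_1\cap p_2$ structure is unaffected. Second, I should confirm $U_i\in\mathcal{C}^2$ so that its mixed partials exist and commute: this follows from Assumption~\ref{assum:1}(i), i.e. $\tilde\ell_e^\theta\in\mathcal{C}^1$, which makes $U_i$ (an integral of a $\mathcal{C}^1$ integrand composed with linear maps) a $\mathcal{C}^2$ function of the flows. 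The main — though mild — obstacle is purely bookkeeping: tracking which edge-flow variable each path-flow derivative "hits" and making sure the two chain-rule passes land the indicator conditions $e\in p_1$ and $e\in p_2$ correctly so that only $e\in p_1\cap p_2$ survives; there is no deep difficulty, and no appeal to convexity (Assumption~\ref{assum:1}(iii)) is needed for necessity.
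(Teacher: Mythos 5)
Your proposal is correct and follows essentially the same route as the paper's proof: differentiate the potential-game identity to obtain $\partial V/\partial f_{p_1}^{\theta_i}=\partial U_i/\partial f_{p_1}^{\theta_i}=\sum_{e\in p_1}\tilde{\ell}_e^{\theta_i}$, invoke Clairaut--Schwarz on $V\in\mathcal{C}^2$, and compute both mixed partials as sums over $e\in p_1\cap p_2$. Your added remarks on interpreting the derivatives on the affine feasible sets and on $U_i$ inheriting $\mathcal{C}^2$ smoothness from Assumption~\ref{assum:1}~(\textit{i}) are sound refinements of the same argument.
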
	

\begin{proof} Since $V((f_{p'}^{\theta_1})_{p'\in\P},\dots,(f_{p'}^{\theta_N})_{p'\in\P})$ is a potential function for the abstract game, it satisfies, for all $i\in\N$,
\begin{equation*}
\begin{split}
V((f_{p'}^{\theta_i}&)_{p'\in\P},((f_{p'}^{\theta_j})_{p'\in\P})_{\theta_j\in\Theta\setminus\{\theta_i\}})-V((\bar{f}_{p'}^{\theta_i})_{p'\in\P},((f_{p'}^{\theta_j})_{p'\in\P})_{\theta_j\in\Theta\setminus\{\theta_i\}})
\\ &=U_i((f_{p'}^{\theta_i})_{p'\in\P},((f_{p'}^{\theta_j})_{p'\in\P})_{\theta_j\in\Theta\setminus\{\theta_i\}})-U_i((\bar{f}_{p'}^{\theta_i})_{p'\in\P},((f_{p'}^{\theta_j})_{p'\in\P})_{\theta_j\in\Theta\setminus\{\theta_i\}}),
\end{split}
\end{equation*}
which results in the identity 
\begin{equation} \label{longequation:1}
\begin{split}
&\hspace{-.4in}\frac{\partial V((f_{p'}^{\theta'})_{p'\in\P,\theta'\in\Theta})}{\partial f_{p_1}^{\theta_i}}\\[-0.5em]&=
\lim_{\epsilon\rightarrow 0}\frac{V((f_{p'}^{\theta_i}+\epsilon\delta_{p_1p'})_{p'\in\P},((f_{p'}^{\theta_j})_{p'\in\P})_{\theta_j\in\Theta\setminus\{\theta_i\}})\hspace{-.04in}-\hspace{-.04in}V((f_{p'}^{\theta_i})_{p'\in\P},((f_{p'}^{\theta_j})_{p'\in\P})_{\theta_j\in\Theta\setminus\{\theta_i\}})}{\epsilon}\\[-0.5em]&=
\lim_{\epsilon\rightarrow 0}\frac{U_i((f_{p'}^{\theta_i}+\epsilon\delta_{p_1p'})_{p'\in\P},((f_{p'}^{\theta_j})_{p'\in\P})_{\theta_j\in\Theta\setminus\{\theta_i\}})\hspace{-.04in}-\hspace{-.04in}U_i((f_{p'}^{\theta_i})_{p'\in\P},((f_{p'}^{\theta_j})_{p'\in\P})_{\theta_j\in\Theta\setminus\{\theta_i\}})}{\epsilon}\\[-0.5em]&=
\frac{\partial U_i((f_{p'}^{\theta'})_{p'\in\P,\theta'\in\Theta})}{\partial f_{p_1}^{\theta_i}}.%\\&=
%\frac{\partial}{\partial f_{p_1}^{\theta_i}} \sum_{e\in \E}%\int_0^{\phi_e^{\theta_i}} \tilde{\ell}_e^{\theta_i}(u,%(\phi_e^{\theta_j})_{\theta_j\in\Theta\setminus\{\theta_i\}}) \d u\\&=
%\sum_{e\in p_1} \tilde{\ell}_e^{\theta_i}((\phi_e^{\theta'})_{\theta'\in\Theta}),
\end{split}
\end{equation}
in which $\delta_{ij}$ denotes the Kronecker index (or delta) defined as $\delta_{ij}=1$ if $i=j$ and $\delta_{ij}=0$ otherwise. Hence, we get
\begin{equation*}
\begin{split}
\frac{\partial V((f_{p'}^{\theta'})_{p'\in\P,\theta'\in\Theta})}{\partial f_{p_1}^{\theta_i}}&\hspace{-.03in}=\hspace{-.03in}
\frac{\partial}{\partial f_{p_1}^{\theta_i}}\hspace{-.04in} \sum_{e\in \E}\hspace{-.04in}\int_0^{\phi_e^{\theta_i}} \hspace{-.08in}\tilde{\ell}_e^{\theta_i}(\hspace{-.02in}u,\hspace{-.02in}(\phi_e^{\theta_j}\hspace{-.02in})_{\theta_j\in\Theta\setminus\{\theta_i\}}) \d u\\&\hspace{-.03in}=\hspace{-.03in}
\sum_{e\in p_1} \tilde{\ell}_e^{\theta_i}((\phi_e^{\theta'})_{\theta'\in\Theta}).
\end{split}
\end{equation*}
Now, because of Clairaut-Schwarz theorem~\cite[p.\,1067]{tan2009multivariable}, we know that the following equality must hold since $V\in\mathcal{C}^2$,
\begin{equation} \label{eqn:proof:1}
\frac{\partial^2V((f_{p'}^{\theta'})_{p'\in\P,\theta'\in\Theta})}{\partial f_{p_1}^{\theta_i} \partial f_{p_2}^{\theta_j}}=
\frac{\partial^2V((f_{p'}^{\theta'})_{p'\in\P,\theta'\in\Theta})}{\partial f_{p_2}^{\theta_j} \partial f_{p_1}^{\theta_i}}.
\end{equation}
Let us calculate
\begin{equation} \label{eqn:proof:2}
\begin{split}
\frac{\partial^2V((f_{p'}^{\theta'})_{p'\in\P,\theta'\in\Theta})}{\partial f_{p_1}^{\theta_i} \partial f_{p_2}^{\theta_j}}&=\frac{\partial}{\partial f_{p_1}^{\theta_i}}\hspace{-.05in}\left[\hspace{-.03in}\frac{\partial V((f_{p'}^{\theta'})_{p'\in\P,\theta'\in\Theta})}{\partial f_{p_2}^{\theta_j}}\right]\\&=
\frac{\partial}{\partial f_{p_1}^{\theta_i}} \hspace{-.05in}\left[\sum_{e\in p_2} \tilde{\ell}_e^{\theta_j}((\phi_e^{\theta'})_{\theta'\in\Theta})\right]\\&= \sum_{e\in p_1\cap p_2} \frac{\partial \tilde{\ell}_e^{\theta_j}((\phi_e^{\theta'})_{\theta'\in\Theta})}{\partial \phi_e^{\theta_i}},
\end{split}
\end{equation}
and, similarly, 
\begin{equation} \label{eqn:proof:3}
\begin{split}
\frac{\partial^2 V((f_{p'}^{\theta'})_{p'\in\P,\theta'\in\Theta})}{\partial f_{p_2}^{\theta_j} \partial f_{p_1}^{\theta_i}}= \sum_{e\in p_1\cap p_2} \frac{\partial\tilde{\ell}_e^{\theta_i}((\phi_e^{\theta'})_{\theta'\in\Theta})}{\partial \phi_e^{\theta_j}}.
\end{split}
\end{equation}
Substituting~\eqref{eqn:proof:1} and~\eqref{eqn:proof:2} into~\eqref{eqn:proof:3} results in
$$
\sum_{e\in p_1\cap p_2} \left[\frac{\partial}{\partial \phi_e^{\theta_j}}\tilde{\ell}_e^{\theta_i}((\phi_e^{\theta'})_{\theta'\in\Theta}) - \frac{\partial}{\partial \phi_e^{\theta_i}}\tilde{\ell}_e^{\theta_j}((\phi_e^{\theta'})_{\theta'\in\Theta}) \right]=0,
$$
for all $p_1,p_2\in\P$ and $\theta_i,\theta_j\in\Theta$.
\end{proof}

Interestingly, we can prove that this condition is also a sufficient condition for the existence of a potential function (that belongs to $\mathcal{C}^2$) whenever two types of players are participating in the heterogeneous routing game.

\begin{lemma} \label{lemma:sufficiency} Assume that $|\Theta|=2$. If
$$
\sum_{e\in p_1\cap p_2} \left[\frac{\partial}{\partial \phi_e^{\theta_1}}\tilde{\ell}_e^{\theta_2}(\phi_e^{\theta_1},\phi_e^{\theta_2}) - \frac{\partial}{\partial \phi_e^{\theta_2}}\tilde{\ell}_e^{\theta_1}(\phi_e^{\theta_1},\phi_e^{\theta_2}) \right]=0,
$$
for all $p_1,p_2\in\P$, then
\begin{equation*}
\begin{split}
V((f_{p'}^{\theta_1})_{p'\in\P},(f_{p'}^{\theta_2})_{p'\in\P})\hspace{-.03in}=\hspace{-.03in}\sum_{e\in\E}\bigg[&\int_0^{\phi_e^{\theta_1}} \hspace{-.07in}\tilde{\ell}_e^{\theta_1}(u_1,\phi_e^{\theta_2}) \d u_1\hspace{-.03in}+\hspace{-.03in}\int_0^{\phi_e^{\theta_2}}\hspace{-.07in} \tilde{\ell}_e^{\theta_2}(\phi_e^{\theta_1},u_2) \d u_2 \\&\hspace{-.03in}-\hspace{-.03in}\int_0^{\phi_e^{\theta_2}}\hspace{-.07in}\int_0^{\phi_e^{\theta_1}} \hspace{-.07in}\frac{\partial}{\partial u_2 } \hspace{-.03in}\tilde{\ell}_e^{\theta_1} (u_1,u_2) \d u_1 \d u_2 \bigg]
\end{split}
\end{equation*}
is a potential function for the abstract game.
\end{lemma}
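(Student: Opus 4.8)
The plan is to verify directly that the displayed function satisfies the defining identity of a potential for the two--player abstract game: for $i\in\{1,2\}$, all $a_i,\bar a_i\in\mathcal{A}_i$ and the opposing action $a_{-i}$, the difference $V(a_i,a_{-i})-V(\bar a_i,a_{-i})$ should equal $U_i(a_i,a_{-i})-U_i(\bar a_i,a_{-i})$. Because each $\mathcal{A}_i$ is convex and both $V$ and $U_i$ are continuously differentiable, it suffices to establish the (stronger, but more convenient) pointwise gradient identity $\partial V/\partial f_p^{\theta_i}=\partial U_i/\partial f_p^{\theta_i}$ for every $p\in\P$ and $i\in\{1,2\}$. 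Recall from the computation in the proof of Lemma~\ref{lemma:necessity} that $\partial U_i/\partial f_p^{\theta_i}=\sum_{e\in p}\tilde\ell_e^{\theta_i}((\phi_e^{\theta'})_{\theta'\in\Theta})$, so the whole question reduces to differentiating $V$.

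The first step I would take is to simplify the double--integral term: interchanging the order of integration and applying the fundamental theorem of calculus in the inner variable, $\int_0^{\phi_e^{\theta_2}}\!\int_0^{\phi_e^{\theta_1}}\frac{\partial}{\partial u_2}\tilde\ell_e^{\theta_1}(u_1,u_2)\,\d u_1\,\d u_2=\int_0^{\phi_e^{\theta_1}}\tilde\ell_e^{\theta_1}(u_1,\phi_e^{\theta_2})\,\d u_1-\int_0^{\phi_e^{\theta_1}}\tilde\ell_e^{\theta_1}(u_1,0)\,\d u_1$. Substituting this into the definition of $V$ shows that the $e$--th summand collapses to $\int_0^{\phi_e^{\theta_1}}\tilde\ell_e^{\theta_1}(u,0)\,\d u+\int_0^{\phi_e^{\theta_2}}\tilde\ell_e^{\theta_2}(\phi_e^{\theta_1},u)\,\d u$. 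Since $\partial\phi_e^{\theta_i}/\partial f_p^{\theta_j}$ equals $1$ when $j=i$ and $e\in p$ and $0$ otherwise, differentiating this form with respect to $f_p^{\theta_2}$ gives $\partial V/\partial f_p^{\theta_2}=\sum_{e\in p}\tilde\ell_e^{\theta_2}(\phi_e^{\theta_1},\phi_e^{\theta_2})=\partial U_2/\partial f_p^{\theta_2}$, with no use of the hypothesis — the player--$\theta_2$ identity is automatic. Differentiating with respect to $f_p^{\theta_1}$ gives, via Leibniz's rule, $\partial V/\partial f_p^{\theta_1}=\sum_{e\in p}\left[\tilde\ell_e^{\theta_1}(\phi_e^{\theta_1},0)+\int_0^{\phi_e^{\theta_2}}\frac{\partial}{\partial\phi_e^{\theta_1}}\tilde\ell_e^{\theta_2}(\phi_e^{\theta_1},u)\,\d u\right]$.

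It remains to match this with $\partial U_1/\partial f_p^{\theta_1}=\sum_{e\in p}\tilde\ell_e^{\theta_1}(\phi_e^{\theta_1},\phi_e^{\theta_2})=\sum_{e\in p}\left[\tilde\ell_e^{\theta_1}(\phi_e^{\theta_1},0)+\int_0^{\phi_e^{\theta_2}}\frac{\partial}{\partial u}\tilde\ell_e^{\theta_1}(\phi_e^{\theta_1},u)\,\d u\right]$, i.e., to show that the residual $\sum_{e\in p}\int_0^{\phi_e^{\theta_2}}\left[\frac{\partial}{\partial\phi_e^{\theta_1}}\tilde\ell_e^{\theta_2}(\phi_e^{\theta_1},u)-\frac{\partial}{\partial u}\tilde\ell_e^{\theta_1}(\phi_e^{\theta_1},u)\right]\d u$ is the same for all $p$ in a common commodity $\P_k$ (so that $V-U_1$ is constant along $\mathcal{A}_1$). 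This is where the hypothesis enters. Note first that differentiating this residual with respect to any $f_q^{\theta_2}$ yields exactly $\sum_{e\in p\cap q}\left[\frac{\partial}{\partial\phi_e^{\theta_1}}\tilde\ell_e^{\theta_2}-\frac{\partial}{\partial\phi_e^{\theta_2}}\tilde\ell_e^{\theta_1}\right](\phi_e^{\theta_1},\phi_e^{\theta_2})$, which vanishes by the hypothesis applied to the pair $(p,q)$; hence the residual does not depend on $a_{-1}$ at all. The clean way to finish is then to upgrade the stated path--intersection condition to the per--edge identity $\frac{\partial}{\partial\phi_e^{\theta_1}}\tilde\ell_e^{\theta_2}=\frac{\partial}{\partial\phi_e^{\theta_2}}\tilde\ell_e^{\theta_1}$: specialising to $p_1=p_2=p$ gives $\sum_{e\in p}\left[\frac{\partial}{\partial\phi_e^{\theta_1}}\tilde\ell_e^{\theta_2}-\frac{\partial}{\partial\phi_e^{\theta_2}}\tilde\ell_e^{\theta_1}\right]=0$, a sum of terms each depending only on the flows on a single edge of $p$; forcing each term to vanish makes the bracket inside the residual integral identically zero, so $\partial V/\partial f_p^{\theta_1}=\partial U_1/\partial f_p^{\theta_1}$ exactly.

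The step I expect to be the real obstacle is precisely this last reduction. The residual integral evaluates the symmetry defect $\frac{\partial}{\partial\phi_e^{\theta_1}}\tilde\ell_e^{\theta_2}-\frac{\partial}{\partial\phi_e^{\theta_2}}\tilde\ell_e^{\theta_1}$ at intermediate edge--flow values $u\in[0,\phi_e^{\theta_2}]$, whereas the hypothesis as written only constrains path--intersection sums of that defect at the realised flow; moreover, a sum over a path being zero only forces each per--edge term to be \emph{constant} in the corresponding edge flows unless one also exploits the full quantifier over all $p_1,p_2\in\P$ (for instance by choosing two paths meeting in a single edge) or reads the condition as an identity in the flow variables. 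Making the passage from ``all path--intersection sums of the defect vanish'' to ``the defect vanishes on every edge, at all flow values'' fully rigorous is the delicate point; once that is granted, the remainder is routine bookkeeping with Leibniz's rule and the fundamental theorem of calculus.
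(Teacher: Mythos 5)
Your overall route is the same as the paper's: reduce the potential identity to the gradient identities $\partial V/\partial f_p^{\theta_i}=\partial U_i/\partial f_p^{\theta_i}$ (then integrate along a segment in the convex set $\mathcal{A}_i$), observe that the $\theta_2$-identity holds unconditionally, and reduce the $\theta_1$-identity to the vanishing of the residual
$\Psi_p=\sum_{e\in p}\int_0^{\phi_e^{\theta_2}}\bigl[\tfrac{\partial}{\partial\phi_e^{\theta_1}}\tilde\ell_e^{\theta_2}(\phi_e^{\theta_1},u)-\tfrac{\partial}{\partial u}\tilde\ell_e^{\theta_1}(\phi_e^{\theta_1},u)\bigr]\d u$.
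You even compute the one fact that is needed: $\partial\Psi_p/\partial f_q^{\theta_2}=\sum_{e\in p\cap q}\bigl[\tfrac{\partial}{\partial\phi_e^{\theta_1}}\tilde\ell_e^{\theta_2}-\tfrac{\partial}{\partial\phi_e^{\theta_2}}\tilde\ell_e^{\theta_1}\bigr]=0$ for every $q\in\P$, by the hypothesis. But you then abandon this and try instead to upgrade the path-sum hypothesis to the per-edge identity $\tfrac{\partial}{\partial\phi_e^{\theta_1}}\tilde\ell_e^{\theta_2}=\tfrac{\partial}{\partial\phi_e^{\theta_2}}\tilde\ell_e^{\theta_1}$, correctly note that this upgrade does not follow from the hypothesis, and leave the proof unfinished. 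That is a genuine gap, and the detour through the per-edge identity is the wrong move: it would prove a strictly weaker lemma (essentially the hypothesis of Corollary~\ref{cor:convexprogramming} rather than that of Lemma~\ref{lemma:sufficiency}).

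The step you are missing is the one the paper uses to close the argument, and you already have all its ingredients. Since $\partial\Psi_p/\partial f_q^{\theta_2}=0$ for every $q$, the function $\Psi_p$ is constant along the straight segment joining the given $(f_q^{\theta_2})_{q\in\P}$ to the all-zero vector (the hypothesis is an identity in the flow variables, so it holds at every point of that segment --- this also disposes of your worry about ``intermediate edge-flow values''). At the endpoint $f^{\theta_2}=0$ every edge flow $\phi_e^{\theta_2}$ is zero, so each integral $\int_0^{\phi_e^{\theta_2}}$ in $\Psi_p$ is an integral over a degenerate interval and $\Psi_p=0$ there. Hence $\Psi_p\equiv 0$, i.e.\ $\partial V/\partial f_p^{\theta_1}=\sum_{e\in p}\tilde\ell_e^{\theta_1}(\phi_e^{\theta_1},\phi_e^{\theta_2})=\partial U_1/\partial f_p^{\theta_1}$ exactly, with no per-edge symmetry required. (Your weaker target --- that $\Psi_p$ be merely constant across $p\in\P_k$ --- is not needed and is no easier to reach directly.) With that insertion your argument coincides with the paper's; your preliminary collapse of the double integral to $\int_0^{\phi_e^{\theta_1}}\tilde\ell_e^{\theta_1}(u,0)\,\d u+\int_0^{\phi_e^{\theta_2}}\tilde\ell_e^{\theta_2}(\phi_e^{\theta_1},u)\,\d u$ is a correct and slightly tidier way to organize the differentiation.
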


\begin{proof} 
Notice that for all $p\in\P$, we get
\begin{small}
\begin{equation} \label{longequation:2}
\begin{split}
&\hspace{-.3in}\frac{\partial V((f_{p'}^{\theta'})_{p'\in\P,\theta'\in\Theta})}{\partial f_p^{\theta_1}} \\&=\frac{\partial}{\partial f_p^{\theta_1}}\bigg(\sum_{e\in\E}\bigg[\int_0^{\phi_e^{\theta_1}} \tilde{\ell}_e^{\theta_1}(u_1,\phi_e^{\theta_2}) \d u_1\hspace{-.03in}+\hspace{-.03in}\int_0^{\phi_e^{\theta_2}} \tilde{\ell}_e^{\theta_2}(\phi_e^{\theta_1},u_2) \d u_2 \hspace{-.03in}-\hspace{-.03in}\int_0^{\phi_e^{\theta_2}}\hspace{-.05in}\int_0^{\phi_e^{\theta_1}}\hspace{-.05in} \frac{\partial}{\partial u_2 } \tilde{\ell}_e^{\theta_1} (u_1,u_2) \d u_1 \d u_2 \bigg] \bigg)\\&=
\sum_{e\in p}\bigg[ \tilde{\ell}_e^{\theta_1}(\phi_e^{\theta_1},\phi_e^{\theta_2})+\int_0^{\phi_e^{\theta_2}} \hspace{-.07in} \frac{\partial}{\partial \phi_e^{\theta_1}} \tilde{\ell}_e^{\theta_2}(\phi_e^{\theta_1},u_2) \d u_2 - \int_0^{\phi_e^{\theta_2}} \hspace{-.07in} \frac{\partial}{\partial u_2 } \tilde{\ell}_e^{\theta_1} (\phi_e^{\theta_1},u_2) \d u_2 \bigg]
\\&= \sum_{e\in p}\tilde{\ell}_e^{\theta_1}(\phi_e^{\theta_1},\phi_e^{\theta_2})+\sum_{e\in p}\int_0^{\phi_e^{\theta_2}} \left[\frac{\partial}{\partial \phi_e^{\theta_1}} \tilde{\ell}_e^{\theta_2}(\phi_e^{\theta_1},u_2)- \frac{\partial}{\partial u_2 } \tilde{\ell}_e^{\theta_1} (\phi_e^{\theta_1},u_2) \right] \d u_2.
\end{split}
\end{equation}
\end{small}
Now, let us define
\begin{equation*}
\begin{split}
\Psi((&\phi_e^{\theta_1})_{e\in\E},(\phi_e^{\theta_2})_{e\in\E})=
\sum_{e\in p}\int_0^{\phi_e^{\theta_2}} \left[\frac{\partial}{\partial \phi_e^{\theta_1}} \tilde{\ell}_e^{\theta_2}(\phi_e^{\theta_1},u)- \frac{\partial}{\partial u } \tilde{\ell}_e^{\theta_1} (\phi_e^{\theta_1},u) \right] \d u.
\end{split}
\end{equation*}
We have
\begin{equation*}
\begin{split}
\frac{\partial \Psi((\phi_e^{\theta_1})_{e\in\E},(\phi_e^{\theta_2})_{e\in\E})}{\partial f_{\hat{p}}^{\theta_2}}=
\hspace{-.1in}\sum_{e\in p\cap \hat{p}}\hspace{-.04in}\left[\frac{\partial}{\partial \phi_e^{\theta_1}} \tilde{\ell}_e^{\theta_2}(\phi_e^{\theta_1},\phi_e^{\theta_2})\hspace{-.03in}-\hspace{-.03in} \frac{\partial}{\partial u } \tilde{\ell}_e^{\theta_1} (\phi_e^{\theta_1},\phi_e^{\theta_2}) \right]\hspace{-.03in}=\hspace{-.03in}0,
\end{split}
\end{equation*}
for all $\hat{p}\in\P$. Noticing that $\phi_e^{\theta_2}=\sum_{\hat{p}\in\P:e\in \hat{p}} f_{\hat{p}}^{\theta_2}$ for all $e\in\E$, we get
\begin{equation*}
\begin{split}
\frac{\partial \Psi((\phi_e^{\theta_1})_{e\in\E},(\phi_e^{\theta_2})_{e\in\E})}{\partial \phi_e^{\theta_2}}&= \hspace{-.1in}\sum_{\hat{p}\in\P:e\in \hat{p}} \hspace{-.1in}\frac{\partial \Psi((\phi_e^{\theta_1})_{e\in\E},(\phi_e^{\theta_2})_{e\in\E})}{\partial f_{\hat{p}}^{\theta_2}}=0, \;\;\; \forall e\in\E.
\end{split}
\end{equation*}
Thus, $\Psi((\phi_e^{\theta_1})_{e\in\E},(\phi_e^{\theta_2})_{e\in\E}) = \Psi((\phi_e^{\theta_1})_{e\in\E},0) = 0$. Setting $\Psi((\phi_e^{\theta_1})_{e\in\E},(\phi_e^{\theta_2})_{e\in\E})=0$ (see definition above) inside~\eqref{longequation:2} results in
\begin{equation} \label{eqn:equality:derivativeVandU1}
\begin{split}
\frac{\partial V((f_{p'}^{\theta'})_{p'\in\P,\theta'\in\Theta})}{\partial f_p^{\theta_1}} &= \sum_{e\in p}\tilde{\ell}_e^{\theta_1}(\phi_e^{\theta_1},\phi_e^{\theta_2})
= \frac{\partial U_1((f_{p'}^{\theta_1})_{p'\in\P},(f_{p'}^{\theta_2})_{p'\in\P})}{\partial f_p^{\theta_1}} ,
\end{split}
\end{equation}
where the partial derivatives of $U_1$ can be computed from its definition in~\eqref{eqn:definition:U}.  Let $(f_{p'}^{\theta_1})_{p'\in\P}$ and $(\bar{f}_{p'}^{\theta_1})_{p'\in\P}$ be arbitrary points in set of actions $\mathcal{A}_1$.  Furthermore, let $r:[0,1]\rightarrow \mathcal{A}_1$ be a continuously differentiable mapping (i.e., $r\in\mathcal{C}^1$) such that $r(0)=(\bar{f}_{p'}^{\theta_1})_{p'\in\P}$ and $r(1)=(f_{p'}^{\theta_1})_{p'\in\P}$ which remains inside $\mathcal{A}_1\subseteq\mathbb{R}^{|\P|}$ for all $t\in(0,1)$. We define $\mathrm{graph}(r)$ as the collection of all ordered pairs $(t,r(t))$ for all $t\in[0,1]$, which denotes a continuous path that connects $(f_{p'}^{\theta_1})_{p'\in\P}$ and $(\bar{f}_{p'}^{\theta_1})_{p'\in\P}$. We know that at least one such mapping exists because $\mathcal{A}_1$ is a simply connected set for all $i\in\N$. Hence, we have 
\begin{equation*}
\begin{split}
\int_{\mathrm{graph}(r)} \left[\frac{\partial V(a_1,a_2)}{\partial a_1}\bigg|_{a_1=r}\right]^\top \d r
&=\int_0^1 \left[\frac{\partial V(a_1,a_2)}{\partial a_1}\bigg|_{a_1=r(t)}\right]^\top\frac{\partial r(t)}{\partial t} \d t
\\[-.1em]&=\int_0^1 \left[\frac{\d}{\d t}V(r(t),a_2)\right] \d t
\\[-.1em]&=V(r(1),a_2)-V(r(0),a_2)
\\[-.1em]&=V((f_{p'}^{\theta_1})_{p'\in\P},(f_{p'}^{\theta_2})_{p'\in\P}) -V((\bar{f}_{p'}^{\theta_1})_{p'\in\P},(f_{p'}^{\theta_2})_{p'\in\P}),
\end{split}
\end{equation*}
where the second to last equality is a direct consequence of the fundamental theorem of calculus~\cite[p.\,1257]{tan2009multivariable}. Note that this equality holds irrespective of the selected path. Therefore, 
\begin{equation*}
\begin{split}
V((f_{p'}^{\theta_1})_{p'\in\P},(f_{p'}^{\theta_2})_{p'\in\P}) &-V((\bar{f}_{p'}^{\theta_1})_{p'\in\P},(f_{p'}^{\theta_2})_{p'\in\P})\\ &=\int_{\mathrm{graph}(r)} \left[\frac{\partial V(a_1,a_2)}{\partial a_1}\bigg|_{a_1=r}\right]^\top\d r\\&=\int_{\mathrm{graph}(r)} \left[\frac{\partial U_1(a_1,a_2)}{\partial a_1}\bigg|_{a_1=r}\right]^\top\d r \hspace{.8in} \mbox{by~\eqref{eqn:equality:derivativeVandU1}}\\ &=U_1((f_{p'}^{\theta_1})_{p'\in\P},(f_{p'}^{\theta_2})_{p'\in\P}) -U_1((\bar{f}_{p'}^{\theta_1})_{p'\in\P},(f_{p'}^{\theta_2})_{p'\in\P}),
\end{split}
\end{equation*}
Similarly, we can also prove
\begin{small}
\begin{equation} \label{longequation:3}
\begin{split}
&\hspace{-.2in}\frac{\partial V((f_{p'}^{\theta'})_{p'\in\P,\theta'\in\Theta})}{\partial f_p^{\theta_2}} 
\\&=\frac{\partial}{\partial f_p^{\theta_2}}\bigg(\sum_{e\in\E}\bigg[\int_0^{\phi_e^{\theta_1}} \hspace{-.05in}\tilde{\ell}_e^{\theta_1}(u_1,\phi_e^{\theta_2}) \d u_1\hspace{-.03in}+\hspace{-.03in}\int_0^{\phi_e^{\theta_2}} \hspace{-.05in}\tilde{\ell}_e^{\theta_2}(\phi_e^{\theta_1},u_2) \d u_2 \hspace{-.03in}-\hspace{-.03in}\int_0^{\phi_e^{\theta_2}}\hspace{-.07in}\int_0^{\phi_e^{\theta_1}} \hspace{-.05in}\frac{\partial}{\partial u_2 } \tilde{\ell}_e^{\theta_1} (u_1,u_2) \d u_1 \d u_2 \bigg] \bigg)\\&=
\sum_{e\in p}\bigg[\int_0^{\phi_e^{\theta_1}} \hspace{-.07in} \frac{\partial}{\partial \phi_e^{\theta_2}} \tilde{\ell}_e^{\theta_1}(u_1,\phi_e^{\theta_2}) \d u_1 + \tilde{\ell}_e^{\theta_2}(\phi_e^{\theta_1},\phi_e^{\theta_2}) - \int_0^{\phi_e^{\theta_1}} \hspace{-.07in} \frac{\partial}{\partial \phi_e^{\theta_2}} \tilde{\ell}_e^{\theta_1} (u_1,\phi_e^{\theta_2}) \d u_1 \bigg]\\&=\sum_{e\in p}\tilde{\ell}_e^{\theta_2}(\phi_e^{\theta_1},\phi_e^{\theta_2}) ,
\end{split}
\end{equation}
\end{small}
which results in
\begin{equation*}
\begin{split}
\frac{\partial V((f_{p'}^{\theta_1})_{p'\in\P},(f_{p'}^{\theta_2})_{p'\in\P})}{\partial f_p^{\theta_2}} &\hspace{-.05in}=\hspace{-.05in} \sum_{e\in p}\tilde{\ell}_e^{\theta_2}(\phi_e^{\theta_1},\phi_e^{\theta_2})
\hspace{-.05in}=\hspace{-.05in} \frac{\partial U_2((f_{p'}^{\theta_1})_{p'\in\P},(f_{p'}^{\theta_2})_{p'\in\P})}{\partial f_p^{\theta_2}},
\end{split}
\end{equation*}
and, consequently, 
\begin{equation*}
\begin{split}
V((f_{p'}^{\theta_1})_{p'\in\P},(f_{p'}^{\theta_2})_{p'\in\P}) &-V((f_{p'}^{\theta_1})_{p'\in\P},(\bar{f}_{p'}^{\theta_2})_{p'\in\P})\\&=U_2((f_{p'}^{\theta_1})_{p'\in\P},(f_{p'}^{\theta_2})_{p'\in\P})-U_2((f_{p'}^{\theta_1})_{p'\in\P},(\bar{f}_{p'}^{\theta_2})_{p'\in\P}).
\end{split}
\end{equation*}
This concludes the proof.
\end{proof}

Now, combing the previous two lemmas results in the main result of this section.

\begin{theorem} \label{tho:combined} Assume that $|\Theta|=2$. The abstract game admits a potential function $V\in\mathcal{C}^2$ if and only if
$$
\sum_{e\in p_1\cap p_2} \left[ \frac{\partial}{\partial \phi_e^{\theta_1}}\tilde{\ell}_e^{\theta_2}(\phi_e^{\theta_1},\phi_e^{\theta_2}) - \frac{\partial}{\partial \phi_e^{\theta_2}}\tilde{\ell}_e^{\theta_1}(\phi_e^{\theta_1},\phi_e^{\theta_2}) \right]=0,
$$
for all $p_1,p_2\in\P$.
\end{theorem}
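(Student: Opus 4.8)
The plan is to read off Theorem~\ref{tho:combined} directly from Lemma~\ref{lemma:necessity} and Lemma~\ref{lemma:sufficiency}: together they already establish both implications once we specialize to $|\Theta|=2$, and the only nontrivial thing left to verify is a regularity claim in the sufficiency direction.

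For the forward implication (``only if''), I would assume the abstract game admits a potential function $V\in\mathcal{C}^2$ and simply invoke Lemma~\ref{lemma:necessity}. Because $|\Theta|=2$ forces $\N=\{1,2\}$, and because the bracketed expression there is antisymmetric in $(i,j)$ and vanishes when $i=j$, the content of Lemma~\ref{lemma:necessity} reduces to the single identity
$$
\sum_{e\in p_1\cap p_2}\left[\frac{\partial}{\partial\phi_e^{\theta_1}}\tilde{\ell}_e^{\theta_2}(\phi_e^{\theta_1},\phi_e^{\theta_2})-\frac{\partial}{\partial\phi_e^{\theta_2}}\tilde{\ell}_e^{\theta_1}(\phi_e^{\theta_1},\phi_e^{\theta_2})\right]=0,\qquad\forall p_1,p_2\in\P,
$$
which is exactly the stated condition.

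For the reverse implication (``if''), I would assume the displayed identity and apply Lemma~\ref{lemma:sufficiency}, which produces an explicit $V$ and shows it is a potential function for the abstract game. The one gap is that Lemma~\ref{lemma:sufficiency} does not by itself assert $V\in\mathcal{C}^2$, so I would close that gap as follows. The formula for $V$ is a finite sum over $e\in\E$ of integrals whose integrands are $\tilde{\ell}_e^{\theta_1}$, $\tilde{\ell}_e^{\theta_2}$, and $\partial\tilde{\ell}_e^{\theta_1}/\partial u_2$; by Assumption~\ref{assum:1}~(\textit{i}) these are continuous, so $V$ is $\mathcal{C}^1$ and differentiation under the integral sign is legitimate. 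The computation already performed inside the proof of Lemma~\ref{lemma:sufficiency} (where the symmetry identity was used precisely to annihilate the auxiliary term $\Psi$) gives the clean expressions $\partial V/\partial f_p^{\theta_1}=\sum_{e\in p}\tilde{\ell}_e^{\theta_1}(\phi_e^{\theta_1},\phi_e^{\theta_2})$ and $\partial V/\partial f_p^{\theta_2}=\sum_{e\in p}\tilde{\ell}_e^{\theta_2}(\phi_e^{\theta_1},\phi_e^{\theta_2})$; each of these is a composition of a $\mathcal{C}^1$ cost function with the linear map $f\mapsto(\phi_e^{\theta'})_{e\in\E,\theta'\in\Theta}$, hence $\mathcal{C}^1$. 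Thus $\nabla V\in\mathcal{C}^1$, i.e., $V\in\mathcal{C}^2$, and the proof is complete.

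I do not anticipate a genuine obstacle here, since Theorem~\ref{tho:combined} is essentially a corollary of the two preceding lemmas. The only point that is not completely automatic is the $\mathcal{C}^2$ bookkeeping just described: one must notice that smoothness of the \emph{gradient} of $V$, rather than smoothness read off term-by-term from the integral formula, is what delivers the required $\mathcal{C}^2$ regularity, and that this needs nothing beyond Assumption~\ref{assum:1}~(\textit{i}) and the symmetry hypothesis.
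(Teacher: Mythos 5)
Your proposal is correct and follows essentially the same route as the paper: the ``only if'' direction is read off from Lemma~\ref{lemma:necessity} specialized to $|\Theta|=2$, and the ``if'' direction invokes the explicit potential of Lemma~\ref{lemma:sufficiency}, with the $\mathcal{C}^2$ regularity of $V$ traced back to Assumption~\ref{assum:1}~(\textit{i}). The paper simply asserts that last regularity point in one sentence, whereas you spell out the bookkeeping via the gradient formulas; that is a harmless (and slightly more careful) elaboration, not a different argument.
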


\begin{proof} The proof easily follows from Lemmas~\ref{lemma:necessity} and~\ref{lemma:sufficiency}. Note that the potential function presented in Lemma~\ref{lemma:sufficiency} belongs to $\mathcal{C}^2$ due to Assumption~\ref{assum:1}~(\textit{i}).
\end{proof}

Following a basic property of potential games, it is easy to prove the following corollary which shows that the process of finding a Nash equilibrium of the heterogeneous routing game is equivalent to solving an optimization problem. 

\begin{corollary} \label{cor:optimization} Assume that $|\Theta|=2$. Furthermore, let 
$$
\sum_{e\in p} \left[ \frac{\partial}{\partial \phi_e^{\theta_1}}\tilde{\ell}_e^{\theta_2}(\phi_e^{\theta_1},\phi_e^{\theta_2}) - \frac{\partial}{\partial \phi_e^{\theta_2}}\tilde{\ell}_e^{\theta_1}(\phi_e^{\theta_1},\phi_e^{\theta_2}) \right]=0,
$$
for all $p \in\P$. If $f=(f_{p'}^{\theta'})_{p'\in\P,\theta'\in\Theta}$ is a solution of the optimization problem
\begin{equation*}
\begin{split}
\min \hspace{.1in} & V((f_{p'}^{\theta_1})_{p'\in\P},(f_{p'}^{\theta_2})_{p'\in\P}),  \\
\mathrm{s.t.} \hspace{.1in}
& \hspace{-.1in}\sum_{p\in\P:e\in p}f_p^{\theta_1}=\phi_e^{\theta_1} \mbox{ and } \hspace{-.1in}\sum_{p\in\P:e\in p}f_p^{\theta_2}=\phi_e^{\theta_2}, \;\forall e\in\E, \\
& \sum_{p\in\P_k} f_p^{\theta_1}=F_k^{\theta_1} \mbox{ and } \sum_{p\in\P_k} f_p^{\theta_2}=F_k^{\theta_2}, \;\forall k\in\K, \\
%& \phi_e^{\theta_1},\phi_e^{\theta_2}\in\mathbb{R}_{\geq 0}, \;\forall e\in\E, \\
& f_p^{\theta_1},f_p^{\theta_2}\geq 0, \;\forall p\in\P,
\end{split}
\end{equation*}
where $V((f_{p'}^{\theta_1})_{p'\in\P},(f_{p'}^{\theta_2})_{p'\in\P})$ is defined in Lemma~\ref{lemma:sufficiency}, then $f=(f_p^\theta)_{p\in\P,\theta\in\Theta}$ is a Nash equilibrium of the heterogeneous routing game.
\end{corollary}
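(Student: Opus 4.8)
The plan is to recognize the displayed program as the minimization of the function $V$ from Lemma~\ref{lemma:sufficiency} over the joint action set $\mathcal{A}_1\times\mathcal{A}_2$ of the abstract game, to invoke the stated identity together with Theorem~\ref{tho:combined} to conclude that $V$ is a potential function, to use the elementary fact that a global minimizer of a potential is a Nash equilibrium of the associated game, and finally to translate this back to the routing game via Lemma~\ref{lemma:equivalence}.

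First I would observe that in the displayed optimization problem the constraints $\sum_{p\in\P:e\in p}f_p^{\theta_i}=\phi_e^{\theta_i}$ merely define the auxiliary edge-flow variables that appear inside $V$, while the constraints $\sum_{p\in\P_k}f_p^{\theta_i}=F_k^{\theta_i}$ and $f_p^{\theta_i}\ge 0$ are precisely the membership $a_i=(f_p^{\theta_i})_{p\in\P}\in\mathcal{A}_i$ for $i\in\{1,2\}$. Hence the program is $\min_{(a_1,a_2)\in\mathcal{A}_1\times\mathcal{A}_2} V(a_1,a_2)$; since $V\in\mathcal{C}^2$ by Assumption~\ref{assum:1}~(\textit{i}) and $\mathcal{A}_1\times\mathcal{A}_2$ is nonempty and compact, a solution $f$ exists, so the hypothesis ``$f$ is a solution'' is not vacuous.

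Next, the stated identity is the condition of Theorem~\ref{tho:combined} (equivalently the hypothesis of Lemma~\ref{lemma:sufficiency}), so $V$ is a potential function for the abstract game: $V(a_i,a_{-i})-V(\bar a_i,a_{-i})=U_i(a_i,a_{-i})-U_i(\bar a_i,a_{-i})$ for all $i$, all $a_i,\bar a_i\in\mathcal{A}_i$, and all $a_{-i}$. Writing $a^\ast=((f_{p'}^{\theta_1})_{p'\in\P},(f_{p'}^{\theta_2})_{p'\in\P})$ for the minimizer, for every $i$ and every $\bar a_i\in\mathcal{A}_i$ the potential identity followed by the global optimality of $a^\ast$ gives
\begin{equation*}
U_i(a_i^\ast,a_{-i}^\ast)-U_i(\bar a_i,a_{-i}^\ast)=V(a_i^\ast,a_{-i}^\ast)-V(\bar a_i,a_{-i}^\ast)\le 0,
\end{equation*}
so $a_i^\ast\in\argmin_{a_i\in\mathcal{A}_i}U_i(a_i,a_{-i}^\ast)$; that is, $a^\ast$ is a pure strategy Nash equilibrium of the abstract game in the sense of Definition~\ref{def:PureNash}. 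Lemma~\ref{lemma:equivalence} then yields that $f=(f_p^{\theta})_{p\in\P,\theta\in\Theta}$ is a Nash equilibrium of the heterogeneous routing game, which is the claim.

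I do not expect a genuine obstacle: the argument simply assembles Theorem~\ref{tho:combined}, the textbook fact that minimizers of a potential are equilibria, and Lemma~\ref{lemma:equivalence}. The only points needing (routine) care are the elimination of the auxiliary variables $\phi_e^{\theta_i}$ from the feasible set and the direction in the potential-game step---here the players minimize, so it is a global \emph{minimizer} of $V$ that is an equilibrium. A variant that parallels the proof of Lemma~\ref{lemma:equivalence} is to write the Karush--Kuhn--Tucker conditions of the displayed convex program directly: from the computations in \eqref{longequation:2} and \eqref{longequation:3} one has $\partial V/\partial f_p^{\theta_2}=\sum_{e\in p}\tilde\ell_e^{\theta_2}(\phi_e^{\theta_1},\phi_e^{\theta_2})=\ell_p^{\theta_2}(f)$ unconditionally, and, using the stated identity to cancel the residual term in \eqref{longequation:2}, also $\partial V/\partial f_p^{\theta_1}=\ell_p^{\theta_1}(f)$; stationarity in $f_p^{\theta_i}$ together with complementary slackness for the nonnegativity constraints then reproduces the equilibrium inequalities of Definition~\ref{def:Nashrouting} for each type, exactly as in the proof of Lemma~\ref{lemma:equivalence}.
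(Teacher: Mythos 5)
Your proposal is correct and takes essentially the same route as the paper: the paper's proof of this corollary is a one-line appeal to the fact that a minimizer of the potential function is a pure strategy Nash equilibrium of a potential game (citing Monderer and Shapley), and you simply spell this out---the potential identity plus global optimality of the minimizer yields the best-response property for each player, and Lemma~\ref{lemma:equivalence} transfers it to the routing game. The only caveat, inherited from the paper's own statement rather than introduced by you, is that the displayed hypothesis (a sum over $e\in p$ for each single path $p$) is formally only the diagonal case $p_1=p_2$ of the condition in Lemma~\ref{lemma:sufficiency}, which requires the sum over $e\in p_1\cap p_2$ to vanish for all pairs of paths.
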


\begin{proof} The proof is consequence of the fact that a minimizer of the potential function is a pure strategy Nash equilibrium of a potential game; see~\cite{monderer1996potential}.
\end{proof}

Notice that so far we have proved that a minimizer of the potential function is a Nash equilibrium but not the other way round. Now, we are ready to prove this whenever the potential function is convex. 

\begin{corollary} \label{cor:convexprogramming} Let $|\Theta|=2$ and
$$
\frac{\partial}{\partial \phi_e^{\theta_1}}\tilde{\ell}_e^{\theta_2}(\phi_e^{\theta_1},\phi_e^{\theta_2})=\frac{\partial}{\partial \phi_e^{\theta_2}}\tilde{\ell}_e^{\theta_1}(\phi_e^{\theta_1},\phi_e^{\theta_2}),
$$
for all $e\in\E$. Furthermore, assume that the potential function $V((f_{p'}^{\theta_1})_{p'\in\P},(f_{p'}^{\theta_2})_{p'\in\P})$, defined in Lemma~\ref{lemma:sufficiency}, is a convex function. Then $f=(f_{p'}^{\theta'})_{p'\in\P,\theta'\in\Theta}$ is a Nash equilibrium of the heterogeneous routing game if and only if it is a solution of the convex optimization problem
\begin{equation*}
\begin{split}
\min \hspace{.1in} & \;V((f_{p'}^{\theta_1})_{p'\in\P},(f_{p'}^{\theta_2})_{p'\in\P}),  \\
\mathrm{s.t.} \hspace{.1in}
& \hspace{-.1in}\sum_{p\in\P:e\in p}f_p^{\theta_1}=\phi_e^{\theta_1} \mbox{ and } \hspace{-.1in}\sum_{p\in\P:e\in p}f_p^{\theta_2}=\phi_e^{\theta_2}, \;\forall e\in\E, \\
& \sum_{p\in\P_k} f_p^{\theta_1}=F_k^{\theta_1} \mbox{ and } \sum_{p\in\P_k} f_p^{\theta_2}=F_k^{\theta_2}, \;\forall k\in\K, \\
%& \phi_e^{\theta_1},\phi_e^{\theta_2}\in\mathbb{R}_{\geq 0}, \;\forall e\in\E, \\
& f_p^{\theta_1},f_p^{\theta_2}\geq 0, \;\forall p\in\P.
\end{split}
\end{equation*}
\end{corollary}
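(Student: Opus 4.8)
The plan is to combine two facts: the forward implication of Corollary~\ref{cor:optimization} (a minimizer of $V$ is a Nash equilibrium of the heterogeneous routing game), and the convexity of $V$ together with the convexity of the feasible region, which will give the reverse implication via Lemma~\ref{lemma:equivalence}. Note first that the hypothesis here is pointwise on each edge, $\frac{\partial}{\partial \phi_e^{\theta_1}}\tilde{\ell}_e^{\theta_2}=\frac{\partial}{\partial \phi_e^{\theta_2}}\tilde{\ell}_e^{\theta_1}$ for all $e\in\E$, which trivially implies the path-sum condition of Lemma~\ref{lemma:sufficiency} and Corollary~\ref{cor:optimization} (each summand vanishes). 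Hence $V$ as defined in Lemma~\ref{lemma:sufficiency} is a genuine potential function for the abstract game, and any solution of the displayed optimization problem is a Nash equilibrium of the heterogeneous routing game — this is exactly the ``if'' direction, already delivered by Corollary~\ref{cor:optimization}.

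For the converse, suppose $f=(f_{p'}^{\theta'})_{p'\in\P,\theta'\in\Theta}$ is a Nash equilibrium of the heterogeneous routing game. By Lemma~\ref{lemma:equivalence}, the corresponding tuple $((f_{p'}^{\theta_1})_{p'\in\P},(f_{p'}^{\theta_2})_{p'\in\P})$ is a pure strategy Nash equilibrium of the abstract game, i.e. for each $i\in\{1,2\}$, $a_i$ minimizes $U_i(\cdot,a_{-i})$ over $\mathcal{A}_i$. Since $V$ is a potential function, $V(\cdot,a_{-i})$ and $U_i(\cdot,a_{-i})$ differ by a constant (in $a_i$), so $a_i$ also minimizes $V(\cdot,a_{-i})$ over $\mathcal{A}_i$ for each $i$. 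Thus $f$ is a coordinatewise (blockwise) minimizer of $V$ over the product set $\mathcal{A}_1\times\mathcal{A}_2$, which is precisely the feasible set of the optimization problem (the edge-flow equations $\phi_e^{\theta}=\sum_{p\ni e}f_p^{\theta}$ are just definitions introduced as extra variables and do not change the feasible set of path flows, which is convex). The standard fact that for a convex function a blockwise (partial) minimizer over a product of convex sets is a global minimizer then finishes the argument: fixing $a_2$ and using convexity in $a_1$, $V(a_1,a_2)\ge V(f_{\theta_1},a_2)$; fixing $a_1=f_{\theta_1}$ and using convexity in $a_2$, $V(f_{\theta_1},a_2)\ge V(f_{\theta_1},f_{\theta_2})$; combining gives $V(a_1,a_2)\ge V(f)$ for all feasible $(a_1,a_2)$, so $f$ solves the optimization problem.

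Concretely, I would write: (1) observe the pointwise hypothesis implies the hypothesis of Corollary~\ref{cor:optimization}, so the ``if'' direction is immediate; (2) for ``only if'', invoke Lemma~\ref{lemma:equivalence} to pass to the abstract game, use the potential property to convert best responses on $U_i$ into best responses on $V$, and conclude $f$ is a partial minimizer of $V$; (3) invoke convexity of $V$ plus convexity of $\mathcal{A}_1$ and $\mathcal{A}_2$ to upgrade a partial minimizer to a global minimizer, hence a solution of the convex program. I would also remark that the program is indeed a convex optimization problem — convex objective, affine (hence convex) constraints — which justifies calling it ``convex.''

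The main obstacle, such as it is, is the partial-minimizer-to-global-minimizer step: one must be careful that $V$ is jointly convex (not merely separately convex in each block), which is exactly what the hypothesis ``$V$ is a convex function'' supplies, and that the feasible set genuinely factors as a product $\mathcal{A}_1\times\mathcal{A}_2$ with each factor convex — which holds because the flow-conservation constraints for type $\theta_1$ and type $\theta_2$ are decoupled (as already noted in Section~\ref{sec:problemformulation}). Everything else is bookkeeping: translating between the routing-game equilibrium, the abstract-game Nash equilibrium, and the variational characterization, all of which is handled by earlier results. I do not expect any nontrivial computation.
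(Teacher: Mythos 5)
Your route is genuinely different from the paper's. The paper proves this corollary (Appendix~\ref{appendix:A}) by a direct KKT analysis of the displayed program: it forms the Lagrangian, observes that under the pointwise symmetry hypothesis the stationarity condition in $\phi_e^{\theta_i}$ collapses to $\tilde{\ell}_e^{\theta_i}(\phi_e^{\theta_1},\phi_e^{\theta_2})=v_e^i$ (the extra integral term cancels), and then checks that the full KKT system together with complementary slackness and dual feasibility is literally the condition of Definition~\ref{def:Nashrouting}; since the program is convex, KKT is necessary and sufficient for optimality, which delivers both implications simultaneously. You instead argue structurally: the ``if'' direction is inherited from Corollary~\ref{cor:optimization} (the pointwise hypothesis trivially implies its summed hypothesis), and for ``only if'' you pass through Lemma~\ref{lemma:equivalence} to the abstract game, use the potential property to turn best responses for $U_i$ into blockwise minimality of $V$ over the product $\mathcal{A}_1\times\mathcal{A}_2$, and then upgrade a blockwise minimizer to a global one using joint convexity. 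This is a legitimate alternative, and it has the virtue of isolating exactly where the convexity of $V$ enters.

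There is, however, one step whose justification is wrong as written: the chain ``fixing $a_2$ and using convexity in $a_1$, $V(a_1,a_2)\ge V(f_{\theta_1},a_2)$.'' What you know is only that $f_{\theta_1}$ minimizes $V(\cdot\,,f_{\theta_2})$, not $V(\cdot\,,a_2)$ for an arbitrary feasible $a_2$, so this inequality does not follow; e.g.\ for $V(x,y)=(x-y)^2$ on $[0,1]^2$ the point $(1/2,1/2)$ is a blockwise (indeed global) minimizer, yet $V(0,0)=0<1/4=V(1/2,0)$. The conclusion you want is nonetheless true here, but the correct argument uses differentiability: since $V\in\mathcal{C}^2$, blockwise optimality over the convex sets $\mathcal{A}_1$ and $\mathcal{A}_2$ gives the variational inequalities $\nabla_{a_1}V(f)^\top(a_1-f_{\theta_1})\ge 0$ for all $a_1\in\mathcal{A}_1$ and $\nabla_{a_2}V(f)^\top(a_2-f_{\theta_2})\ge 0$ for all $a_2\in\mathcal{A}_2$; adding them (this is where the product structure of the feasible set is used) yields $\nabla V(f)^\top\left((a_1,a_2)-f\right)\ge 0$ on all of $\mathcal{A}_1\times\mathcal{A}_2$, which for a differentiable jointly convex $V$ is the first-order characterization of a global minimizer. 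Both joint convexity and smoothness are genuinely needed here (for separately convex, or convex but nonsmooth, objectives the blockwise-to-global implication can fail). With that repair your proof is complete.
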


\begin{proof} See~Appendix~\ref{appendix:A}. \end{proof}

\begin{remark} Note that Corollary~\ref{cor:convexprogramming} is proved at the price of a more conservative condition because the conditions in Corollary~\ref{cor:optimization} requires the summation of the differences between the derivatives of the cost functions to be equal to zero while Corollary~\ref{cor:convexprogramming} needs the individual differences to be equal to zero. Notice that Corollary~\ref{cor:convexprogramming} provides the same sufficient condition for characterizing the set of all equilibria as in~\cite{EngelsonLindbergAppliedOptimization2012,dafermos1972traffic}, but these references handle the general case of $|\Theta|\geq 2$ (specifically, see Proposition~1 and Theorem~1 in~\cite{EngelsonLindbergAppliedOptimization2012}). Therefore, we can see that the presented condition in Corollary~\ref{cor:optimization} is tighter than the results of~\cite{EngelsonLindbergAppliedOptimization2012,dafermos1972traffic} (since it is also a necessary condition); however, it is only valid for $|\Theta|=2$ in contrast. 
\end{remark}

\subsection{Example: Routing Game with Platooning Incentives} \label{subsec:Platoon:potential}
Let us examine the implications of Corollary~\ref{cor:convexprogramming} in the routing game with platooning incentives in Subsection~\ref{subsec:Platoon}. For the linearized model, we can easily calculate that
\begin{equation} \label{eqn:derivative:1}
\frac{\partial \tilde{\ell}_e^{\mathrm{c}}(\phi_e^{\mathrm{c}},\phi_e^{\mathrm{t}})}{\partial \phi_e^{\mathrm{t}}} = -L_ea_e/b_e^2,
\end{equation}
\begin{equation} \label{eqn:derivative:2}
\begin{split}
\frac{\partial \tilde{\ell}_e^{\mathrm{t}}(\phi_e^{\mathrm{c}},\phi_e^{\mathrm{t}})}{\partial \phi_e^{\mathrm{c}}} &= -L_ea_e/b_e^2+2c_0\alpha \gamma(0)b_ea_e\\&= -L_ea_e/b_e^2+2c_0\alpha b_ea_e.
\end{split}
\end{equation}
where the second equality follows from $\gamma_e(0)=1$, which holds because from the definition of the mapping $\gamma:\mathbb{R}_{\geq 0}\rightarrow [0,1]$, we know that in this case (i.e., when no trucks are using that edge) the air drag coefficient is equal to its nominal value. Therefore, the condition of Corollary~\ref{cor:convexprogramming} does not hold (unless $c_0=0$). Noting that if the problem of finding a Nash equilibrium in the heterogeneous routing game is numerically intractable, it might be highly unlikely for the drivers to figure out a Nash equilibrium in reasonable time (let alone an efficient one) and utilize it, which might result in wasting parts of the transportation network resources. Therefore, a natural question that comes to mind is whether it is possible to guarantee the existence of a potential function for a heterogeneous routing game by imposing appropriate tolls.

\section{Imposing Tolls to Guarantee the Existence of a Potential Function} \label{sec:tolls}
\subsection{Definition and Results}
Let us assume that a vehicle of type $\theta\in\Theta$ must pay a toll $\tilde{\tau}_e^\theta((\phi_e^{\theta'} )_{\theta'\in\Theta})$ for using an edge $e\in\E$, where $\phi_e^\theta=\sum_{p\in\P:e\in p} f_p^\theta$. Therefore, a vehicle using path $p\in\P_k$ endures a total cost of $\ell_p^\theta(f)+\tau_p^\theta(f)$, where $\tau_p(f)$ is the total amount of money that the vehicle must pay for using path $p$ and can be calculated as $\tau_p^\theta(f)=\sum_{e\in p}\tilde{\tau}_e^\theta((\phi_e^{\theta'})_{\theta'\in\Theta})$. The definition of a Nash equilibrium is slightly modified to account for the tolls. 

\begin{definition}\textsc{(Nash Equilibrium in Heterogeneous Routing Game with Tolls)} \label{def:Nashroutingtax} A flow vector $f=(f_{p'}^{\theta'})_{p'\in\P,\theta'\in\Theta}$ is a Nash equilibrium for the routing game with tolls if, for all $k\in\K$ and $\theta\in\Theta$, whenever $f_p^\theta>0$ for some path $p\in\P_k$, then $\ell_p^\theta(f)+\tau_p^\theta(f)\leq \ell_{p'}^\theta(f)+\tau_{p'}^\theta(f)$ for all $p'\in\P_k$. 
\end{definition}

Before stating the main result of this section, note that we can have both distinguishable and indistinguishable types. This characterization is of special interest when considering the implementation of tolls. For distinguishable types, we can impose individual tolls for each type. However, for indistinguishable types, the tolls are independent of the type. To give an example, if $\Theta=\{\mathrm{cars},\mathrm{trucks}\}$, we can impose different tolls for each group of vehicles while if $\Theta=\{\mathrm{patient\;drivers},\mathrm{impatient\;drivers}\}$, we cannot. Notice that in the case of indistinguishable types, one might argue that we cannot measure $\phi_e^{\theta_i}$ for each $\theta_i\in\Theta$ individually (because as we motivated the type of user may not be identified from physical traits). However, we can use surveys and historical data to extract the statistics of each type (e.g, to realize what ratio of the actual flow belongs to each type) but when calculating the tolls for each user we cannot force that user to participate in a survey. We treat these two cases separately. 

\begin{proposition} \textsc{(Distinguishable Types)} \label{prop:toll1} Assume that $|\Theta|=2$. The abstract game admits the potential function 
\begin{equation*}
\begin{split}
V((f_{p'}^{\theta_1})_{p'\in\P},(f_{p'}^{\theta_2})_{p'\in\P})=&\sum_{e\in\E}\bigg[\int_0^{\phi_e^{\theta_1}} \hspace{-.07in}(\tilde{\ell}_e^{\theta_1}(u_1,\phi_e^{\theta_2})+\tilde{\tau}_e^{\theta_1}(u_1,\phi_e^{\theta_2})) \d u_1\\&\hspace{.1in}+\hspace{-.03in}\int_0^{\phi_e^{\theta_2}}\hspace{-.07in} (\tilde{\ell}_e^{\theta_2}(\phi_e^{\theta_1},u_2)+\tilde{\tau}_e^{\theta_2}(\phi_e^{\theta_1},u_2)) \d u_2 \\&\hspace{.1in}-\hspace{-.03in}\int_0^{\phi_e^{\theta_2}}\hspace{-.07in}\int_0^{\phi_e^{\theta_1}} \hspace{-.07in}\frac{\partial}{\partial u_2 } (\tilde{\ell}_e^{\theta_1} (u_1,u_2)+\tilde{\tau}_e^{\theta_1} (u_1,u_2)) \d u_1 \d u_2 \bigg]
\end{split}
\end{equation*}
if 
\begin{equation*}
\begin{split}
\frac{\partial \tilde{\tau}_e^{\theta_1}(\phi_e^{\theta_1},\phi_e^{\theta_2})}{\partial \phi_e^{\theta_2}}-\frac{\partial \tilde{\tau}_e^{\theta_2}(\phi_e^{\theta_1},\phi_e^{\theta_2})}{\partial \phi_e^{\theta_1}}=\frac{\partial \tilde{\ell}_e^{\theta_2}(\phi_e^{\theta_1},\phi_e^{\theta_2})}{\partial \phi_e^{\theta_1}}-\frac{\partial \tilde{\ell}_e^{\theta_1}(\phi_e^{\theta_1},\phi_e^{\theta_2})}{\partial \phi_e^{\theta_2}},
\end{split}
\end{equation*}
for all $e\in\E$. 
\end{proposition}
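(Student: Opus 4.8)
The plan is to read Proposition~\ref{prop:toll1} as an immediate corollary of Lemma~\ref{lemma:sufficiency} applied to the \emph{effective} edge cost functions $\hat{\ell}_e^{\theta}:=\tilde{\ell}_e^{\theta}+\tilde{\tau}_e^{\theta}$. A heterogeneous routing game with tolls is itself a heterogeneous routing game whose edge cost of type $\theta$ on edge $e$ is $\hat\ell_e^{\theta}$: Definition~\ref{def:Nashroutingtax} is Definition~\ref{def:Nashrouting} verbatim with $\ell_p^\theta(f)+\tau_p^\theta(f)=\sum_{e\in p}\hat\ell_e^\theta$ in place of $\ell_p^\theta(f)$. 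Accordingly, the abstract game associated with the tolled routing game (built as in Definition~\ref{def:highlevelgame} and related to the routing game by Lemma~\ref{lemma:equivalence}) has utilities $U_i(a_i,a_{-i})=\sum_{e\in\E}\int_0^{\phi_e^{\theta_i}}\hat\ell_e^{\theta_i}(u,(\phi_e^{\theta_j})_{\theta_j\in\Theta\setminus\{\theta_i\}})\,\d u$, and the function $V$ displayed in the statement is precisely the potential function of Lemma~\ref{lemma:sufficiency} written for $\hat\ell$: substituting $\hat\ell=\tilde\ell+\tilde\tau$, each of the three integrals splits term by term into the three integrals appearing in the proposition. So the whole claim reduces to checking that Lemma~\ref{lemma:sufficiency} applies to $\hat\ell$. (We tacitly take $\tilde\tau_e^\theta\in\mathcal{C}^1$, so $\hat\ell_e^\theta\in\mathcal{C}^1$, which is what the integrals in $V$ and the differentiations in the proof of Lemma~\ref{lemma:sufficiency} require.)

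To verify the hypothesis of Lemma~\ref{lemma:sufficiency} for $\hat\ell$, first rearrange the condition assumed in the proposition: moving $\partial\tilde\ell_e^{\theta_1}/\partial\phi_e^{\theta_2}$ to the left and $\partial\tilde\tau_e^{\theta_2}/\partial\phi_e^{\theta_1}$ to the right turns it into
\begin{equation*}
\frac{\partial}{\partial\phi_e^{\theta_2}}\big(\tilde\ell_e^{\theta_1}+\tilde\tau_e^{\theta_1}\big)(\phi_e^{\theta_1},\phi_e^{\theta_2})
=\frac{\partial}{\partial\phi_e^{\theta_1}}\big(\tilde\ell_e^{\theta_2}+\tilde\tau_e^{\theta_2}\big)(\phi_e^{\theta_1},\phi_e^{\theta_2}),
\qquad\forall e\in\E,
\end{equation*}
i.e. the edgewise symmetry $\partial\hat\ell_e^{\theta_1}/\partial\phi_e^{\theta_2}=\partial\hat\ell_e^{\theta_2}/\partial\phi_e^{\theta_1}$. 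Since each summand $\partial\hat\ell_e^{\theta_2}/\partial\phi_e^{\theta_1}-\partial\hat\ell_e^{\theta_1}/\partial\phi_e^{\theta_2}$ then vanishes, summing over $e\in p_1\cap p_2$ gives $\sum_{e\in p_1\cap p_2}[\partial\hat\ell_e^{\theta_2}/\partial\phi_e^{\theta_1}-\partial\hat\ell_e^{\theta_1}/\partial\phi_e^{\theta_2}]=0$ for all $p_1,p_2\in\P$, which is exactly the premise of Lemma~\ref{lemma:sufficiency} with $\tilde\ell$ replaced by $\hat\ell$. I then invoke that lemma — its proof carries over unchanged for any $\mathcal{C}^1$ family of edge costs satisfying this path condition, since it only uses differentiation under the integral sign to obtain the gradient identity $\partial V/\partial f_p^{\theta_i}=\sum_{e\in p}\hat\ell_e^{\theta_i}$ and then integrates this gradient along a path inside the simply connected action set $\mathcal{A}_i$ — concluding that $V$ satisfies the potential-game identity $V(a_i,a_{-i})-V(\bar a_i,a_{-i})=U_i(a_i,a_{-i})-U_i(\bar a_i,a_{-i})$ for $i=1,2$, which is the assertion.

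I do not expect a genuine obstacle here; the work is bookkeeping. The two points that need care are: (a) stating the reduction precisely enough that ``the abstract game'' in the proposition unambiguously refers to the one built from $\hat\ell$, and that $V$ is literally the Lemma~\ref{lemma:sufficiency} potential for $\hat\ell$; and (b) observing that the edgewise condition imposed here is strictly stronger than the path-wise condition $\sum_{e\in p_1\cap p_2}[\cdots]=0$ that Lemma~\ref{lemma:sufficiency} actually uses, so the hypothesis could in principle be relaxed to a path-wise version — but the edgewise form is the natural, locally checkable condition for designing tolls, and, in view of Corollary~\ref{cor:convexprogramming}, it is the one needed if one additionally wants to recover equilibria by (convex) minimization of $V$.
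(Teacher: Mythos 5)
Your proposal is correct and follows essentially the same route as the paper's own proof (Appendix~B): absorb the tolls into effective edge costs $\tilde{\ell}_e^\theta+\tilde{\tau}_e^\theta$, observe that the stated condition is exactly the edgewise symmetry required for Lemma~\ref{lemma:sufficiency} applied to those effective costs, and read off $V$ as the corresponding potential. Your write-up is in fact somewhat more careful than the paper's, which does not explicitly note the $\mathcal{C}^1$ requirement on the tolls or that the edgewise condition is stronger than the path-wise hypothesis Lemma~\ref{lemma:sufficiency} actually needs.
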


\begin{proof} See Appendix~\ref{appendix:B}.
\end{proof}

\begin{proposition} \textsc{(Indistinguishable Types)} \label{prop:toll2} Assume that $|\Theta|=2$. The abstract game admits the potential function $V\in\mathcal{C}^2$ in~Proposition~\ref{prop:toll1} with $\tilde{\tau}_e^{\theta_1}(\phi_e^{\theta_1},\phi_e^{\theta_2})=\tilde{\tau}_e^{\theta_2}(\phi_e^{\theta_1},\phi_e^{\theta_2})=\tilde{\tau}_e(\phi_e^{\theta_1},\phi_e^{\theta_2})$ if 
\begin{equation*}
\begin{split}
\frac{\partial \tilde{\tau}_e(\phi_e^{\theta_1},\phi_e^{\theta_2})}{\partial \phi_e^{\theta_2}}-\frac{\partial \tilde{\tau}_e(\phi_e^{\theta_1},\phi_e^{\theta_2})}{\partial \phi_e^{\theta_1}}=\frac{\partial \tilde{\ell}_e^{\theta_2}(\phi_e^{\theta_1},\phi_e^{\theta_2})}{\partial \phi_e^{\theta_1}}-\frac{\partial \tilde{\ell}_e^{\theta_1}(\phi_e^{\theta_1},\phi_e^{\theta_2})}{\partial \phi_e^{\theta_2}},
\end{split}
\end{equation*}
for all $e\in\E$. 
\end{proposition}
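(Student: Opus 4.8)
The plan is to obtain Proposition~\ref{prop:toll2} as the special case of Proposition~\ref{prop:toll1} in which the two type-dependent tolls are forced to coincide. First I would substitute $\tilde{\tau}_e^{\theta_1}=\tilde{\tau}_e^{\theta_2}=\tilde{\tau}_e$ throughout the statement of Proposition~\ref{prop:toll1}. Under this substitution the potential function displayed in Proposition~\ref{prop:toll1} turns verbatim into the one claimed here: its three edge terms involve only $\tilde{\ell}_e^{\theta_1}+\tilde{\tau}_e^{\theta_1}$, $\tilde{\ell}_e^{\theta_2}+\tilde{\tau}_e^{\theta_2}$, and $\partial_{u_2}(\tilde{\ell}_e^{\theta_1}+\tilde{\tau}_e^{\theta_1})$, each of which becomes the corresponding expression with the common toll $\tilde{\tau}_e$. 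So the whole content reduces to checking that the hypothesis of Proposition~\ref{prop:toll2} implies the hypothesis of Proposition~\ref{prop:toll1}.

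Second, I would perform that check, which is the heart of the argument and amounts to simple bookkeeping of partial derivatives. The left-hand side of the condition in Proposition~\ref{prop:toll1} is $\partial_{\phi_e^{\theta_2}}\tilde{\tau}_e^{\theta_1}-\partial_{\phi_e^{\theta_1}}\tilde{\tau}_e^{\theta_2}$; setting $\tilde{\tau}_e^{\theta_1}=\tilde{\tau}_e^{\theta_2}=\tilde{\tau}_e$ turns this into $\partial_{\phi_e^{\theta_2}}\tilde{\tau}_e-\partial_{\phi_e^{\theta_1}}\tilde{\tau}_e$, which is exactly the left-hand side appearing in Proposition~\ref{prop:toll2}; the right-hand sides $\partial_{\phi_e^{\theta_1}}\tilde{\ell}_e^{\theta_2}-\partial_{\phi_e^{\theta_2}}\tilde{\ell}_e^{\theta_1}$ are literally identical in the two statements. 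Hence the equation assumed in Proposition~\ref{prop:toll2} is precisely the equation required by Proposition~\ref{prop:toll1} evaluated at a common toll, and invoking Proposition~\ref{prop:toll1} closes the proof that the tolled abstract game admits the stated potential function.

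Finally, I would record the regularity claim. As established inside the proof of Proposition~\ref{prop:toll1} (which mirrors the computation in Lemma~\ref{lemma:sufficiency}), the gradient of $V$ along a path $p$ equals $\sum_{e\in p}(\tilde{\ell}_e^{\theta_i}+\tilde{\tau}_e)$, so $V\in\mathcal{C}^2$ as soon as each effective cost $\tilde{\ell}_e^{\theta}+\tilde{\tau}_e$ is $\mathcal{C}^1$; Assumption~\ref{assum:1}~(\textit{i}) supplies $\tilde{\ell}_e^{\theta}\in\mathcal{C}^1$ and we take the toll functions to be $\mathcal{C}^1$ as well, which is already implicit in Proposition~\ref{prop:toll1} since its potential involves $\partial_{u_2}\tilde{\tau}_e^{\theta_1}$. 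I do not expect a genuine obstacle here: all the real work is done in Proposition~\ref{prop:toll1}, and the only thing to be careful about is not mixing up which variable each partial derivative acts on once the two toll functions are identified. If one preferred a proof that does not cite Proposition~\ref{prop:toll1}, the plan would instead be to rerun the argument of Lemma~\ref{lemma:sufficiency} with $\tilde{\ell}_e^{\theta}+\tilde{\tau}_e$ in place of $\tilde{\ell}_e^{\theta}$; the single nontrivial step is then that the auxiliary function $\Psi$ formed from the effective costs vanishes identically, which is exactly where the displayed condition is used, and the common-toll assumption is what makes that vanishing possible at all.
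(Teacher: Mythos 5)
Your proposal is correct and follows exactly the paper's own route: the paper likewise proves Proposition~\ref{prop:toll2} by specializing Proposition~\ref{prop:toll1} to the case $\tilde{\tau}_e^{\theta_1}=\tilde{\tau}_e^{\theta_2}=\tilde{\tau}_e$, with your derivative bookkeeping simply making explicit what the paper leaves as immediate. The additional remarks on regularity and the alternative direct rerun of Lemma~\ref{lemma:sufficiency} are fine but not needed.
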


\begin{proof} The proof immediately follows from using Proposition~\ref{prop:toll1} with the constraint that the tolls may not depend on the type, i.e., $\tilde{\tau}_e^{\theta_1}(\phi_e^{\theta_1},\phi_e^{\theta_2})=\tilde{\tau}_e^{\theta_2}(\phi_e^{\theta_1},\phi_e^{\theta_2})=\tilde{\tau}_e(\phi_e^{\theta_1},\phi_e^{\theta_2})$.
\end{proof}

In general, we can prove the following corollary concerning the type-independent tolls.

\begin{corollary} (\textsc{Indistingushable Types}) \label{cor:toll3} Assume that $|\Theta|=2$.  The abstract game admits a potential function $V\in\mathcal{C}^2$ if  the imposed tolls are of the following form
$$
\tilde{\tau}_e(\phi_e^{\theta_1},\phi_e^{\theta_2})=c_e+\int_0^{\phi_e^{\theta_2}}\hspace{-.13in} f_e(q,\phi_e^{\theta_1}+\phi_e^{\theta_2}-q) \d q+\psi_e(\phi_e^{\theta_1}+\phi_e^{\theta_2}),
$$
where $c_e\in\mathbb{R}_{\geq 0}$, $\psi_e\in\mathcal{C}^1$, and $
f_e(x,y)=\partial \tilde{\ell}_e^{\theta_2}(y,x)/\partial y-\partial \tilde{\ell}_e^{\theta_1}(y,x)/\partial x$ for all $e\in\E$. 
\end{corollary}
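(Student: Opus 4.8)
The plan is to reduce the claim to the hypothesis of Proposition~\ref{prop:toll2}. Since the proposed $\tilde{\tau}_e(\phi_e^{\theta_1},\phi_e^{\theta_2})$ depends only on the edge flows $\phi_e^{\theta_1}$ and $\phi_e^{\theta_2}$ and not on the type of the vehicle, it is an admissible type-independent toll; consequently, by Proposition~\ref{prop:toll2}, the abstract game admits the potential function $V\in\mathcal{C}^2$ of Proposition~\ref{prop:toll1} as soon as
\[
\frac{\partial \tilde{\tau}_e(\phi_e^{\theta_1},\phi_e^{\theta_2})}{\partial \phi_e^{\theta_2}}-\frac{\partial \tilde{\tau}_e(\phi_e^{\theta_1},\phi_e^{\theta_2})}{\partial \phi_e^{\theta_1}}=\frac{\partial \tilde{\ell}_e^{\theta_2}(\phi_e^{\theta_1},\phi_e^{\theta_2})}{\partial \phi_e^{\theta_1}}-\frac{\partial \tilde{\ell}_e^{\theta_1}(\phi_e^{\theta_1},\phi_e^{\theta_2})}{\partial \phi_e^{\theta_2}}
\]
holds for every $e\in\E$. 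Thus the whole proof amounts to verifying that the explicit toll formula satisfies this partial differential equation.

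To do so, I would introduce the coordinates $s=\phi_e^{\theta_1}+\phi_e^{\theta_2}$ and $t=\phi_e^{\theta_2}$, in which the differential operator on the left-hand side becomes simply $\partial/\partial t$: holding $\phi_e^{\theta_1}$ fixed moves both $s$ and $t$, whereas holding $\phi_e^{\theta_2}$ fixed moves only $s$, so $\partial_{\phi_e^{\theta_2}}-\partial_{\phi_e^{\theta_1}}=\partial_t$. In these coordinates the toll reads $\tilde{\tau}_e=c_e+\int_0^t f_e(q,s-q)\,\d q+\psi_e(s)$. The constant $c_e$ and the summand $\psi_e(s)$ carry no $t$-dependence, so $\partial/\partial t$ annihilates them; in the original coordinates this is just the statement that $\partial_{\phi_e^{\theta_2}}\psi_e(\phi_e^{\theta_1}+\phi_e^{\theta_2})=\partial_{\phi_e^{\theta_1}}\psi_e(\phi_e^{\theta_1}+\phi_e^{\theta_2})=\psi_e'(\phi_e^{\theta_1}+\phi_e^{\theta_2})$, which only needs $\psi_e\in\mathcal{C}^1$. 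The integral term depends on $t$ solely through its upper limit, because its integrand $f_e(q,s-q)$ is free of $t$; hence by the fundamental theorem of calculus its $t$-derivative is $f_e(t,s-t)$, and this step requires nothing beyond continuity of $f_e$, which holds since $f_e$ is assembled from first derivatives of the $\tilde{\ell}_e^{\theta}$ and these are continuous by Assumption~\ref{assum:1}~(\textit{i}).

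Finally, I would translate $f_e(t,s-t)$ back: since $t=\phi_e^{\theta_2}$ and $s-t=\phi_e^{\theta_1}$, this is $f_e(\phi_e^{\theta_2},\phi_e^{\theta_1})$, and unfolding the definition $f_e(x,y)=\partial \tilde{\ell}_e^{\theta_2}(y,x)/\partial y-\partial \tilde{\ell}_e^{\theta_1}(y,x)/\partial x$ with $x=\phi_e^{\theta_2}$ and $y=\phi_e^{\theta_1}$ gives exactly $\partial \tilde{\ell}_e^{\theta_2}(\phi_e^{\theta_1},\phi_e^{\theta_2})/\partial \phi_e^{\theta_1}-\partial \tilde{\ell}_e^{\theta_1}(\phi_e^{\theta_1},\phi_e^{\theta_2})/\partial \phi_e^{\theta_2}$, i.e., the right-hand side of the required identity. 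Therefore the partial differential equation holds on every edge, and the corollary follows from Proposition~\ref{prop:toll2} (the $\mathcal{C}^2$ regularity of $V$ being inherited from that proposition, given $\psi_e\in\mathcal{C}^1$ and Assumption~\ref{assum:1}~(\textit{i})).

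The only delicate point is notational rather than mathematical: one must keep scrupulous track of which slot of $\tilde{\ell}_e^{\theta}$ is being differentiated, since the definition of $f_e$ evaluates the cost functions at the \emph{swapped} argument pair. Working in the $(s,t)$ coordinates is what makes this manageable — the differentiation reduces to a bare application of the fundamental theorem of calculus instead of the full Leibniz rule, which is precisely what keeps the smoothness demands down to $\tilde{\ell}_e^{\theta}\in\mathcal{C}^1$ and $\psi_e\in\mathcal{C}^1$, and which also makes transparent why $\psi_e$ may be an arbitrary $\mathcal{C}^1$ function of the total edge flow and $c_e$ an arbitrary nonnegative constant.
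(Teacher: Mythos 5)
Your proof is correct and follows the same overall route as the paper's: both reduce the corollary to Proposition~\ref{prop:toll2} by verifying that the proposed toll satisfies the identity
$\partial_{\phi_e^{\theta_2}}\tilde{\tau}_e-\partial_{\phi_e^{\theta_1}}\tilde{\tau}_e
=\partial_{\phi_e^{\theta_1}}\tilde{\ell}_e^{\theta_2}-\partial_{\phi_e^{\theta_2}}\tilde{\ell}_e^{\theta_1}$
on every edge. The only difference is how the verification is carried out: the paper computes the two partial derivatives separately by the Leibniz rule, producing two copies of the term
$\int_0^{\phi_e^{\theta_2}}\partial_u f_e(q,u)\big|_{u=\phi_e^{\theta_1}+\phi_e^{\theta_2}-q}\,\d q$
that cancel in the difference, whereas you pass to the coordinates $s=\phi_e^{\theta_1}+\phi_e^{\theta_2}$, $t=\phi_e^{\theta_2}$, in which the operator $\partial_{\phi_e^{\theta_2}}-\partial_{\phi_e^{\theta_1}}$ collapses to $\partial_t$ and the whole computation is a single application of the fundamental theorem of calculus. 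Your variant is marginally cleaner: it never differentiates $f_e$ under the integral sign, so it only uses continuity of $f_e$ (i.e., $\tilde{\ell}_e^{\theta}\in\mathcal{C}^1$ from Assumption~\ref{assum:1}), while the paper's intermediate expressions implicitly assume $f_e$ is differentiable in its second argument. Since those terms cancel anyway, both arguments reach the same correct conclusion.
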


\begin{proof} See Appendix~\ref{appendix:C}.
\end{proof}

Throughout this subsection, we assumed that all the drivers portray similar sensitivity to the imposed tolls. This is indeed a source of conservatism, specially when dealing with routing games in which the heterogeneity is caused by the fact that the drivers react differently to the imposed tolls. Certainly, an avenue for future research is to develop tolls for a more general setup.

\subsection{Example: Routing Game with Platooning Incentives}  \label{subsec:Platoon:tax}
Let us examine the possibility of finding a set of tolls that satisfies the conditions of Propositions~\ref{prop:toll1} and~\ref{prop:toll2} for the heterogeneous routing game introduced in Subsection~\ref{subsec:Platoon}. 
\begin{itemize}
\item \textbf{Distinguishable Types-Case 1:}  Substituting (\ref{eqn:derivative:1}) and (\ref{eqn:derivative:2}) into the condition of Proposition~\ref{prop:toll1} results in
\begin{equation} \label{eqn:distinguishable:example}
\begin{split}
&\frac{\partial \tilde{\tau}_e^{\mathrm{c}}(\phi_e^{\mathrm{c}},\phi_e^{\mathrm{t}})}{\partial \phi_e^{\mathrm{t}}}-\frac{\partial \tilde{\tau}_e^{\mathrm{t}}(\phi_e^{\mathrm{c}},\phi_e^{\mathrm{t}})}{\partial \phi_e^{\mathrm{c}}}=2c_0\alpha b_ea_e.
\end{split}
\end{equation}
Following simple algebraic calculations, we can check that the tolls $\tilde{\tau}_e^{\mathrm{c}}(\phi_e^{\mathrm{c}},\phi_e^{\mathrm{t}})=0$ and $\tilde{\tau}_e^{\mathrm{t}}(\phi_e^{\mathrm{c}},\phi_e^{\mathrm{t}})=(2c_0\alpha b_ea_e)\phi_e^{\mathrm{c}}$ satisfy~(\ref{eqn:distinguishable:example}). 
Noticing that $\tilde{\tau}_e^{\mathrm{t}}(\phi_e^{\mathrm{c}},\phi_e^{\mathrm{t}})\leq 0$ because by definition $a_e\in\mathbb{R}_{\leq 0}$, these terms can be interpreted as subsidies paid to the trucks to compensate for the fuel that is wasted due to presence of the cars on that specific edge.
\item \textbf{Distinguishable Types-Case 2:}  Another example of appropriate tolls is $\tilde{\tau}_e^{\mathrm{t}}(\phi_e^{\mathrm{c}},\phi_e^{\mathrm{t}})=0$ and $\tilde{\tau}_e^{\mathrm{c}}(\phi_e^{\mathrm{c}},\phi_e^{\mathrm{t}})=(-2c_0\alpha b_ea_e)\phi_e^{\mathrm{t}}$. Now, we have $\tilde{\tau}_e^{\mathrm{c}}(\phi_e^{\mathrm{c}},\phi_e^{\mathrm{t}})\geq 0$. In this case, the cars pay directly for the increased fuel consumption of the trucks and, therefore, they are inclined to travel along the edges that trucks do not use.
\item \textbf{Indistinguishable Types:}  For this case, using Corollary~\ref{cor:toll3}, it is easy to see that tolls $\tilde{\tau}_e(\phi_e^{\mathrm{c}},\phi_e^{\mathrm{t}})=(2c_0\alpha b_ea_e)\phi_e^{\mathrm{t}}$ work fine. We use these tolls in the numerical example developed in Section~\ref{sec:NumericalExample}. 
\end{itemize}

\section{Price of Anarchy for Affine Cost Functions}  \label{sec:inefficiency}
In the routing game literature, it is a widely known fact that generally, a Nash equilibrium is inefficient even when dealing with homogeneous routing games; see~\cite{roughgarden2007routing,roughgarden2002bad,6426526}. To quantify this inefficiency, many studies have used \textit{Price of Anarchy} (PoA) as a metric.

\subsection{Social Cost Function}
First, let us define the social cost of a flow vector $f=(f_{p'}^{\theta'})_{p'\in \P,\theta'\in\Theta}$ 
as
\begin{equation*}
\begin{split}
C(f)&\triangleq \sum_{p\in\P}\sum_{\theta\in\Theta} f^\theta_p \ell_p^\theta(f) \\&=\sum_{e\in\E} \sum_{\theta\in\Theta} \phi_e^\theta \tilde{\ell}_e^\theta((\phi_e^{\theta'})_{\theta'\in\Theta}),
\end{split}
\end{equation*}
where the second equality can be easily obtained by rearranging the terms. Using this social cost, we can define the optimal flow that we will use later for comparison with the Nash equilibrium.

\begin{definition}\textsc{(Socially Optimal Flow)} \label{def:socialoptimal}  $f\in\mathcal{F}$ is a socially optimal flow if $C(f)\leq C(\bar{f})$ for all $\bar{f}\in\mathcal{F}$.
\end{definition}

\begin{definition}\textsc{(PoA)} \label{def:PoA} The price of anarchy is defined as
$$
\mathrm{PoA}=\sup_{f^{\mathrm{Nash}}\in\mathcal{N}} \frac{C(f^{\mathrm{Nash}})}{\min_{f\in\mathcal{F}} C(f)},
$$
where $\mathcal{N}$ denotes the set of Nash equilibria of the heterogeneous routing game. In this definition, we follow the convention that ``$\frac{0}{0}=1$''.
\end{definition}

\subsection{Bounding the Price of Anarchy for Two Types with Affine Cost Functions}
Here, we present an upper bound for the inefficiency of the Nash equilibrium in heterogeneous routing games when $|\Theta|=2$. The edge cost functions are taken to be affine functions of the form
\begin{equation*}
\begin{split}
\ell_{e}^{\theta_1}(\phi_{e}^{\theta_1},\phi_{e}^{\theta_2})&=\alpha_{\theta_1\theta_1}^{e}\phi_{e}^{\theta_1}+\alpha_{\theta_1\theta_2}^{e}\phi_{e}^{\theta_2}+\beta_{\theta_1}^{e},\\
\ell_{e}^{\theta_2}(\phi_{e}^{\theta_1},\phi_{e}^{\theta_2})&=\alpha_{\theta_2\theta_1}^{e}\phi_{e}^{\theta_1}+\alpha_{\theta_2\theta_2}^{e}\phi_{e}^{\theta_2}+\beta_{\theta_2}^{e},
\end{split}
\end{equation*}
where $\alpha_{\theta_1\theta_1}^{e},\alpha_{\theta_1\theta_2}^{e},\alpha_{\theta_2\theta_1}^{e},\alpha_{\theta_2\theta_2}^{e},\beta_{\theta_1}^{e}, \beta_{\theta_2}^{e}\in\mathbb{R}_{\geq 0}$ are parameters of the routing game for each edge $e\in\E$. Notice that the condition $\alpha_{\theta_1\theta_1}^{e},\alpha_{\theta_1\theta_2}^{e},\alpha_{\theta_2\theta_1}^{e},\alpha_{\theta_2\theta_2}^{e}\in\mathbb{R}_{\geq 0}$ implies that the cost of using an edge is increasing in each flow separately (i.e., when a driver of any type switches to an edge, she cannot decrease the cost of the users on this new edge) while $\beta_{\theta_1}^{e}, \beta_{\theta_2}^{e}\in\mathbb{R}_{\geq 0}$ implies that the starting cost of using a road is non-negative.  This assumption is certainly stronger than Assumption~\ref{assum:1}. Subsection~\ref{subsec:Platoon} presents a motivating example for affine cost functions.

\begin{theorem} \label{tho:PoA} Let 
\begin{subequations}
\begin{equation} \label{eqn:condition:PoA:1}
\alpha_{\theta_2\theta_1}^{e}=\alpha_{\theta_1\theta_2}^{e}
\end{equation}
\begin{equation} \label{eqn:condition:PoA}
\matrix{cc}{\alpha_{\theta_1\theta_1}^{e} & \alpha_{\theta_1\theta_2}^{e} \\ \alpha_{\theta_2\theta_1}^{e} & \alpha_{\theta_2\theta_2}^{e}}\geq 0,
\end{equation}
\end{subequations}
for all $ e\in\E$. Then, $\mathrm{PoA}\leq2 $.
\end{theorem}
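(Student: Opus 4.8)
The plan is to bound the social cost of any Nash equilibrium by twice the optimum by combining the variational-inequality characterization of equilibria with an edge-by-edge quadratic estimate, mirroring the classical argument for single-type networks~\cite{roughgarden2007routing}. Fix a Nash equilibrium $f$ of the heterogeneous routing game and let $f^*\in\mathcal{F}$ be a socially optimal flow ($f^*$ exists because $\mathcal{F}$ is a nonempty compact polytope and $C$ is continuous; a Nash equilibrium exists by Theorem~\ref{tho:existence}). For $e\in\E$ write $\phi_e=(\phi_e^{\theta_1},\phi_e^{\theta_2})$ and $\phi_e^*=(\phi_e^{*\theta_1},\phi_e^{*\theta_2})$ for the edge flows induced by $f$ and $f^*$, let $A_e$ be the $2\times 2$ matrix with entries $\alpha_{\theta_i\theta_j}^e$, and let $b_e=(\beta_{\theta_1}^e,\beta_{\theta_2}^e)$, so that $\sum_{\theta}\tilde\ell_e^{\theta}(\phi_e)\phi_e^{\theta}=\phi_e^\top(A_e\phi_e+b_e)$ and $C(f)=\sum_{e\in\E}\phi_e^\top(A_e\phi_e+b_e)$.

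The first step is the variational inequality. For a single type $\theta$, Definition~\ref{def:Nashrouting} says that within each commodity $f^\theta$ is carried only on paths of minimum $\ell_p^\theta(f)$-cost; hence for any feasible $\bar f$ (which must meet the same demands $F_k^\theta$) one has $\sum_{p\in\P}\ell_p^\theta(f)\bar f_p^\theta\ge\sum_{p\in\P}\ell_p^\theta(f)f_p^\theta$, and rewriting path costs as sums over edges gives $\sum_{e\in\E}\tilde\ell_e^{\theta}(\phi_e)(\phi_e^\theta-\bar\phi_e^\theta)\le 0$. Summing the two per-type inequalities — which is legitimate precisely because re-routing one type leaves the other type's edge flows untouched, so both inequalities are evaluated at the common point $\phi_e$ — and taking $\bar f=f^*$ yields $C(f)\le\sum_{e\in\E}\big[(\phi_e^*)^\top A_e\phi_e+(\phi_e^*)^\top b_e\big]$.

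The second step is the per-edge estimate, where the two hypotheses enter. By~\eqref{eqn:condition:PoA:1} each $A_e$ is symmetric, so $(\phi_e^*)^\top A_e\phi_e=\phi_e^\top A_e\phi_e^*$; by~\eqref{eqn:condition:PoA} each $A_e$ is positive semidefinite, so $(\phi_e-\phi_e^*)^\top A_e(\phi_e-\phi_e^*)\ge 0$, which after expansion and use of the symmetry becomes $2(\phi_e^*)^\top A_e\phi_e\le\phi_e^\top A_e\phi_e+(\phi_e^*)^\top A_e\phi_e^*$. Using also $\phi_e^\top b_e\ge 0$ and $(\phi_e^*)^\top A_e\phi_e^*\ge 0$ (which hold because all $\alpha^e,\beta^e\ge 0$ and each $A_e$ is positive semidefinite), the bound from the first step gives
\[
C(f)\le\frac12\sum_{e\in\E}\phi_e^\top A_e\phi_e+\sum_{e\in\E}\Big[\frac12(\phi_e^*)^\top A_e\phi_e^*+(\phi_e^*)^\top b_e\Big]\le\frac12\,C(f)+C(f^*).
\]
Hence $C(f)\le 2\,C(f^*)=2\min_{\bar f\in\mathcal{F}}C(\bar f)$. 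Since $C\ge 0$ everywhere by Assumption~\ref{assum:1}(\textit{ii}), when $\min_{\bar f}C(\bar f)=0$ the inequality forces $C(f)=0$ and the ratio is $1$ by the convention of Definition~\ref{def:PoA}; otherwise the ratio is at most $2$. Taking the supremum over all Nash equilibria proves $\mathrm{PoA}\le 2$.

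The routine part is the variational inequality: it is the verbatim analogue of the homogeneous case once one observes the decoupling of the two types' feasibility constraints. The point that requires care is the per-edge estimate — one must recognize that the cross term produced by the variational inequality is exactly $(\phi_e^*)^\top A_e\phi_e$, and that it matches the cross term in the expansion of $(\phi_e-\phi_e^*)^\top A_e(\phi_e-\phi_e^*)$ only after the symmetry~\eqref{eqn:condition:PoA:1} is invoked; positive semidefiniteness~\eqref{eqn:condition:PoA} alone does not close the estimate. I therefore expect the matching of these quadratic cross terms, rather than any analytic difficulty, to be the crux.
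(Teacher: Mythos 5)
Your proof is correct, but it takes a genuinely different route from the paper's. The paper argues through the potential function: condition~\eqref{eqn:condition:PoA:1} guarantees (via Corollary~\ref{cor:optimization}) that a potential $V$ exists, a direct computation shows $V(f)=\tfrac12 C(f)+\tfrac12\sum_{e}[\beta_{\theta_1}^{e}\phi_{e}^{\theta_1}+\beta_{\theta_2}^{e}\phi_{e}^{\theta_2}]$, condition~\eqref{eqn:condition:PoA} makes $C$ (hence $V$) convex so that by Corollary~\ref{cor:convexprogramming} the Nash equilibrium globally minimizes $V$ over $\mathcal{F}$, and the chain $C(\bar f)\leq 2V(\bar f)\leq 2V(f)\leq 2C(f)$ is closed by a sequence of integral estimates using $\alpha^{e}_{\cdot\cdot},\beta^{e}_{\cdot}\geq 0$. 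You instead bypass the potential entirely: the variational inequality extracted from Definition~\ref{def:Nashrouting} (summed over the two types, which is valid because the types' feasibility constraints decouple) gives $C(f)\leq\sum_e[(\phi_e^*)^\top A_e\phi_e+(\phi_e^*)^\top b_e]$, and the symmetric-PSD expansion of $(\phi_e-\phi_e^*)^\top A_e(\phi_e-\phi_e^*)\geq 0$ closes the estimate --- essentially the smoothness argument of the homogeneous affine case. Both uses of the hypotheses are faithful (symmetry to match the cross terms, semidefiniteness for the quadratic inequality), and you handle the $0/0$ convention. What your route buys is independence from the potential-function machinery of Section~\ref{sec:findingaNashEq}: in particular it would extend verbatim to $|\Theta|>2$ with a symmetric positive semidefinite coefficient matrix, whereas the paper's argument is tied to $|\Theta|=2$ through Lemma~\ref{lemma:sufficiency}. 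What the paper's route buys is the explicit identity $V=\tfrac12 C+\text{(linear)}$ and consistency with the article's overall potential-game framing; it also makes transparent why dropping convexity degrades the claim from PoA to Price of Stability, a remark your argument would phrase differently (without the VI holding, nothing survives).
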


\begin{proof} First, note that if $\alpha_{\theta_2\theta_1}^{e}=\alpha_{\theta_1\theta_2}^{e}$ for all $e\in\E$, the condition of Corollary~\ref{cor:optimization} is satisfied. Therefore, we can easily calculate the potential function as 
\begin{small}
\begin{equation} \label{eqn:derivationofV}
\begin{split}
V(f)=&\sum_{e\in\E}\bigg[\frac{1}{2}\alpha_{\theta_1\theta_1}^{e}(\phi_{e}^{\theta_1})^2\hspace{-.03in}+\hspace{-.03in}(\alpha_{\theta_1\theta_2}^{e}\phi_{e}^{\theta_2}\hspace{-.03in}+\hspace{-.03in}\beta_{\theta_1}^{e})\phi_{e}^{\theta_1}\hspace{-.03in}+\hspace{-.03in}\frac{1}{2}\alpha_{\theta_2\theta_2}^{e}(\phi_{e}^{\theta_2})^2+(\alpha_{\theta_2\theta_1}^{e}\phi_{e}^{\theta_1}+\beta_{\theta_2}^{e})\phi_{e}^{\theta_2} \hspace{-.03in}-\hspace{-.03in}\alpha_{\theta_1\theta_2}^{e}\phi_{e}^{\theta_1}\phi_{e}^{\theta_2} \bigg]\\
=&\sum_{e\in\E}\bigg[\frac{1}{2}\phi_{e}^{\theta_1}\ell_{e}^{\theta_1}(\phi_{e}^{\theta_1},\phi_{e}^{\theta_2})+\frac{1}{2}\beta_{\theta_1}^{e}\phi_{e}^{\theta_1} +\frac{1}{2}\phi_{e}^{\theta_2}\ell_{e}^{\theta_2}(\phi_{e}^{\theta_1},\phi_{e}^{\theta_2})+\frac{1}{2}\beta_{\theta_2}^{e}\phi_{e}^{\theta_2}  \bigg]\\
=&\frac{1}{2}C(f)+\sum_{e\in\E}\frac{1}{2}\bigg[\beta_{\theta_1}^{e}\phi_{e}^{\theta_1}+\beta_{\theta_2}^{e}\phi_{e}^{\theta_2}  \bigg],
\end{split}
\end{equation}
\end{small}
Furthermore, following the argument of~\cite[p.\,71]{boyd2004convex}, we know that the social cost function is a convex function if and only if~(\ref{eqn:condition:PoA}) is satisfied. Notice that~\eqref{eqn:derivationofV} shows that the potential function $V$ is a convex function if the social cost function $C$ is a convex function (because the summation of a convex function and a linear function is a convex function). Let us use $\bar{f}$ and $f$ to denote the Nash equilibrium and the socially optimal flow, respectively. Now, we can prove inequality
\begin{small}
\begin{align}
C(\bar{f})& \leq 2 V(\bar{f}) \nonumber && \mbox{by}~\eqref{eqn:derivationofV}\mbox{ and $\beta_{\theta_1}^e,\beta_{\theta_2}^e\in\mathbb{R}_{\geq 0}$} \\[-.5em] &\leq 2V(f) &&  \nonumber \mbox{by Corollary~\ref{cor:convexprogramming}} \\[-.5em]
&\leq 2\sum_{e\in\E}\bigg(\int_0^{\phi_e^{\theta_1}} \hspace{-.07in}\tilde{\ell}_e^{\theta_1}(u_1,\phi_e^{\theta_2}) \d u_1+\int_0^{\phi_e^{\theta_2}}\hspace{-.07in} \tilde{\ell}_e^{\theta_2}(\phi_e^{\theta_1},u_2) \d u_2 \nonumber \\[-.5em] &\hspace{.3in}-\int_0^{\phi_e^{\theta_2}}\hspace{-.08in}\int_0^{\phi_e^{\theta_1}} \hspace{-.07in}\frac{\partial}{\partial u } \tilde{\ell}_e^{\theta_1} (t,u) \d t \d u \bigg) && \mbox{by Definition of $V$}  \nonumber
\\[-.5em] &\leq 2\sum_{e\in\E}\left(\int_0^{\phi_e^{\theta_1}} \hspace{-.07in}\tilde{\ell}_e^{\theta_1}(u_1,\phi_e^{\theta_2}) \d u_1+\int_0^{\phi_e^{\theta_2}}\hspace{-.07in} \tilde{\ell}_e^{\theta_2}(\phi_e^{\theta_1},u_2) \d u_2 \right) && \mbox{by } \alpha_{\theta_1\theta_2}^{e},\alpha_{\theta_2\theta_1}^{e}\in\mathbb{R}_{\geq 0} \nonumber
\\[-.5em] &\leq 2 \bigg(\sum_{e\in\E}\int_0^{\phi_e^{\theta_1}} \left[\tilde{\ell}_e^{\theta_1}(u_1,\phi_e^{\theta_2})+u_1\frac{\partial}{\partial u_1}\tilde{\ell}_e^{\theta_1}(u_1,\phi_e^{\theta_2}) \right] \d u_1 && \nonumber \\[-.5em] & \hspace{0.3in}+ \sum_{e\in\E} \int_0^{\phi_e^{\theta_2}} \left[\tilde{\ell}_e^{\theta_2}(\phi_e^{\theta_1},u_2)+u_2 \frac{\partial}{\partial u_2}\tilde{\ell}_e^{\theta_2}(\phi_e^{\theta_1},u_2) \right] \d u_2 \bigg) && \mbox{by }\alpha_{\theta_1\theta_1}^{e},\alpha_{\theta_2\theta_2}^{e}\in\mathbb{R}_{\geq 0} \nonumber\\[-.5em] &\leq 2 \left( \sum_{e\in\E}\phi_e^{\theta_1}\tilde{\ell}_e^{\theta_1}(\phi_e^{\theta_1},\phi_e^{\theta_2})+ \sum_{e\in\E}\phi_e^{\theta_2}\tilde{\ell}_e^{\theta_2}(\phi_e^{\theta_1},\phi_e^{\theta_2}) \right) & \nonumber \\[-.5em] & =
 2 C(f). \label{longequation:5} 
\end{align}
\end{small}
This completes the proof.
\end{proof}

Notice that in many practical situations (such as the one presented in~Subsection~\ref{subsec:Platoon:potential} for routing games with platooning incentives), $\alpha_{\theta_2\theta_1}^{e}\neq\alpha_{\theta_1\theta_2}^{e}$. Therefore, we may not be able to use Theorem~\ref{tho:PoA} to find an upper bound for the PoA. However, as also discussed in Section~\ref{sec:tolls}, in some cases, we  might be able to manipulate these gains through appropriate tolls to make sure~\eqref{eqn:condition:PoA:1} holds. In addition, condition~\eqref{eqn:condition:PoA} is equivalent to the condition that $\alpha_{\theta_1\theta_1}^{e}\alpha_{\theta_2\theta_2}^{e}\geq \alpha_{\theta_2\theta_1}^{e}\alpha_{\theta_1\theta_2}^{e}$ for all $e\in\E$. This condition intuitively means that cost function of each type of vehicles is more influenced by the flow of its own type than the flow of the other type. This condition may not hold in general in transportation networks. In such case, instead of using Corollary~\ref{cor:convexprogramming}, we may use Corollary~\ref{cor:optimization} in the proof of Theorem~\ref{tho:PoA} (that is the only place that we use the convexity of the potential function which we proved using the convexity of the social decision function). However, doing so, we cannot bound the ratio $C(f^{\mathrm{Nash}})/\min_{f} C(f)$ for all $f^{\mathrm{Nash}}\in\mathcal{N}$. Therefore, instead of showing that PoA is bounded from above by two, we can then only show that the Price of Stability\footnote{ Price of Stability (PoS), or commonly known as the optimistic Price of Anarchy, is defined as $\inf_{f^{\mathrm{Nash}}\in\mathcal{N}} C(f^{\mathrm{Nash}})/\min_{f\in\mathcal{F}} C(f)$; note that we use $\inf$ operator instead of $\sup$ operator in this definition in contrast to that of Definition~\ref{def:PoA}. See~\cite{anshelevich2003near} for more explanation regarding the difference between PoS and PoA. } is upper bounded by two (because we can show that the ratio is bounded by two for only one Nash equilibrium and not for all Nash equilibria).

\begin{table*}[t]
\centering
\footnotesize
\caption{ \label{table:1} Parameters of the heterogeneous routing game in the numerical example.}
\begin{tabular}{|c|c|c|c|c|c|c|c|c|c|c|c|c|} \hline
 & $e_0$ & $e_1$ & $e_2$ & $e_3$ & $e_4$ & $e_5$ & $e_6$ & $e_7$ & $e_8$ & $e_9$ &  $e_{10}$ &$e_{11}$ 
\\ \hline
$\alpha_{\mathrm{a}\mathrm{a}}$ &
1.0 & 2.0 & 3.0 & 1.0 & 4.0 & 0.5 & 1.0 & 1.0 & 2.0 & 1.0 & 4.0 & 1.0
\\ \hline
$\alpha_{\mathrm{a}\mathrm{t}}$ &
0.6 & 0.4 & 0.1 & 0.1 & 0.5 & 0.1 & 0.7 & 0.1 & 0.1 & 0.2 & 0.1 & 0.3
\\ \hline
$\alpha_{\mathrm{t}\mathrm{t}}$ &
2.0 & 3.0 & 1.0 & 0.8 & 1.0 & 1.0 & 1.5 & 3.0 & 1.7 & 3.0 & 1.0 & 1.3
\\ \hline
$\beta_{\mathrm{a}}$ &
2.0 & 2.0 & 4.5 & 2.0 & 2.0 & 4.5 & 2.0 & 2.0 & 4.5 & 2.0 & 2.0 & 4.5
\\ \hline
$\beta_{\mathrm{t}}$ & 
4.0 & 4.0 & 1.5 & 4.0 & 4.0 & 1.5 & 4.0 & 4.0 & 1.5 & 4.0 & 4.0 & 1.5
\\ \hline
\end{tabular}
\end{table*}

\section{Numerical Example} \label{sec:NumericalExample}
In this section, we present a numerical example motivated by the routing game with platooning incentives in Subsection~\ref{subsec:Platoon}. We use the graph $\G=(\V,\E)$ in Figure~\ref{figure:1}. We have three commodities $(s_1,t_1)=(0,1)$, $(s_2,t_2)=(2,3)$, and $(s_3,t_3)=(7,8)$. The corresponding paths for the commodities are
\begin{align*}
&\P_1\hspace{-0.03in}=\hspace{-0.03in}\{\hspace{-0.03in}\{e_1\},\hspace{-0.03in}\{e_2,e_4,e_3\},\hspace{-0.03in}\{e_2,e_7,e_5\}\hspace{-0.03in}\},\\
&\P_2\hspace{-0.03in}=\hspace{-0.03in}\{\hspace{-0.03in}\{e_{10}\},\hspace{-0.03in}\{e_9,e_7,e_8\},\hspace{-0.03in}\{e_9,e_4,e_6\}\hspace{-0.03in}\},\\
&\P_3\hspace{-0.03in}=\hspace{-0.03in}\{\hspace{-0.03in}\{e_{11},e_{10},e_0\},\hspace{-0.03in}\{e_{11},e_9,e_7,e_8,e_0\},\hspace{-0.03in}\{e_{11},e_9,e_4,e_6,e_0\}\hspace{-0.03in}\}.
\end{align*}
The edge cost functions are taken to be affine functions of the form
\begin{equation*}
\begin{split}
\tilde{\ell}_{e_i}^{\mathrm{c}}(\phi_{e_i}^{\mathrm{c}},\phi_{e_i}^{\mathrm{t}})&=\alpha_{\mathrm{c}\mathrm{c}}^{(i)}\phi_{e_i}^{\mathrm{c}}+\bar{\alpha}_{\mathrm{c}\mathrm{t}}^{(i)}\phi_{e_i}^{\mathrm{t}}+\beta_{\mathrm{c}}^{(i)},\\
\tilde{\ell}_{e_i}^{\mathrm{t}}(\phi_{e_i}^{\mathrm{c}},\phi_{e_i}^{\mathrm{t}})&=\alpha_{\mathrm{t}\mathrm{c}}^{(i)}\phi_{e_i}^{\mathrm{c}}+\bar{\alpha}_{\mathrm{t}\mathrm{t}}^{(i)}\phi_{e_i}^{\mathrm{t}}+\beta_{\mathrm{t}}^{(i)},
\end{split}
\end{equation*}
where the definitions and the physical intuition of the parameters $\alpha_{\mathrm{c}\mathrm{c}}^{(i)},\alpha_{\mathrm{t}\mathrm{c}}^{(i)},\bar{\alpha}_{\mathrm{t}\mathrm{t}}^{(i)},\bar{\alpha}_{\mathrm{c}\mathrm{t}}^{(i)},\beta_{\mathrm{c}}^{(i)},\beta_{\mathrm{t}}^{(i)}$ can be found in Subsection~\ref{subsec:Platoon}. Recalling that $\bar{\alpha}_{\mathrm{t}\mathrm{c}}^{(i)}\neq\bar{\alpha}_{\mathrm{c}\mathrm{t}}^{(i)}$ (see Subsection~\ref{subsec:Platoon:potential}), the condition of Corollary~\ref{cor:optimization} is not satisfied. Therefore, we use the tax $\tilde{\tau}_{e_i}(\phi_{e_i}^{\mathrm{c}},\phi_{e_i}^{\mathrm{t}})=(2c_0\alpha b_ea_e)\phi_{e_i}^{\mathrm{t}}$ which is developed in Subsection~\ref{subsec:Platoon:tax}. This results in
\begin{equation*}
\begin{split}
\tilde{\ell}_{e_i}^{\mathrm{c}}(\phi_{e_i}^{\mathrm{c}},\phi_{e_i}^{\mathrm{t}})+\tilde{\tau}_{e_i}(\phi_{e_i}^{\mathrm{c}},\phi_{e_i}^{\mathrm{t}})&=\alpha_{\mathrm{c}\mathrm{c}}^{(i)}\phi_{e_i}^{\mathrm{c}}+\alpha_{\mathrm{c}\mathrm{t}}^{(i)}\phi_{e_i}^{\mathrm{t}}+\beta_{\mathrm{c}}^{(i)},\\
\tilde{\ell}_{e_i}^{\mathrm{t}}(\phi_{e_i}^{\mathrm{c}},\phi_{e_i}^{\mathrm{t}})+\tilde{\tau}_{e_i}(\phi_{e_i}^{\mathrm{c}},\phi_{e_i}^{\mathrm{t}})&=\alpha_{\mathrm{t}\mathrm{c}}^{(i)}\phi_{e_i}^{\mathrm{c}}+\alpha_{\mathrm{t}\mathrm{t}}^{(i)}\phi_{e_i}^{\mathrm{t}}+\beta_{\mathrm{t}}^{(i)},
\end{split}
\end{equation*}
where $\alpha_{\mathrm{c}\mathrm{t}}^{(i)}=\bar{\alpha}_{\mathrm{c}\mathrm{t}}^{(i)}+2c_0\alpha b_ea_e$ and $\alpha_{\mathrm{t}\mathrm{t}}^{(i)}=\bar{\alpha}_{\mathrm{t}\mathrm{t}}^{(i)}+2c_0\alpha b_ea_e$.
In this case, we can calculate the potential function as
\begin{equation*}
\begin{split}
V&\hspace{-.03in}=\hspace{-.04in}\sum_{i=0}^{11} \hspace{-.04in}\bigg[\frac{1}{2}\alpha_{\mathrm{c}\mathrm{c}}^{(i)}(\phi_{e_i}^{\mathrm{c}})^2\hspace{-.03in}+\hspace{-.03in}(\alpha_{\mathrm{c}\mathrm{t}}^{(i)}\phi_{e_i}^{\mathrm{t}}\hspace{-.03in}+\hspace{-.03in}\beta_{\mathrm{c}}^{(i)})\phi_{e_i}^{\mathrm{c}}-\alpha_{\mathrm{c}\mathrm{t}}^{(i)}\phi_{e_i}^{\mathrm{c}}\phi_{e_i}^{\mathrm{t}}+\frac{1}{2}\alpha_{\mathrm{t}\mathrm{t}}^{(i)}(\phi_{e_i}^{\mathrm{t}})^2+(\alpha_{\mathrm{t}\mathrm{c}}^{(i)}\phi_{e_i}^{\mathrm{c}}+\beta_{\mathrm{t}}^{(i)})\phi_{e_i}^{\mathrm{t}}  \bigg].
\end{split}
\end{equation*}
Noticing that solving a non-convex quadratic programming problem might be numerically intractable in general, we focus on the case in which the potential function is a convex function. Following the argument of~\cite[p.\,71]{boyd2004convex}, we know that the potential function is a convex function if and only if
$$
\matrix{cc}{\alpha_{\mathrm{c}\mathrm{c}}^{(i)} & \frac{1}{2}\alpha_{\mathrm{c}\mathrm{t}}^{(i)} \\ \frac{1}{2}\alpha_{\mathrm{t}\mathrm{c}}^{(i)} & \alpha_{\mathrm{t}\mathrm{t}}^{(i)}}\geq 0, \;\; \forall i=\{0,\dots,11\}.
$$
Let us pick the parameters for the routing game according to Table~\ref{table:1}. Furthermore, we choose $(F_1^{\mathrm{a}},F_1^{\mathrm{b}})=(5,1)$, $(F_2^{\mathrm{a}},F_2^{\mathrm{b}})=(3,3)$, and $(F_3^{\mathrm{a}},F_3^{\mathrm{b}})=(2,4)$. After solving the optimization problem in Corollary~\ref{cor:optimization}, we can extract the path flows and path cost functions shown in Table~\ref{table:2} which demonstrate a Nash equilibrium (see Definition~\ref{def:Nashroutingtax})\footnote{See~\texttt{http://dl.dropbox.com/u/36867745/HeterogeneousRoutingGame.zip} for the Python code to simulate this numerical example. }. 
\begin{figure}
\centering
\begin{tikzpicture}[scale=1.2,>=stealth']
\tikzset{mystyle1/.style={circle,draw,line width=1pt,minimum size=0.8cm,scale=0.7}}
\tikzset{mystyle2/.style={circle,draw,line width=1pt,minimum size=0.8cm,scale=0.7}}
\tikzset{mystyle3/.style={circle,draw,line width=1pt,minimum size=0.8cm,scale=0.7}}
\node (v2) at (+0.0,+0.0) [mystyle1] {$2$};
\node (v3) at (+3.0,+0.0) [mystyle2] {$3$};
\node (v0) at (+0.0,+2.0) [mystyle1] {$0$};
\node (v1) at (+3.0,+2.0) [mystyle2] {$1$};
\node (v4) at (+1.0,+1.0) [mystyle3] {$4$};
\node (v5) at (+2.0,+1.3) [mystyle3] {$5$};
\node (v6) at (+2.0,+0.7) [mystyle3] {$6$};
\node (v7) at (-1.2,+0.0) [mystyle1] {$7$};
\node (v8) at (+4.2,+0.0) [mystyle2] {$8$};
\draw[->,line width=1pt] (v2) to node[below]{\scriptsize $e_{10}$} (v3);
\draw[->,line width=1pt] (v0) to node[above]{\scriptsize $e_1$   } (v1);
\draw[->,line width=1pt] (v2) to node[below]{\scriptsize $e_9$   } (v4);
\draw[->,line width=1pt] (v0) to node[above]{\scriptsize $e_2$   } (v4);
\draw[->,line width=1pt] (v4) to node[above]{\scriptsize $e_4$   } (v5);
\draw[->,line width=1pt] (v4) to node[below]{\scriptsize $e_7$   } (v6);
\draw[->,line width=1pt] (v5) to node[above]{\scriptsize $e_6$   } (v3);
\draw[->,line width=1pt] (v5) to node[above]{\scriptsize $e_3$   } (v1);
\draw[->,line width=1pt] (v6) to node[below]{\scriptsize $e_8$   } (v3);
\draw[->,line width=1pt] (v6) to node[below]{\scriptsize $e_5$   } (v1);
\draw[->,line width=1pt] (v7) to node[below]{\scriptsize $e_{11}$} (v2);
\draw[->,line width=1pt] (v3) to node[below]{\scriptsize $e_0$   } (v8);
\end{tikzpicture}
\caption{\label{figure:1} Transportation network in the numerical example.}
\end{figure}
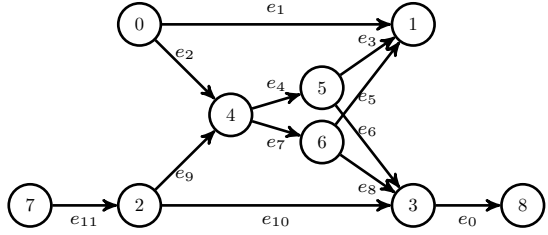
In addition, we can calculate
\begin{align*}
\frac{C(f)}{C(f^*)}&=1.0137
\leq 2 = \mathrm{Upper\;Bound\;of\;the\;PoA},
\end{align*}
where $f^*$ denotes the socially optimal flow. This shows that the social cost of the recovered Nash equilibrium is only $1.0137$ times the cost of the socially optimal solutions.

\begin{table}
\begin{center}
\caption{\label{table:2} The path flow and path cost function at a Nash equilibrium extracted by minimizing the potential function. }
\footnotesize
$$
\begin{array}{cc}
\begin{tabular}{|c|c|c|} \hline
& $f^{\mathrm{c}}_p$ & $f^{\mathrm{t}}_p$ \\ \hline
\multirow{3}{*}{$p\in\P_{1}$} & 4.97 & 0.79 \\ \cline{2-3}
& 0.00 & 0.00 \\ \cline{2-3}
& 0.03 & 0.21 \\ \hline
\multirow{3}{*}{$p\in\P_{2}$} & 1.04 & 0.06 \\ \cline{2-3}
& 0.04 & 0.00 \\ \cline{2-3}
& 1.92 & 2.94 \\ \hline
\multirow{3}{*}{$p\in\P_{3}$} & 0.01 & 0.00 \\ \cline{2-3}
& 1.10 & 0.00 \\ \cline{2-3}
& 0.88 & 4.00 \\ \hline
\end{tabular}
&
\hspace{.2in}
\begin{tabular}{|c|c|c|} \hline
& $\ell_p^{\mathrm{c}}(f)$ & $\ell^{\mathrm{t}}_p(f)$ \\ \hline
\multirow{3}{*}{$p\in\P_{1}$} & 12.26 & 8.35 \\ \cline{2-3}
& 13.18 & 10.29 \\ \cline{2-3}
& 12.26 & 8.35 \\ \hline
\multirow{3}{*}{$p\in\P_{2}$} & 13.92 & 11.22 \\ \cline{2-3}
& 13.92 & 13.98 \\ \cline{2-3}
& 13.92 & 11.22 \\ \hline
\multirow{3}{*}{$p\in\P_{3}$} & 28.02 & 31.73 \\ \cline{2-3}
& 28.02 & 34.48 \\ \cline{2-3}
& 28.02 & 31.72 \\ \hline
\end{tabular}
\end{array}
$$
\end{center}
\end{table}

\section{Conclusions} \label{sec:conclusions}
In this article, we proposed a heterogeneous routing game in which the players may belong to more than one type. The type of each player determines the cost of using an edge as a function of the flow of all types over that edge. We proved that this heterogeneous routing game admits at least one Nash equilibrium. Additionally, we gave a necessary and sufficient condition for the existence of a potential function, which indeed implies that we can transform the problem of finding a Nash equilibrium into an optimization problem. Finally, we developed tolls to guarantee the existence of a potential function. Possible future research will focus on generalizing these results to higher number of types or a continuum of player types. 

\bibliographystyle{ieeetr}
\bibliography{ref}

\begin{thebibliography}{10}

\bibitem{FarokhiAllerton13}
F.~Farokhi, W.~Krichene, A.~M. Bayen, and K.~H. Johansson, ``A heterogeneous
  routing game,'' in {\em Proceedings of the Annual Allerton Conference on
  Communication, Control, and Computing}, 2013.

\bibitem{Fisk1984301}
C.~S. Fisk, ``Game theory and transportation systems modelling,'' {\em
  Transportation Research Part B: Methodological}, vol.~18, no.~4–-5,
  pp.~301--313, 1984.

\bibitem{wardrop1952road}
J.~G. Wardrop, ``Some theoretical aspects of road traffic research,'' in {\em
  Proceedings of the Institute of Civil Engineers: Engineering Divisions},
  no.~3, pp.~325--362, 1952.

\bibitem{Smith1979295}
M.~J. Smith, ``The existence, uniqueness and stability of traffic equilibria,''
  {\em Transportation Research Part B: Methodological}, vol.~13, no.~4,
  pp.~295--304, 1979.

\bibitem{4278417}
R.~Banner and A.~Orda, ``Bottleneck routing games in communication networks,''
  {\em IEEE Journal on Selected Areas in Communications}, vol.~25, no.~6,
  pp.~1173--1179, 2007.

\bibitem{Altman2006286}
E.~Altman, T.~Boulogne, R.~El-Azouzi, T.~Jiménez, and L.~Wynter, ``A survey on
  networking games in telecommunications,'' {\em Computers \& Operations
  Research}, vol.~33, no.~2, pp.~286--311, 2006.

\bibitem{Czumaj04selfishrouting}
A.~Czumaj, ``Selfish routing on the internet,'' in {\em Handbook of Scheduling:
  Algorithms, Models, and Performance Analysis} (J.~Y.-T. Leung, ed.), Chapman
  \& Hall/CRC Computer \& Information Science Series, CRC Press, 2004.

\bibitem{netter1971equilibrium}
M.~Netter, ``Equilibrium and marginal cost pricing on a road network with
  several traffic flow types,'' in {\em Proceedings of the 5th International
  Symposium on the Theory of Traffic Flow and Transportation}, pp.~155--163,
  1971.

\bibitem{dafermos1972traffic}
S.~C. Dafermos, ``The traffic assignment problem for multiclass-user
  transportation networks,'' {\em Transportation Science}, vol.~6, no.~1,
  pp.~73--87, 1972.

\bibitem{farokhiITSC2013}
F.~Farokhi and K.~H. Johansson, ``A game-theoretic framework for studying truck
  platooning incentives,'' in {\em Proceedings of the 16th International IEEE
  Annual Conference on Intelligent Transportation Systems}, pp.~1253--1260,
  2013.

\bibitem{alam2010experimental}
A.~A. Alam, A.~Gattami, and K.~H. Johansson, ``An experimental study on the
  fuel reduction potential of heavy duty vehicle platooning,'' in {\em
  Proceedings of the 13th International IEEE Conference on Intelligent
  Transportation Systems}, pp.~306--311, 2010.

\bibitem{Stern2005}
E.~Stern and H.~W. Richardson, ``Behavioural modelling of road users: current
  research and future needs,'' {\em Transport Reviews}, vol.~25, no.~2,
  pp.~159--180, 2005.

\bibitem{Stern199975}
E.~Stern, ``Reactions to congestion under time pressure,'' {\em Transportation
  Research Part C: Emerging Technologies}, vol.~7, no.~2–-3, pp.~75-- 90,
  1999.

\bibitem{Inregiareport}
Inregia, ``Case study: \"{O}sterleden. {A} basis for planning of transport
  systems in cities. {Stockholm},'' 2001.
\newblock In Swedish.

\bibitem{roughgarden2002bad}
T.~Roughgarden and {\'E}.~Tardos, ``How bad is selfish routing?,'' {\em Journal
  of the ACM}, vol.~49, no.~2, pp.~236--259, 2002.

\bibitem{6426543}
W.~Krichene, J.~Reilly, S.~Amin, and A.~Bayen, ``On the characterization and
  computation of {Nash} equilibria on parallel networks with horizontal
  queues,'' in {\em Proceedings of the IEEE 51st Annual Conference on Decision
  and Control}, pp.~7119--7125, 2012.

\bibitem{beckmann1956studies}
M.~Beckmann, C.~McGuire, and C.~Winsten, {\em Studies in the Economics of
  Transportation}.
\newblock Yale University Press, 1956.

\bibitem{roughgarden2007routing}
T.~Roughgarden, ``Routing games,'' in {\em Algorithmic game theory} (N.~Nisan,
  T.~Roughgarden, {\'E}.~Tardos, and V.~V. Vazirani, eds.), Cambridge
  University Press, 2007.

\bibitem{roughgarden2003price}
T.~Roughgarden, ``The price of anarchy is independent of the network
  topology,'' {\em Journal of Computer and System Sciences}, vol.~67, no.~2,
  pp.~341--364, 2003.

\bibitem{braess2005paradox}
D.~Braess, A.~Nagurney, and T.~Wakolbinger, ``On a paradox of traffic
  planning,'' {\em Transportation science}, vol.~39, no.~4, pp.~446--450, 2005.

\bibitem{awerbuch2005price}
B.~Awerbuch, Y.~Azar, and A.~Epstein, ``The price of routing unsplittable
  flow,'' in {\em Proceedings of the thirty-seventh annual ACM symposium on
  Theory of computing}, pp.~57--66, 2005.

\bibitem{6426526}
W.~Krichene, J.~Reilly, S.~Amin, and A.~Bayen, ``On {Stackelberg} routing on
  parallel networks with horizontal queues,'' in {\em Proceedings of the 51st
  Annual Conference on Decision and Control}, pp.~7126--7132, 2012.

\bibitem{monderer1996potential}
D.~Monderer and L.~S. Shapley, ``Potential games,'' {\em Games and economic
  behavior}, vol.~14, no.~1, pp.~124--143, 1996.

\bibitem{Rosenthal1973}
R.~W. Rosenthal, ``A class of games possessing pure-strategy nash equilibria,''
  {\em International Journal of Game Theory}, vol.~2, no.~1, pp.~65--67, 1973.

\bibitem{EngelsonLindbergAppliedOptimization2012}
L.~Engelson and P.~Lindberg, ``Congestion pricing of road networks with users
  having different time values,'' in {\em Mathematical and Computational Models
  for Congestion Charging} (S.~Lawphongpanich, D.~W. Hearn, and M.~J. Smith,
  eds.), vol.~101 of {\em Applied Optimization}, pp.~81--104, Springer, 2006.

\bibitem{yang2004multi}
H.~Yang and H.-J. Huang, ``The multi-class, multi-criteria traffic network
  equilibrium and systems optimum problem,'' {\em Transportation Research Part
  B: Methodological}, vol.~38, no.~1, pp.~1--15, 2004.

\bibitem{deb2009testable}
R.~Deb, ``A testable model of consumption with externalities,'' {\em Journal of
  Economic Theory}, vol.~144, no.~4, pp.~1804--1816, 2009.

\bibitem{braess1979existence}
D.~Braess and G.~Koch, ``On the existence of equilibria in asymmetrical
  multiclass-user transportation networks,'' {\em Transportation Science},
  vol.~13, no.~1, pp.~56--63, 1979.

\bibitem{leurent1993cost}
F.~Leurent, ``Cost versus time equilibrium over a network,'' {\em European
  Journal of Operational Research}, vol.~71, no.~2, pp.~205--221, 1993.

\bibitem{Marcotte97equilibriawith}
P.~Marcotte and D.~L. Zhu, ``Equilibria with infinitely many differentiated
  classes of customers,'' in {\em Complementarity and Variational Problems:
  State of the Art} (M.~C. Ferris and J.~S. Pang, eds.), pp.~234--258,
  Philadelphia, PA: SIAM, 2000.

\bibitem{dafermos1973toll}
S.~C. Dafermos, ``Toll patterns for multiclass-user transportation networks,''
  {\em Transportation science}, vol.~7, no.~3, pp.~211--223, 1973.

\bibitem{cole2003pricing}
R.~Cole, Y.~Dodis, and T.~Roughgarden, ``Pricing network edges for
  heterogeneous selfish users,'' in {\em Proceedings of the 35th Annual ACM
  Symposium on Theory of Computing}, pp.~521--530, 2003.

\bibitem{karakostas2004edge}
G.~Karakostas and S.~G. Kolliopoulos, ``Edge pricing of multicommodity networks
  for heterogeneous selfish users,'' in {\em Proceedings of the 45th Annual
  IEEE Symposium on Foundations of Computer Science}, pp.~268--276, 2004.

\bibitem{fleischer2004tolls}
L.~Fleischer, K.~Jain, and M.~Mahdian, ``Tolls for heterogeneous selfish users
  in multicommodity networks and generalized congestion games,'' in {\em
  Proceedings of the 45th Annual IEEE Symposium on Foundations of Computer
  Science}, pp.~277--285, 2004.

\bibitem{Kontogiannis2010}
D.~Fotakis, G.~Karakostas, and S.~Kolliopoulos, ``On the existence of optimal
  taxes for network congestion games with heterogeneous users,'' in {\em
  Algorithmic Game Theory} (S.~Kontogiannis, E.~Koutsoupias, and P.~G.
  Spirakis, eds.), vol.~6386 of {\em Lecture Notes in Computer Science},
  pp.~162--173, Springer Berlin Heidelberg, 2010.

\bibitem{Marcotte2009211}
P.~Marcotte and D.~L. Zhu, ``Existence and computation of optimal tolls in
  multiclass network equilibrium problems,'' {\em Operations Research Letters},
  vol.~37, no.~3, pp.~211--214, 2009.

\bibitem{arrow1954existence}
K.~J. Arrow and G.~Debreu, ``Existence of an equilibrium for a competitive
  economy,'' {\em Econometrica: Journal of the Econometric Society},
  pp.~265--290, 1954.

\bibitem{Debreu1952}
G.~Debreu, ``A social equilibrium existence theorem,'' {\em Proceedings of the
  National Academy of Sciences of the United States of America}, vol.~38,
  no.~10, pp.~886--893, 1952.

\bibitem{Fabrikant:2004:CPN:1007352.1007445}
A.~Fabrikant, C.~Papadimitriou, and K.~Talwar, ``The complexity of pure {Nash}
  equilibria,'' in {\em Proceedings of the 36th Annual ACM Symposium on Theory
  of Computing}, pp.~604--612, 2004.

\bibitem{Papadimitriou2007agt}
C.~H. Papadimitriou, ``The complexity of finding {Nash} equilibria,'' in {\em
  Algorithmic game theory} (N.~Nisan, T.~Roughgarden, {\'E}.~Tardos, and V.~V.
  Vazirani, eds.), Cambridge University Press, 2007.

\bibitem{roughgarden2010computing}
T.~Roughgarden, ``Computing equilibria: a computational complexity
  perspective,'' {\em Economic Theory}, vol.~42, no.~1, pp.~193--236, 2010.

\bibitem{guo2009user}
X.~Guo and H.~Yang, ``User heterogeneity and bi-criteria system optimum,'' {\em
  Transportation Research Part B: Methodological}, vol.~43, no.~4,
  pp.~379--390, 2009.

\bibitem{2006general}
J.~C. Moore, {\em General Equilibrium and Welfare Economics: An Introduction}.
\newblock Springer, 2006.

\bibitem{koutsoupias1999worst}
E.~Koutsoupias and C.~Papadimitriou, ``Worst-case equilibria,'' in {\em
  Proceedings of the 16th Annual Symposium on Theoretical Aspects of Computer
  Science}, pp.~404--413, 1999.

\bibitem{papadimitriou2001algorithms}
C.~Papadimitriou, ``Algorithms, games, and the internet,'' in {\em Proceedings
  of the 33rd Annual ACM Symposium on Theory of Computing}, pp.~749--753, 2001.

\bibitem{Basar2011}
T.~Basar and Q.~Zhu, ``Prices of anarchy, information, and cooperation in
  differential games,'' {\em Dynamic Games and Applications}, vol.~1, no.~1,
  pp.~50--73, 2011.

\bibitem{johari2004efficiency}
R.~Johari and J.~N. Tsitsiklis, ``Efficiency loss in a network resource
  allocation game,'' {\em Mathematics of Operations Research}, vol.~29, no.~3,
  pp.~407--435, 2004.

\bibitem{andelman2009strong}
N.~Andelman, M.~Feldman, and Y.~Mansour, ``Strong price of anarchy,'' {\em
  Games and Economic Behavior}, vol.~65, no.~2, pp.~289--317, 2009.

\bibitem{haurie1985relationship}
A.~Haurie and P.~Marcotte, ``On the relationship between {Nash--Cournot} and
  {Wardrop} equilibria,'' {\em Networks}, vol.~15, no.~3, pp.~295--308, 1985.

\bibitem{irwin1961capacity}
N.~A. Irwin, N.~Dodd, and H.~G. {Von Cube}, ``Capacity restraint in assignment
  programs,'' {\em Highway Research Board Bulletin}, vol.~279, pp.~109--127,
  1961.

\bibitem{irwin1962capacity}
N.~A. Irwin and H.~G. {Von Cube}, ``Capacity restraint in multi-travel mode
  assignment programs,'' {\em Highway Research Board Bulletin}, vol.~347,
  pp.~258--289, 1962.

\bibitem{Branston1976223}
D.~Branston, ``Link capacity functions: A review,'' {\em Transportation
  Research}, vol.~10, no.~4, pp.~223--236, 1976.

\bibitem{Kuo-Yun2013}
K.-Y. Liang, J.~M{\aa}rtensson, and K.~H. Johansson, ``When is it fuel
  efficient for a heavy duty vehicle to catch up a platoon?,'' in {\em
  Proceedings of the 7th IFAC International Symposium on Advances in Automotive
  Control}, 2013.

\bibitem{gibbons1992game}
R.~Gibbons, {\em Game Theory for Applied Economists}.
\newblock Princeton University Press, 1992.

\bibitem{boyd2004convex}
S.~P. Boyd and L.~Vandenberghe, {\em Convex Optimization}.
\newblock Cambridge University Press, 2004.

\bibitem{tan2009multivariable}
S.~T. Tan, {\em Multivariable Calculus}.
\newblock Brooks Cole Publishing Company, 2009.

\bibitem{anshelevich2003near}
E.~Anshelevich, A.~Dasgupta, E.~Tardos, and T.~Wexler, ``Near-optimal network
  design with selfish agents,'' in {\em Proceedings of the 45th Annual ACM
  Symposium on Theory of Computing}, pp.~511--520, 2003.

\bibitem{polianin2002handbook}
A.~D. Polianin, V.~F. Zaitsev, and A.~Moussiaux, {\em Handbook of First Order
  Partial Differential Equations}.
\newblock Differential and Integral Equations and Their Applications, Taylor \&
  Francis, 2002.

\end{thebibliography}

\appendix

\section{Proof of Corollary~\ref{cor:convexprogramming} } \label{appendix:A}
Let us define the Lagrangian as
\begin{equation*}
\begin{split}
L((f_{p'}^{\theta_1})_{p'\in\P},(f_{p'}^{\theta_2})_{p'\in\P})=&V((f_{p'}^{\theta_1})_{p'\in\P},(f_{p'}^{\theta_2})_{p'\in\P})+\sum_{i=1}^2\sum_{e\in E} v_e^i \left(\sum_{p\in\P:e\in p}f_p^{\theta_i}-\phi_e^{\theta_i}\right)\\ &-\sum_{i=1}^2\sum_{k=1}^K w_k^i \left(\sum_{p\in\P_k} f_p^{\theta_i}-F_k^{\theta_i} \right)-\sum_{i=1}^2\sum_{p\in\P} \lambda_p^i f_p^i,
\end{split}
\end{equation*}
where $(v_e^1)_{e\in\E}\in\mathbb{R}^{|\mathcal{E}|}$, $(v_e^2)_{e\in\E}\in\mathbb{R}^{|\mathcal{E}|}$, $(w_k^1)_{k\in\K}\in\mathbb{R}^{K}$, $(w_k^2)_{k\in\K}\in\mathbb{R}^{K}$, $(\lambda_p^1)_{p\in\P}\in\mathbb{R}_{\geq 0}^{|\P|}$, and $(\lambda_p^2)_{p\in\P}\in\mathbb{R}_{\geq 0}^{|\P|}$ are Lagrange multipliers. Using Karush--Kuhn--Tucker conditions~\cite[p.\,244]{boyd2004convex}, optimality conditions are
\begin{subequations} \label{eqn:partial:phi}
\begin{equation} \label{eqn:partial:phi:1}
\begin{split}
\frac{\partial L}{\partial \phi_e^{\theta_1}} &=\tilde{\ell}_e^{\theta_1}(\phi_e^{\theta_1},\phi_e^{\theta_2})+\int_0^{\phi_e^{\theta_2}} \hspace{-.07in} \frac{\partial \tilde{\ell}_e^{\theta_2}(\phi_e^{\theta_1},u_2)}{\partial \phi_e^{\theta_1}}  \d u_2 -\hspace{-.04in} \int_0^{\phi_e^{\theta_2}} \hspace{-.07in} \frac{\partial}{\partial u } \tilde{\ell}_e^{\theta_1} (\phi_e^{\theta_1},u) \d u - v_e^1\\&=\tilde{\ell}_e^{\theta_1}(\phi_e^{\theta_1},\phi_e^{\theta_2})- v_e^1+\hspace{-.04in}\int_0^{\phi_e^{\theta_2}} \hspace{-.09in} \bigg(\hspace{-.02in}\frac{\partial \tilde{\ell}_e^{\theta_2}(\phi_e^{\theta_1},u)}{\partial \phi_e^{\theta_1}}  \hspace{-0.04in}-\hspace{-0.04in} \frac{\partial \tilde{\ell}_e^{\theta_1} (\phi_e^{\theta_1},u)}{\partial u }  \hspace{-.03in}\bigg) \d u 
\\&=\tilde{\ell}^{\theta_1}_e(\phi_e^{\theta_1},\phi_e^{\theta_2})-v_e^1=0, \;\; \forall e\in\E,
\end{split}
\end{equation}
\begin{equation} \label{eqn:partial:phi:2}
\begin{split}
\frac{\partial L}{\partial \phi_e^{\theta_2}} &=\int_0^{\phi_e^{\theta_1}} \hspace{-.07in} \frac{\partial \tilde{\ell}_e^{\theta_1}(u_1,\phi_e^{\theta_2})}{\partial \phi_e^{\theta_2}}  \d u_1 + \tilde{\ell}_e^{\theta_2}(\phi_e^{\theta_1},\phi_e^{\theta_2}) - \int_0^{\phi_e^{\theta_1}} \hspace{-.07in} \frac{\partial}{\partial \phi_e^{\theta_2}} \tilde{\ell}_e^{\theta_1} (t,\phi_e^{\theta_2}) \d t - v_e^2
\\&=\tilde{\ell}_e^{\theta_2}(\phi_e^{\theta_1},\phi_e^{\theta_2})-v_e^2=0, \;\; \forall e\in\E,
\end{split}
\end{equation}
\end{subequations}
and
\begin{subequations} \label{eqn:partial:f}
\begin{equation} \label{eqn:partial:f:1}
\begin{split}
&\frac{\partial }{\partial f_p^{\theta_1}} L=\sum_{e\in p} v_e^1-w_k^1-\lambda_p^1=0, \;\; \forall p\in\P,
\vspace{-.2in}
\end{split}
\end{equation}
\begin{equation} \label{eqn:partial:f:2}
\begin{split}
&\frac{\partial }{\partial f_p^{\theta_2}} L=\sum_{e\in p} v_e^2-w_k^2-\lambda_p^2=0, \;\; \forall p\in\P.
\end{split}
\end{equation}
\end{subequations}
In addition, the complimentary slackness conditions for inequality constraints result in $\lambda_p^1f_p^1=0$ and $\lambda_p^2f_p^2=0$ for all $p\in\P$. Hence, for all $k$ and $p\in\P_k$, we have
\begin{align}
\ell_p^{\theta_i}(f)&=\sum_{e\in p} \tilde{\ell}^{\theta_i}_e(\phi_e^{\theta_1},\phi_e^{\theta_2}) & \nonumber\\[-.5em]
&=\sum_{e\in p} v_e^i && \mbox{by~\eqref{eqn:partial:phi}}\;\; \nonumber \\[-.5em] &= w_k^i+\lambda_p^i. && \mbox{by~\eqref{eqn:partial:f}} \nonumber
\end{align}
Thus, if $f_p^{\theta_i},f_{p'}^{\theta_i}>0$, using complimentary slackness, we  get $\lambda_p^{\theta_i}=0$ and $\lambda_{p'}^{\theta_i}=0$, which results in
$$
\ell_p^{\theta_i}(f)=\ell_{p'}^{\theta_i}(f)=w_k^i.
$$
Additionally, for all $p''\in\P_k$, where $f_{p''}^{\theta_i}=0$, we have $\lambda_p^{\theta_i}\geq 0$ (because of dual feasibility), which results in
$$
\ell_{p''}^{\theta_i}(f)=w_k^i+\lambda_{p''}^{\theta_i}\geq w_k^i=\ell_p^{\theta_i}(f).
$$
This is the definition of a Nash equilibrium.

\section{Proof of Proposition~\ref{prop:toll1} } \label{appendix:B}
Note that introducing the tolls $\tilde{\tau}_e^\theta(\phi_e^{\theta_1},\phi_e^{\theta_2})$ has the same impact on the routing game as replacing the edge cost functions in the original heterogeneous routing game from $\tilde{\ell}_e^\theta(\phi_e^{\theta_1},\phi_e^{\theta_2})$ to $\tilde{\ell}_e^\theta(\phi_e^{\theta_1},\phi_e^{\theta_2})+\tilde{\tau}_e^\theta(\phi_e^{\theta_1},\phi_e^{\theta_2})$. Thanks to Lemma~\ref{lemma:sufficiency}, the abstract game based upon this new heterogeneous routing game admits the potential function $V$ if 
\begin{equation*}
\begin{split}
&\frac{\partial (\tilde{\ell}_e^{\theta_1}(\phi_e^{\theta_1},\phi_e^{\theta_2})+\tilde{\tau}_e^{\theta_1}(\phi_e^{\theta_1},\phi_e^{\theta_2}))}{\partial \phi_e^{\theta_2}}-\frac{\partial (\tilde{\ell}_e^{\theta_2}(\phi_e^{\theta_1},\phi_e^{\theta_2})+\tilde{\tau}_e^{\theta_2}(\phi_e^{\theta_1},\phi_e^{\theta_2}))}{\partial \phi_e^{\theta_1}}=0.
\end{split}
\end{equation*}
With rearranging the terms in this equality, we can extract the condition in the statement of the proposition.

\section{Proof of Corollary~\ref{cor:toll3} } \label{appendix:C}
The proof can be seen as a direct application of the result of~\cite[Ch.\,4]{polianin2002handbook} to Proposition~\ref{prop:toll2}. However, let us show this fact following simple algebraic manipulations. Notice that
\begin{equation*}
\begin{split}
\frac{\partial \tilde{\tau}_e(\phi_e^{\theta_1},\phi_e^{\theta_2})}{\partial \phi_e^{\theta_1}}=&\frac{\partial}{\partial \phi_e^{\theta_1}}\bigg[c_e+\psi_e(\phi_e^{\theta_1}+\phi_e^{\theta_2})+\int_0^{\phi_e^{\theta_2}}\hspace{-.13in} f_e(q,\phi_e^{\theta_1}+\phi_e^{\theta_2}-q) \d q\bigg]\\=&
\frac{\d \psi_e(u)}{\d u}\big|_{u=\phi_e^{\theta_1}+\phi_e^{\theta_2}}+\int_0^{\phi_e^{\theta_2}}\hspace{-.05in} \frac{\partial f_e(q,u)}{\partial u}\big|_{u=\phi_e^{\theta_1}+\phi_e^{\theta_2}-q} \d q
\end{split}
\end{equation*}
and
\begin{equation*}
\begin{split}
\frac{\partial \tilde{\tau}_e(\phi_e^{\theta_1},\phi_e^{\theta_2})}{\partial \phi_e^{\theta_2}}=&\frac{\partial}{\partial \phi_e^{\theta_2}}\bigg[c_e+\psi_e(\phi_e^{\theta_1}+\phi_e^{\theta_2})+\int_0^{\phi_e^{\theta_2}}\hspace{-.13in} f_e(q,\phi_e^{\theta_1}+\phi_e^{\theta_2}-q) \d q\bigg]\\=&
\frac{\d \psi_e(u)}{\d u}\big|_{u=\phi_e^{\theta_1}+\phi_e^{\theta_2}}+
f_e(\phi_e^{\theta_2},\phi_e^{\theta_1})+\int_0^{\phi_e^{\theta_2}}\hspace{-.05in} \frac{\partial f_e(q,u)}{\partial u}\big|_{u=\phi_e^{\theta_1}+\phi_e^{\theta_2}-q} \d q.
\end{split}
\end{equation*}
Therefor, we get
\begin{equation*}
\begin{split}
\frac{\partial \tilde{\tau}_e(\phi_e^{\theta_1},\phi_e^{\theta_2})}{\partial \phi_e^{\theta_2}}-\frac{\partial \tilde{\tau}_e(\phi_e^{\theta_1},\phi_e^{\theta_2})}{\partial \phi_e^{\theta_1}}&=f_e(\phi_e^{\theta_2},\phi_e^{\theta_1})\\&=\frac{\partial \tilde{\ell}_e^{\theta_2}(\phi_e^{\theta_1},\phi_e^{\theta_2})}{\partial \phi_e^{\theta_1}}-\frac{\partial \tilde{\ell}_e^{\theta_1}(\phi_e^{\theta_1},\phi_e^{\theta_2})}{\partial \phi_e^{\theta_2}},
\end{split}
\end{equation*}
where the second equality directly follows from the definition of the mapping $f_e:\mathbb{R}_{\geq 0}\times\mathbb{R}_{\geq 0}\rightarrow\mathbb{R}$ in the statement of the corollary. Now, we can use Proposition~\ref{prop:toll2} to show that a potential function indeed exists.

\end{document}